\numberwithin{equation}{section} 
\newtheorem{theorem}{Theorem}[section]
\newtheorem{proposition}[theorem]{Proposition}
\newtheorem{remark}[theorem]{Remark}
\newtheorem{lemma}[theorem]{Lemma}
\newtheorem{definition}[theorem]{Definition}
\newtheorem{example}[theorem]{Example}
\newcommand\1{{\mathds 1}}
\def\C{{\mathbb C}}
\def\CC{{\mathbb C}}
\def\bbI{{\mathbb I}}
\def\R{{\mathbb R}}
\def\RR{{\mathbb R}}
\def\Z{{\mathbb Z}}
\def\rd{{\mathrm{d}}}
\def\re{{\mathrm{e}}}
\def\ri{{\mathrm{i}}}
\def\cA{{\mathcal A}}
\def\cF{{\mathcal F}}
\def\cG{{\mathcal G}}
\def\cH{{\mathcal H}}
\def\cK{{\mathcal K}}
\def\cL{{\mathcal L}}
\def\cM{{\mathcal M}}
\def\cS{{\mathcal S}}
\def\cU{{\mathcal U}}
\def\rA{{\rm A}}
\def\rAIII{{\rm AIII}}
\def\rAI{{\rm AI}}
\def\rBDI{{\rm BDI}}
\def\rD{{\rm D}}
\def\rDIII{{\rm DIII}}
\def\rAII{{\rm AII}}
\def\rCII{{\rm CII}}
\def\U{{\rm U}}
\def\O{{\rm O}}
\def\Sp{{\rm Sp}}
\def\Index{{\rm Index}}
\def\bra{\langle}
\def\ket{\rangle}
\def\Tr{{\rm Tr}}
\def\Ker{{\rm Ker }}
\def\Ran{{\rm Ran }}
\def\dim{{\rm dim }}
\def\det{{\rm det}}
\def\Pf{{\rm Pf}}
\def\Dirac{\slashed{D}} 
\newcommand{\myfootnote}[1]{
    \renewcommand{\thefootnote}{}
    \footnotetext{\scriptsize#1}
    \renewcommand{\thefootnote}{\arabic{footnote}}
}
\title{Topological junctions for one--dimensional systems}
\author{David Gontier \qquad Clément Tauber}
\date{\today}
\begin{document}

\myfootnote{David Gontier: CEREMADE, Université Paris-Dauphine, PSL University,75016 Paris, France \& ENS/PSL University, DMA, F-75005, Paris, France;\\
    email: \href{gontier@ceremade.dauphine.fr}{gontier@ceremade.dauphine.fr}}
\myfootnote{Clément Tauber: CEREMADE, Université Paris-Dauphine, PSL University,75016 Paris, France;\\
    email: \href{tauber@ceremade.dauphine.fr}{tauber@ceremade.dauphine.fr}}
%

\begin{abstract}    
    We study and classify the emergence of protected edge modes at the junction of one-dimensional materials. Using symmetries of Lagrangian planes in boundary symplectic spaces, we present a novel proof of the periodic table of topological insulators in one dimension. We show that edge modes necessarily arise at the junction of two materials having different topological indices. Our approach provides a systematic framework for understanding symmetry--protected modes in one--dimension. It does not rely on periodic nor ergodicity and covers a wide range of operators which includes both continuous and discrete models. 
    
    \bigskip
    \noindent \sl \copyright~2024 by the authors. This paper may be reproduced, in its entirety, for non-commercial purposes.
\end{abstract}


\maketitle    



\section{Introduction}

The classification of bulk gapped Hamiltonians based on their symmetries results in the renowned {\em tenfold way of topological insulators}. This framework identifies 10 fundamental symmetry classes that encompass three key physical symmetries: time-reversal (T), charge conjugation (C), and chiral (S) symmetry. Each symmetry class is associated with a unique topological index, which enjoys the following important property: for two insulators within the same symmetry class but with different indices, any symmetry-preserving transformation between them must involve a metallic phase where the energy gap closes. The classification result was first tackled by Altland and Zirnbauer in~\cite{AltZir-97} in the context of random matrices (see also~\cite{HeiHucZir-05}), where Cartan's classification of symmetric spaces from~\cite{Car-26, Car-27} naturally appears and provides a label for each symmetry class. The classification was then developed simultaneously by Kitaev in~\cite{Kit-09} and Ryu, Schnyder, Furusaki and Ludwig in~\cite{RyuSchFur-10}, where the full table of topological insulators appears. It was then extended to various systems using different approaches from $K$-theory to functional analysis and is still an active topic of research \cite{FreMor-13,ProSch-16,Thi-16,BouCarAla-16,GroSch-16,KatKom-18,AllMaxZir-20,AvrTur-22,GonMonPer-22,ChuSha-23}.

\medskip

Topological insulators have the property that protected edge modes appear when gapped bulk Hamiltonians are restricted to half-infinite samples. This is known as {\em bulk-edge correspondence}~\cite{Hal-82, Hat-93, Hat-93a}. An intuitive explanation is the following: cutting such a system induces a natural transition between the material and the surrounding vacuum. This transition can be visualized as a large box gradually moving towards the empty space. If this cut is implemented as a smooth transition at very large scales, the system within the box consistently maintains the characteristics of a bulk material. Since the void has null index, and in the case where the topological insulator has a non null index, the gap must close somewhere. The corresponding eigenstates come from the cut, hence the denomination {\em edge modes}. 

\medskip

Edge modes are usually studied directly in half-infinite samples but, to the best of our knowledge,  the previous argument has not been made rigorous except for some specific models or in a given symmetry class \cite{GraPor-13, Dro-21,Bal-22,BalBecDro-23,RosTar-24, Bal-}. More generally, one can ask the following question: {\em consider a junction between two topological insulators in the same symmetry class. If their two indices differ, do we always observe edge modes at the junction?} Note that this question differs from the original classification, where deformations are considered solely among gapped/bulk systems. When considering junctions, we must leave the realms of gapped operators. The goal of this paper is to give a positive and systematic answer to this question in dimension $d = 1$. We prove that protected edge modes must appear at the junction of two half-chains having different indices. In the process, we give a straightforward procedure to classify a wide range of one-dimensional operators based on their symmetries.

\medskip

Our classification in $d=1$ relies on efficient concepts from ordinary differential equations, originally developed in \cite{Gon-20, Gon-23}. To each bulk Hamiltonian with a gap, one can associate a Lagrangian plane in a symplectic boundary space, and a unitary matrix, according to Leray's theorem \cite{Ler-78} (see Theorem~\ref{th:Leray} below). Classifying symmetric bulk Hamiltonians reduces to classifying symmetric Lagrangian planes, which further amounts to classifying subsets of symmetric unitary matrices. This identification recovers Cartan's symmetric spaces in a natural way. We insist on the fact that we do not assume any periodicity or ergodicity of the Hamiltonians. In particular, disordered and inhomogeneous media are naturally covered within the approach. Our result would extend to higher dimension if we moreover assume periodicity in the longitudinal directions of the junction. In this case, after applying an appropriate Bloch transform, we recover a family of models on a line (or on a ``channel'' for continuous models, see~\cite{Gon-23}). 

\bigskip

The paper is organized as follows. In Section~\ref{sec:main} we explain the relationship between one dimensional differential operators, Lagrangian planes, and unitary matrices. We then state the two main results of the manuscript. Our first main result, Theorem~\ref{thm:classification} concerns the classification of Lagrangian planes respecting symmetries. Our second main result, Theorem~\ref{thm:BEC} shows the appearance of edge modes for symmetry protected topological junctions. We also present various classes of models for which our theory applies. We then prove the first result in Section~\ref{sec:classification}, and the second one in Section~\ref{sec:proof:junctions}. Finally, Section~\ref{sec:application_cst_Dirac} is a case study where we apply our results to junctions between two continuous Dirac operators with constant potentials. This simple case covers all the non-trivial symmetry classes.

\subsection*{Notation}
Let us gather here the conventions that we use for matrices. We denote by $\cM_N(\C)$ the set of $N \times N$ matrices with complex entries. For $z \in \C$, we denote by $z^* = \overline{z} \in \C$ the complex conjugate. For $A \in \cM_N(\C)$, we denote by $A^*$ its adjoint $(A^*)_{ij} = \overline{A_{ji}}$, by $A^T$ its transpose $(A^T)_{ij} = A_{ji}$, and by $\overline{A}$ its conjugate $(\overline{A})_{ij} = \overline{A_{ij}}$. In particular, $A^* = \overline{A}^T$. 

\medskip

We will encounter the following sets of matrices. First, the unitary and orthogonal groups
\[
    \U(N)  := \left\{ M \in \cM_N(\C), \ M^{-1} = M^* \right\}, \qquad    
    \O(N) := \left\{ M \in \U_N(\C), \ \overline{M} = M \right\}.
\]
Then the complex and real symmetric and anti--symmetric matrices:
\begin{align*}
     \cS_N(\C)  := \{M \in \cM_N(\C), \  M^*= M \}, & \qquad
    \cS_N^\R(\C) := \{M \in \cM_N(\C), \ M^T= M\}, \\
     \cA_N(\C)  := \{M \in \cM_N(\C), \ M^*= - M\}, & \qquad  
   	\cA_N^\R(\C) := \{M \in \cM_N(\C), \  M^T= - M\}.
\end{align*}
Finally, in the case where $N = 2n$ is even, the symplectic matrices
\begin{equation} \label{eq:def:symplectic_Omega}
    \Sp(2n,\C) := \{M \in\cM_N(\C), \  M^T \Omega M = \Omega\},  \qquad
    \Omega := \begin{pmatrix}
    0 & \bbI_n \\ - \bbI_n & 0
\end{pmatrix}.
\end{equation}
and the symplectic group $\Sp(n) := \Sp(2n,\C) \cap \U(2n)$.

\section{General setting and main result\label{sec:main}}

\subsection{From operators to Lagrangian planes and unitaries}
Let us first explain the relationships between differential operators, Lagrangian planes in a boundary symplectic space, and unitary matrices. In what follows, we restrict our attention to one-dimensional systems. 

\subsubsection{From operator to Lagrangian planes}

Our method relies on the following observations (see also~\cite{Gon-20, Gon-23}). We consider a general differential operator $\cL$ of order $p$ on $\C^M$ which generates a densely defined self-adjoint operator $H$ on $L^2(\R, \C^M)$, describing the dynamics of independent electrons in one-dimension. We emphasize that, compared to previous works, we do not assume any particular structure for the differential operator $\cL$, apart that
\begin{enumerate}
    \item it defines a self-adjoint operator $H$,
    \item it defines a symplectic form with~\eqref{eq:def:omega} (see below),
    \item the spectrum of $H$ is not the full line $\R$ (so we can find $E \in \R \setminus \sigma(H)$).
\end{enumerate}
Whenever these conditions are satisfied, we will prove that we can associate an index to the operator $H$ at energy $E$. Of course, checking these points, and specifically point (3), might be difficult if no further structure are assumed. In the context of Schrödinger operators for instance, points (1) and (2) are automatically satisfied for $\cL = - \partial_{tt}^2 + V(t)$ whenever $V \in L^\infty (\R)$ is bounded, but checking whether there are gaps in the spectrum can be a delicate question.

\medskip

First, Cauchy's theory for ordinary differential equations states that for any $E \in \R$, the set of solutions $\cS_E=\{ \psi : \R \to \C^M \, | \, \cL \psi = E \psi\}$ forms a vector space of dimension $Mp$, parametrized {\em e.g.} by the initial conditions $(\psi(0), \psi'(0), \ldots, \psi^{(p-1)}(0)) \in (\C^M)^p$. This means that the evaluation map
\begin{equation} \label{eq:def:Tr}
    \Tr : \left\lbrace \begin{array}{ccc} \cS_E & \to  & (\C^M)^p\\  \psi &\mapsto &(\psi(0), \psi'(0), \ldots, \psi^{(p-1)}(0)) \end{array} \right.
\end{equation}
is a linear bijection. Moreover, a solution $\psi \in \cS_E$ is an eigenstate of $H$ iff $\psi$ is square integrable, that is
\[
    \Ker(H - E) = \cS_E \cap L^2(\R, \C^M).
\]
Such a function $\psi$ is square integrable on the real line $\R$ iff it is square integrable at $+ \infty$ and at $-\infty$. This motivates to introduce the following vector subspaces of $(\C^M)^p$:
\begin{equation} \label{eq:def:ell_E}
    \begin{cases}
        \ell_E^+ & := \Tr \left(  \cS_E \cap L^2(\R^+, \C^M) \right) \\ 
        \ell_E^- & := \Tr \left(  \cS_E \cap L^2(\R^-, \C^M) \right) \\ 
        \ell_E & :=  \Tr \left(  \cS_E \cap L^2(\R, \C^M) \right)
    \end{cases},
    \quad \text{so that} \quad \ell_E = \ell_E^+ \cap \ell_E^-.
\end{equation}
In other words, $\ell_E^+$ is the set of initial conditions at $t=0$ so that the corresponding Cauchy solution is square integrable at $+ \infty$. The fact that $\Tr$ is bijective implies that
\[
    \dim \,  \Ker \, (H - E) = \dim \, \ell_E = \dim \left( \ell_E^+ \cap \ell_E^- \right).
\]
The key point is that we can read the multiplicity of $E$ (a bulk quantity) solely from the crossing of two vector spaces defined at $t=0$ (which looks like an edge quantity). Thus $\ell_E^\pm$ naturally appear as a pivotal tool for bulk-edge correspondence. In addition, we note that the Cauchy problem $\cL \psi = E \psi$ on $\R^+$ is decoupled from the one on $\R^-$, so the vector spaces $\ell_E^\pm$ solely depends on the behaviour of $\cL$ on $\R^\pm$. These objects are therefore also adapted to study junctions between two materials, as we will see below.

\medskip

On the other hand, the self-adjointness and locality of $H$ induces a natural symplectic form on the boundary space $(\C^M)^p$. Recall that a symplectic form $\omega : \C^K \times \C^K \to \C$ is a non degenerate continuous sesquilinear form satisfying
\[
\forall x, y \in \C^K, \quad \omega(x,y) = - \overline{\omega(y,x)}.
\]
The usual Hilbertian structure of $\C^K$ shows that there is a linear map $J : \C^K \to \C^K$ so that
\begin{equation}\label{eq:defJ}
    \forall x, y \in \C^K, \qquad \omega(x, y) = \bra x, J y \ket_{\C^K}.
\end{equation}
The condition that $\omega$ is symplectic shows that $J$ is invertible and satisfies $J^* = -J$. 

\medskip

In our case, the symplectic form $\omega(\cdot, \cdot)$ is defined as follows. Let $x, y \in (\C^M)^p$, and let $\phi, \psi :  \R \to \C^M$ be {\em any} compactly supported smooth functions with $\Tr(\phi) = x$ and $\Tr(\psi) = y$. We set
\begin{equation} \label{eq:def:omega}
    \omega(x, y) := \int_0^{\infty} \Big( \bra \phi, \cL \psi \ket_{\C^M} - \bra \cL \phi,  \psi \ket_{\C^M} \Big)(t) \rd t
    = \int_0^{\infty} \Big( \bra \phi, H \psi \ket_{\C^M} - \bra H \phi,  \psi \ket_{\C^M} \Big)(t) \rd t.
\end{equation}
In most cases, the fact that $\omega(x, y)$ does not depend on the choices of $\psi$ and $\phi$, and the fact that $\omega$ is a symplectic form can be checked directly. Although we considered $\phi, \psi \in C^\infty_0(\R)$, the left--hand side of~\eqref{eq:def:omega} is a continuous bilinear form on $\left\{ \phi \in L^2(\R^+, \C^M), \ \cL \phi \in L^2(\R^+, \C^M)\right\}$, so formula~\eqref{eq:def:omega} also holds for $\phi$ and $\psi$ in this set (and in particular for $\phi, \psi \in \cS_E^+$). Here is a typical example (we give other examples below in Section~\ref{sec:examples}).

\begin{example}[Schrödinger operators] \label{ex:schrodinger}
Consider $\cL = - \partial_{tt}^2 + V(t)$ of order $p = 2$. We assume that $V(t)$ is pointwise a hermitian $M \times M$ matrix, uniformly bounded in $t \in \R$. Then the operator $H := - \partial_{tt}^2 + V(t)$ is self-adjoint with domain $H^2(\R, \C^M)$.  For all $\phi, \psi \in C^\infty_0(\R, \C^M)$, an integration by part gives
\begin{align*}
    \int_0^\infty \left[ \bra \phi, H \psi \ket_{\C^M} - \bra H \phi, \psi \ket_{\C^M} \right](t) \rd t & = -\int_0^\infty \left[ \bra \phi, \psi'' \ket_{\C^M} - \bra \phi'', \psi \ket_{\C^M} \right](t) \rd t  \\
    & = \bra \phi(0), \psi'(0) \ket_{\C^M} -  \bra \phi'(0), \psi(0) \ket_{\C^M} ,
\end{align*}
which is of the form $\omega(\Tr(\phi), \Tr(\psi))$, with the symplectic form $\omega : \C^{2M} \times \C^{2M} \to \C$ given by
\begin{equation}\label{eq:exmp_schrodinger_omega}
	\omega(x, y) = \bra x, J y \ket_{\C^{2M}}, \quad \text{with} \quad 
	J = \begin{pmatrix}
		0 & \bbI_M \\ - \bbI_M & 0    
	\end{pmatrix}.
\end{equation}
As claimed, $\omega(x, y)$ is independent of the choices of $\phi, \psi \in C^\infty_0(\R)$ as long as $x = \Tr(\phi)$ and $y = \Tr(\psi)$. So~\eqref{eq:def:omega} indeed defines a symplectic form for Schrödinger operators. Note that the symplectic form $\omega(\cdot, \cdot)$ is independent of the potential $V$. 
\end{example}

\begin{definition}
	A vector subspace $\ell \subset \C^K$ is a {\em Lagrangian plane} (for $\omega$) if it satisfies $\ell = \ell^\circ$, and is {\em isotropic} if $\ell \subset \ell^\circ$, where
	\begin{equation} \label{eq:def:Lagrangian_planes}
		\ell^\circ := \left\{ x \in \C^K, \quad \omega(x,y) = 0 \quad \text{for all} \quad y \in \ell \right\}.
	\end{equation}
\end{definition}

We have $\dim(\ell^\circ) = K - \dim(\ell)$. In particular, Lagrangian planes can exist only if $K$ is even, and in this case, they are of dimension $K/2$. For all $E \in \R$, the vector spaces $\ell_E^\pm$ are isotropic. Indeed, consider $x, y \in \ell_E^\pm$, and let $\psi, \phi \in \cS_E^+$ so that $\Tr(\psi) = x$ and $\Tr(\phi) = y$. Since $\cL \psi = E \psi$ and $\cL \phi = E \phi$, and since these functions are square--integrable at $+ \infty$, we can write
\begin{equation} \label{eq:isotropic}
    \omega(x,y) = \int_0^\infty [ \langle \phi, \cL \psi \rangle - \langle \cL \phi, \psi \rangle ] = 
    \int_0^\infty [ \langle \phi, E \psi \rangle - \langle \phi, E \psi \rangle ] = 0,
\end{equation}
where we used that $E \in \R$. The following result shows that if in addition $E$ is in the resolvent set of the bulk operator, then the vector spaces $\ell_E^\pm$ are Lagrangian. This result was proved in~\cite[Theorem 17]{Gon-23}. We give a short proof below in Appendix~\ref{app:proof:lagrangian_planes} for completeness.

\begin{lemma} \label{lem:lEpm_are_Lagrangian_planes} If $E \in \R \setminus \sigma(H)$, then $\ell_E^\pm$ are Lagrangian planes of $((\C^{M})^p, \omega)$, and we have
    \[
        \ell_E^+ \oplus \ell_E^- = (\C^{M})^p.
    \]
\end{lemma}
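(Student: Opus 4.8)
The plan is to combine three facts: the isotropy of $\ell_E^\pm$ already established in \eqref{eq:isotropic}, the triviality of their intersection, and a complementarity statement $\cS_E = \cS_E^+ \oplus \cS_E^-$, where I write $\cS_E^\pm := \cS_E \cap L^2(\R^\pm,\C^M)$ so that $\ell_E^\pm = \Tr(\cS_E^\pm)$. For the intersection, recall $\ell_E^+ \cap \ell_E^- = \ell_E = \Tr(\cS_E \cap L^2(\R,\C^M)) = \Tr(\Ker(H-E))$; since $E \notin \sigma(H)$ we have $\Ker(H-E) = \{0\}$, whence $\ell_E^+ \cap \ell_E^- = \{0\}$ and the sum $\ell_E^+ + \ell_E^-$ is direct. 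Isotropy gives $\dim \ell_E^\pm \le Mp/2$. Thus everything reduces to showing that every solution in $\cS_E$ splits as a sum of a solution square-integrable at $+\infty$ and one square-integrable at $-\infty$; this surjectivity is the heart of the matter and the only place where the full strength of $E \notin \sigma(H)$ (invertibility of $H-E$, not merely the absence of an eigenvalue at $E$) is used.

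For the splitting I would argue by a resolvent/cut-off construction. Fix $\psi \in \cS_E$ and a smooth cut-off $\chi$ with $\chi \equiv 0$ on $(-\infty,0]$ and $\chi \equiv 1$ on $[1,\infty)$. Since $\cL\psi = E\psi$, the function $g := (\cL - E)(\chi\psi) = [\cL,\chi]\psi$ is supported in $\{\chi' \ne 0\} \subset [0,1]$, hence compactly supported and in $L^2(\R,\C^M)$. Because $E \notin \sigma(H)$, the resolvent is bounded and $w := (H-E)^{-1} g \in L^2(\R,\C^M)$ solves $(\cL - E) w = g$, with $w$ a genuine classical solution where $g$ vanishes, by one-dimensional regularity. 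Then
\[
\psi = \underbrace{\big((1-\chi)\psi + w\big)}_{=:\,\psi^+} + \underbrace{\big(\chi\psi - w\big)}_{=:\,\psi^-},
\]
and a direct computation using $(\cL-E)((1-\chi)\psi) = -g$ gives $(\cL-E)\psi^\pm = 0$, so $\psi^\pm \in \cS_E$. On $[1,\infty)$ one has $\psi^+ = w$, so $\psi^+ \in L^2(\R^+,\C^M)$ and $\psi^+ \in \cS_E^+$; on $(-\infty,0]$ one has $\psi^- = -w$, so $\psi^- \in L^2(\R^-,\C^M)$ and $\psi^- \in \cS_E^-$. This proves $\cS_E = \cS_E^+ + \cS_E^-$, and with the trivial intersection the sum is direct: $\cS_E = \cS_E^+ \oplus \cS_E^-$.

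It remains to transport the conclusion through the linear bijection $\Tr$ and to upgrade isotropy to the Lagrangian property. Applying $\Tr$ to $\cS_E = \cS_E^+ \oplus \cS_E^-$ yields $(\C^M)^p = \ell_E^+ \oplus \ell_E^-$, and taking dimensions gives $\dim \ell_E^+ + \dim \ell_E^- = Mp$. Combined with $\dim \ell_E^\pm \le Mp/2$ from isotropy, this forces $\dim \ell_E^\pm = Mp/2$. An isotropic subspace $\ell$ satisfies $\ell \subseteq \ell^\circ$ with $\dim \ell^\circ = Mp - \dim \ell$; at half-dimension this gives $\ell = \ell^\circ$, i.e. $\ell_E^\pm$ are Lagrangian. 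The main obstacle is the analytic input in the second paragraph: one must ensure that $w = (H-E)^{-1}g$ is regular enough to be read as a classical ODE solution where $g$ vanishes, and that the identity $(\cL-E)\psi^\pm = 0$ holds in the strong sense on all of $\R$. Both follow from standard one-dimensional regularity once $H$ is the self-adjoint realization of $\cL$, but these are the steps that require care, rather than the symplectic bookkeeping.
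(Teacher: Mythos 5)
Your proof is correct, and it runs on the same engine as the paper's: both arguments hinge on applying the bounded resolvent $(H-E)^{-1}$ to a compactly supported right-hand side and then invoking one-dimensional uniqueness/regularity to recognize the output as a genuine solution of $(\cL-E)u=0$ on the half-line where the source vanishes, hence square-integrable at the corresponding infinity; both then finish with the identical bookkeeping (isotropy gives $\dim \ell_E^\pm \le Mp/2$, spanning plus dimension count upgrades isotropic to Lagrangian). The difference is in what gets decomposed. The paper starts from an arbitrary boundary vector $x \in (\C^M)^p$, lifts it to a compactly supported $\Psi$ with $\Tr(\Psi)=x$, and splits the source $f=(H-E)\Psi$ \emph{sharply} into $\1_{\R^-}f + \1_{\R^+}f$ before applying the resolvent; the decomposition $x = x^+ + x^-$ then lives directly in the trace space. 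You instead start from an arbitrary solution $\psi \in \cS_E$, use a smooth cutoff and the commutator $g=[\cL,\chi]\psi$, and obtain the solution-space statement $\cS_E = \cS_E^+ \oplus \cS_E^-$, which you then push through the bijection $\Tr$. Your variant makes the intermediate structural fact about $\cS_E$ explicit (which is mildly more informative) and avoids having to take traces of the resolvent outputs $\Psi^\pm$ themselves; the paper's variant avoids the commutator and works with sharp truncations, at the cost of needing $\Tr(\Psi^\pm)$ to be well defined. Both versions rest on the same regularity point you correctly flag — that $(H-E)^{-1}$ of a compactly supported $L^2$ function is a classical ODE solution wherever the source vanishes — which the paper invokes under the name ``uniqueness of the Cauchy problem.''
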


\begin{remark} \label{rem:Mp_is_even}
    If $K := Mp$ is not even, there are no Lagrangian planes. So any such operator $H$ must satisfy $\sigma(H) = \R$, and there is no gap. This happens for instance with the differential operator $\cL = - \ri \partial_t + V(t)$ with $V \in L^\infty(\R)$, so $p  = M = K = 1$. The corresponding operator $H = - \ri \partial_t + V$ is self-adjoint with domain $H^1(\R)$, and its spectrum must be the full line $\R$.
\end{remark}
In what follows, we assume $K = Mp$ to be even, and set 
\[
    N := \frac12 Mp.
\]

\subsubsection{From Lagrangian planes to unitaries}

In order to classify Lagrangian planes of $(\C^{2N}, \omega)$, we use  Leray's theorem~\cite{Ler-78} (see also~\cite[Lemma 7]{Gon-23})  which states that there is a one-to-one correspondence between such planes and unitary matrices $U \in \U(N)$. Recall that $\omega$ is described by a matrix $J$ satisfying $J^* = - J$. In particular, $A := -\ri J$ is self-adjoint, hence diagonalizable. This gives a natural splitting $\C^{2N} = K_+ \oplus K_-$, so that $- \ri J$ is positive definite on $K_+$, and negative definite on $K_-$. The dimension of $K_+$ is the number of positive eigenvalues of $A$. In what follows, we choose a basis of $\C^{2N}$ in which $J$ has the block matrix form
\begin{equation} \label{eq:def:J_A+_A-}
    J = \begin{pmatrix}
        \ri A_+ & 0 \\ 0 & - \ri A_-
    \end{pmatrix},
\end{equation}
where $A_+$ and $A_-$ are positive definite operators acting respectively on $K_+$ and $K_-$. Leray's theorem states that if we write $x \in \ell$ as $x = x_+ + x_-$ with $x_+ \in K_+$ and $x_- \in K_-$, then $x_-$ can be recovered from $x_+$ thanks to some unitary $U$. We state a version of Leray's theorem which is convenient for our purpose.

\begin{theorem}[Leray~\cite{Ler-78}] \label{th:Leray}
    The symplectic space $(\C^{2N}, \omega)$ has Lagrangian planes iff $\dim(K_+) = \dim(K_-) = N$. In this case, the Lagrangian planes of $(\C^{2N}, \omega)$ are in one-to-one correspondence with unitaries $U \in \U(N)$: $\ell$ is Lagrangian plane of $(\C^{2N}, \omega)$ iff there is $U \in \U(N)$ so that
    \begin{equation}
        \ell = \ell_U := \left\{ \begin{pmatrix}
            x \\ \sqrt{ A_-}^{-1} U \sqrt{A_+} x
        \end{pmatrix}, \qquad x \in \C^N
        \right\}.
    \end{equation}
\end{theorem}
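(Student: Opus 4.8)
The plan is to prove both directions of the stated equivalence using the decomposition $\C^{2N} = K_+ \oplus K_-$ in which $J$ takes the block form~\eqref{eq:def:J_A+_A-}. In this basis, for $x = x_+ + x_-$ and $y = y_+ + y_-$, the symplectic form reads
\begin{equation*}
    \omega(x,y) = \bra x, Jy \ket = \ri \bra x_+, A_+ y_+ \ket - \ri \bra x_-, A_- y_- \ket.
\end{equation*}
The first observation I would make is that $A := -\ri J$ is self-adjoint with $\det A = \det(-\ri J)$; since $\omega$ is symplectic, $J$ is invertible, so $A$ has no zero eigenvalue and $\C^{2N} = K_+ \oplus K_-$ genuinely splits into the positive and negative spectral subspaces of $A$. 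I would then record the easy half of the first claim: if $\ell$ is Lagrangian, it is isotropic, so $\omega(x,x) = 0$ for all $x \in \ell$, i.e. $\bra x_+, A_+ x_+\ket = \bra x_-, A_- x_-\ket$. Since $A_\pm$ are positive definite, the projection $\ell \to K_+$, $x \mapsto x_+$ is injective (if $x_+ = 0$ then $\bra x_-, A_- x_-\ket = 0$ forces $x_- = 0$); symmetrically $x \mapsto x_-$ is injective. Hence $\dim \ell \le \min(\dim K_+, \dim K_-)$, and since $\dim \ell = N$ (a Lagrangian plane has half the dimension, as noted before the theorem), this forces $\dim K_\pm \ge N$, whence $\dim K_+ = \dim K_- = N$. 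Conversely, if $\dim K_\pm = N$ one simply exhibits a Lagrangian plane, e.g. the $\ell_U$ below with $U = \bbI$, so the first equivalence is settled.

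For the correspondence itself, I would work with the normalized coordinates $u := \sqrt{A_+}\,x_+ \in K_+$ and $v := \sqrt{A_-}\,x_-\in K_-$, in which the form becomes $\omega(x,y) = \ri(\bra u, u'\ket - \bra v, v'\ket)$, so the isotropy condition for a vector space $\ell$ reads $\bra u,u'\ket = \bra v,v'\ket$ for all pairs in $\ell$. Given a Lagrangian $\ell$, the injectivity established above shows the map $x_+ \mapsto x_-$ is well defined and linear on the $N$-dimensional image, which (by the dimension count) is all of $K_+$; translating to the $(u,v)$ coordinates defines a linear bijection $U : K_+ \to K_-$ with $v = Uu$, and the isotropy identity $\|u\|^2 = \|v\|^2 = \|Uu\|^2$ for all $u$ shows $U$ is a surjective isometry, hence unitary. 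Undoing the substitution gives $x_- = \sqrt{A_-}^{-1} U \sqrt{A_+}\, x_+$, i.e. $\ell = \ell_U$. For the reverse map I would check that each $\ell_U$ is Lagrangian: it is visibly $N$-dimensional (parametrized bijectively by $x \in \C^N \cong K_+$), and plugging two of its elements into $\omega$ yields $\ri(\bra x, x'\ket - \bra U x, U x'\ket) = 0$ by unitarity, so $\ell_U$ is isotropic; an $N$-dimensional isotropic subspace of a $2N$-dimensional symplectic space is automatically Lagrangian (since $\dim \ell^\circ = 2N - N = N = \dim \ell$ and $\ell \subset \ell^\circ$). Finally I would verify injectivity of $U \mapsto \ell_U$: if $\ell_U = \ell_{U'}$, then $Ux = U'x$ for all $x$, so $U = U'$.

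The main conceptual obstacle is not any single estimate but correctly handling the two weightings $A_+$ and $A_-$: the naive statement ``$x_- = Ux_+$ with $U$ unitary'' is false because the metric on $\ell$ is distorted by $A_\pm$, and one must conjugate by $\sqrt{A_\pm}$ to turn the isotropy condition into a genuine norm-preservation statement. Keeping track of which space ($K_+$ versus $K_-$, with their respective positive operators) each object lives in, and verifying that $\sqrt{A_-}^{-1} U \sqrt{A_+}$ lands the second block correctly, is the one place where care is needed; everything else is a dimension count plus the elementary fact that a norm-preserving surjective linear map is unitary. I would therefore present the $\sqrt{A_\pm}$-normalization as the opening move, after which both the ``Lagrangian $\Rightarrow$ unitary'' and ``unitary $\Rightarrow$ Lagrangian'' directions reduce to the standard polarization identity and the half-dimension characterization of Lagrangian planes recalled earlier in the text.
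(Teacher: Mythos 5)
Your proof is correct and follows essentially the same route as the paper's: injectivity of the projection $\ell \to K_+$ via positive-definiteness of $A_\pm$ and isotropy, a dimension count turning it into a bijection, and the observation that the Lagrangian condition forces the resulting map $V_\ell : K_+ \to K_-$ to satisfy $A_+ = V_\ell^* A_- V_\ell$, i.e. $\sqrt{A_-}\,V_\ell\,\sqrt{A_+}^{-1}$ is unitary — your $\sqrt{A_\pm}$-normalization is a cosmetic repackaging of exactly this step, and the converse direction is the same direct computation in both. If anything, you are more thorough than the paper, which leaves the ``Lagrangian planes exist iff $\dim K_\pm = N$'' claim and the injectivity of $U \mapsto \ell_U$ implicit.
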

We provide a short proof for completeness below in Appendix~\ref{sec:proofLeray}. Let us mention that the initial Leray's theorem construct a unitary from $K_+$ to $K_-$ (up to dilations). With our choice of basis, we have $K_+ = \begin{pmatrix}    \C^N \\ 0 \end{pmatrix}$ and $K_- = \begin{pmatrix} 0 \\ \C^N \end{pmatrix}$.

\begin{example} \label{ex:schrodinger_continuated}
	We continue Example~\ref{ex:schrodinger} and consider the simplest case where $M = 1$ and $V=0$. Then $H = - \Delta$ and $\sigma(H) = \mathbb R^+$. The matrix $J$ representing $\omega$ is of the form~\eqref{eq:def:J_A+_A-}, namely $J = \begin{pmatrix}
        \ri & 0 \\ 0 & - \ri
    \end{pmatrix}$, if we change basis and define the trace operator as $\Tr(\psi) = 2^{-1/2} (\psi(0) - \ri \psi'(0), \psi(0) + \ri \psi'(0))$. One can check that for any $E<0$, elements of $\cS_E \cap L^2(\R^+)$ are all of the form $\psi(x)= 2^{1/2} \alpha \re^{- \sqrt{| E |} x}$ with $\alpha \in \CC^M$ so that 
		$$
		\ell_E^+ = \left\lbrace \begin{pmatrix}
			\alpha ( 1 + \ri \sqrt{| E |}) \\   \alpha ( 1 - \ri \sqrt{| E |}).
		\end{pmatrix}, \ \alpha \in \CC^M \right\rbrace
        = \left\lbrace \begin{pmatrix}
            x \\   U_E x.
        \end{pmatrix}, \ x \in \CC^M \right\rbrace, \quad \text{with} \quad U_E := \frac{1 - \ri \sqrt{| E |}}{1 + \ri \sqrt{| E |}} \in \U(1).
		$$
    We insist on the fact that for each $E<0$, $\ell_E^+$ is associated to a \emph{single} unitary $U_E$, without specifying any kind of boundary condition for $H$ at $t=0$. 
\end{example}

\subsection{Main result: classification of symmetric Lagrangian planes}
Symmetries of $H$ translate into ones for the Lagrangian planes $\ell_E^\pm$, and for the corresponding unitaries. This gives a direct link between the symmetries of one-dimensional operators, and the original symmetric spaces of Cartan \cite{Car-26,Car-27}, which are subgroups of the unitary group $\U(N)$.  

\subsubsection{Symmetries for operators}

For Hamiltonians $H$, we consider  the three fundamental symmetries $T$, $C$ and $S$, which act only on internal degrees of freedom encoded in $\mathbb C^M$. Thus we identify $T$ on $\mathbb C^M$ with $\1_{L^2(\R)} \otimes T$ on $L^2(\R, \C^M) \approx L^2(\R \otimes \C^M)$, and similarly for $C$ and $S$. Recall that an operator $U$ (resp. $A$) on $\C^M$ is unitary  (resp. anti-unitary) if 
\[
    \forall x,y \in \C^M, \quad \bra Ux, Uy \ket = \bra x, y \ket, \qquad \text{resp.} \qquad \bra Ax, Ay \ket = \bra y, x \ket = \overline{\bra x, y \ket}.
\]

\medskip

\begin{definition}[Symmetries for operators $H$] ~
    \begin{itemize}
        \item a Time-reversal symmetry $T$ is an anti-unitary operator on $\C^M$ such that $T^2=\varepsilon_T \1$, with $\varepsilon_T=\pm1$ and with $HT=TH$;
        \item a Particle-hole symmetry $C$ is an anti-unitary operator on $\C^M$ such that $C^2=\varepsilon_C \1$, with $\varepsilon_C=\pm1$ and with $HC=-CH$;
        \item a Chiral symmetry $S$ is a unitary operator on $\C^M$ such that $S^2=\1$ and $HS=-SH$.
    \end{itemize}
    The symmetry $T$ is called bosonic if $\varepsilon_T = +1$, and fermionic if $\varepsilon_T = -1$, and similarly for the symmetry $C$.
\end{definition}
If both $T$ and $C$ are present, we assume that their product $S := T C$ is an $S$--symmetry. This is equivalent to requiring that $T$ and $C$ commute, or anti-commute, depending on their bosonic/fermionic nature. Indeed, the condition $S^2=1$ implies $TC=\varepsilon_C\varepsilon_T CT$. Listing all possibilities of the symmetries leads to the celebrated tenfold way of topological insulators~\cite{RyuSchFur-10}. 

\subsubsection{Symmetries for the symplectic space}

As claimed, the symmetries of $H$ translate into symmetries for the symplectic space $((\C^M)^p, \omega)$.

\begin{lemma}[Symmetries for symplectic form $\omega$]
    Assume that $T$, $C$ and/or $S$ are (anti-) unitary operators on $\C^M$ which are respectively $T$, $C$ and $S$ symmetries for $H$. Then, setting $T := T^{\otimes p}$ on $(\C^M)^p$ and similarly for $C$ and $S$, the corresponding symplectic form $\omega(\cdot, \cdot)$ defined in~\eqref{eq:def:omega} satisfies that for all $x, y \in (\C^M)^p$,
    \begin{equation} \label{eq:TCS_omega}
        \boxed{ \omega(Tx, Ty) = - \omega(y,x), \qquad \omega(Cx, Cy) = \omega(y,x), \qquad \omega(Sx, Sy) = -\omega(x,y). }
    \end{equation}
    In terms of the operator $J$ describing the symplectic form in \eqref{eq:defJ}, this is equivalent to
    \begin{equation} \label{eq:TCS_with_J}
        \boxed{ TJ = JT, \quad CJ = - JC, \quad S J = - JS.}
    \end{equation}
\end{lemma}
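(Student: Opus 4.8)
The plan is to prove the three covariance identities for $\omega$ in~\eqref{eq:TCS_omega} directly from the integral formula~\eqref{eq:def:omega}, and then to convert them into the operator identities~\eqref{eq:TCS_with_J} by a short linear-algebra argument based on $\omega(x,y)=\bra x,Jy\ket$ together with $J^*=-J$. The starting observation is that the lifted symmetries are compatible with the trace map: if $\phi$ is smooth and compactly supported, then $(T\phi)(t)=T\phi(t)$ is again smooth and compactly supported, and because $T$ acts only on the internal space $\C^M$ (a constant anti-unitary factor composed with complex conjugation), differentiation in $t$ commutes with it. Hence $\Tr(T\phi)=T^{\otimes p}\Tr(\phi)=Tx$, and likewise for $C$ and $S$. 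This means I may evaluate $\omega(Tx,Ty)$ using $T\phi,T\psi$ as representatives in~\eqref{eq:def:omega}. The second ingredient is that the symmetry relations $HT=TH$, $HC=-CH$, $HS=-SH$ force, on the core of smooth compactly supported functions, the pointwise differential identities $\cL(T\psi)=T(\cL\psi)$, $\cL(C\psi)=-C(\cL\psi)$ and $\cL(S\psi)=-S(\cL\psi)$.

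With these two facts the computation is mechanical. For $T$, using the commutation $\cL T = T\cL$ and then the anti-unitary rule $\bra Ta,Tb\ket=\bra b,a\ket$ applied pointwise before integrating,
\begin{align*}
\omega(Tx,Ty) &= \int_0^\infty\!\big(\bra T\phi, \cL (T\psi)\ket - \bra \cL (T\phi), T\psi\ket\big)\,\rd t \\
&= \int_0^\infty\!\big(\bra T\phi, T(\cL\psi)\ket - \bra T(\cL\phi), T\psi\ket\big)\,\rd t \\
&= \int_0^\infty\!\big(\bra \cL\psi,\phi\ket - \bra\psi,\cL\phi\ket\big)\,\rd t = -\omega(y,x).
\end{align*}
For $C$ the only change is that the sign of $\cL C=-C\cL$ appears as a common factor in both boundary terms; relative to the $T$-computation this flips the result, giving $\omega(Cx,Cy)=+\omega(y,x)$. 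For $S$ one uses instead the unitary rule $\bra Sa,Sb\ket=\bra a,b\ket$ (which does \emph{not} transpose the arguments) together with the single overall sign from $\cL S=-S\cL$, producing $\omega(Sx,Sy)=-\omega(x,y)$.

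For the equivalence with~\eqref{eq:TCS_with_J} I would substitute $\omega(x,y)=\bra x,Jy\ket$ and use $J^*=-J$, which is itself the matrix form of the skew-Hermitian property $\omega(x,y)=-\overline{\omega(y,x)}$. For $T$, assuming $TJ=JT$ and using anti-unitarity, $\bra Tx,JTy\ket=\bra Tx,T(Jy)\ket=\bra Jy,x\ket=-\bra y,Jx\ket=-\omega(y,x)$, which is exactly the first identity; conversely, since $T$ is a bijection of $\C^{2N}$ the identity for all $x,y$ forces $JTy=TJy$, i.e. $JT=TJ$. The same manipulation carrying the sign of $\cL C=-C\cL$ yields $CJ=-JC$, and the (simpler) unitary computation yields $SJ=-JS$.

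I expect the difficulties here to be bookkeeping rather than conceptual. The three points requiring care are: justifying the pointwise differential identities $\cL T=T\cL$, $\cL C=-C\cL$, $\cL S=-S\cL$ from the operator relations on a suitable core; tracking the complex conjugations consistently in the anti-unitary cases, so that the sesquilinearity of $\omega$ and the transposition rule $\bra Ta,Tb\ket=\bra b,a\ket$ are combined correctly; and invoking the bijectivity of $T$, $C$, $S$ together with the non-degeneracy of $\omega$ (equivalently the invertibility of $J$) to obtain the converse implications in the equivalence. The conceptual content, by contrast, is simply that the boundary symplectic form inherits verbatim the commutation structure of $\cL$ with the internal symmetries.
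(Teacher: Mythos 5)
Your proposal is correct and follows essentially the same route as the paper: derive the covariance identities in~\eqref{eq:TCS_omega} by inserting $T\phi, T\psi$ (resp. $C$, $S$) into the integral formula~\eqref{eq:def:omega}, using the commutation relation with $\cL$ and the (anti-)unitarity rule pointwise, and then pass to~\eqref{eq:TCS_with_J} via $\omega(x,y)=\bra x,Jy\ket$, $J^*=-J$, and non-degeneracy/bijectivity. Your added care about the trace compatibility $\Tr(T\phi)=T^{\otimes p}\Tr(\phi)$ and the converse direction of the equivalence are points the paper leaves implicit, but they do not change the argument.
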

\begin{proof}
    For the $T$--symmetry, we have, for all $\phi, \psi \in C^\infty_0(\R)$, the pointwise identity
    \[
        \bra T \phi, H T \psi \ket_{\C^M} = \bra T \phi, T H \psi \ket_{\C^M} = \bra H \psi, \phi \ket_{\C^M},
    \]
    where we used that $TH = TH$ and that $T$ is anti-unitary. So
    \[
         \bra T \phi, H T \psi \ket_{\C^M} -  \bra H T \phi, T \psi \ket_{\C^M} = \bra H \psi, \phi \ket_{\C^M} -  \bra \psi, H \phi \ket_{\C^M}.
    \]
    Integrating on $\R^+$ and using the definition of $\omega$ in~\eqref{eq:def:omega}, we get $\omega(Tx, Ty) = - \omega(y,x)$. In particular, we have
    \[
        \bra J Tx, Ty \ket = - \bra Tx, JTy \ket = - \omega(Tx, Ty) = \omega(y,x) = \bra y, Jx \ket = \bra TJx, Ty \ket,
    \]
    where we used that $J^* = -J$, the identity $\omega(Tx, Ty) = -\omega(y,x)$ and the fact that $T$ is anti-unitary. We deduce as wanted that $TJ = JT$. The proofs are similar for the $C$-- and $S$-- symmetries.
\end{proof}

\begin{definition}
We say that a Lagrangian plane $\ell$ respects the $T$, $C$ and/or $S$ symmetries if $T \ell = \ell$, $C \ell = \ell$ and/or $S \ell = \ell$.
\end{definition} 

At this point we can forget about the original Hamiltonian formalism, and define the $T$, $C$ and $S$ symmetries directly at the level of some abstract symplectic space $(\cH, \omega)$, using~\eqref{eq:TCS_omega} as a definition. Our first Theorem is a classification of Lagrangian planes respecting these symmetries (without any reference to the underlying Hamiltonian).

\subsubsection{Classification of symmetric Lagrangian planes}

Recall that Cartan~\cite{Car-26,Car-27} identified ten subgroups of $\U(N)$, which are all subsets of $\U(N)$ respecting some symmetries. In the context of topological insulators, these Cartan classes are also called Altland-Zirnbauer classes~\cite{AltZir-97}. 

\begin{theorem}\label{thm:classification}
    Let $(\C^{2N}, \omega)$ be a symplectic form admitting Lagrangian planes. For each of the ten symmetry classes labelled by the Cartan's label $\Gamma$, let $\Lambda_\Gamma$ be the set of Lagrangian planes respecting these symmetries, and $U_\Gamma$ the set of unitaries obtained from them via Leray's theorem~\ref{th:Leray}. Then $U_\Gamma$ is the corresponding Cartan symmetric space.
\end{theorem}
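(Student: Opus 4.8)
\section*{Proof proposal}

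The plan is to transport each symmetry through Leray's correspondence (Theorem~\ref{th:Leray}) and to read off, class by class, the resulting algebraic constraint on the unitary $U$. I work in the basis of~\eqref{eq:def:J_A+_A-}, and after a symplectic change of variables that sends $J$ to $\mathrm{diag}(\ri \bbI, -\ri\bbI)$ I may assume $A_+ = A_- = \bbI$; this rescaling is legitimate because it is exactly the normalization built into Leray's unitary (the factors $\sqrt{A_\pm}$ are absorbed into $U$), and it is compatible with the symmetries since each of them either commutes or anti-commutes with $A := -\ri J$. In this normalization the Lagrangian plane attached to $U$ is simply the graph $\ell_U = \{(x, Ux) : x \in \C^N\}$.

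The first genuine step is to determine how $T$, $C$, $S$ act on the spectral splitting $\C^{2N} = K_+ \oplus K_-$, where $K_\pm$ are the positive/negative eigenspaces of the self-adjoint operator $A$. The relations~\eqref{eq:TCS_with_J} translate into $AT = -TA$, $AC = CA$ and $AS = -SA$ (the two anti-unitaries produce the extra sign by conjugating the $\ri$ in $A = -\ri J$). Hence $C$ preserves each $K_\pm$, whereas $T$ and $S$ interchange them. Writing the operators in $2\times 2$ block form and exploiting the residual unitary freedom within each $K_\pm$ together with the structural identities $S^2 = \bbI$, $T^2 = \varepsilon_T\bbI$ and $C^2 = \varepsilon_C\bbI$, I would put each single symmetry into a normal form: $S = \left(\begin{smallmatrix} 0 & \bbI \\ \bbI & 0 \end{smallmatrix}\right)$; $T = \left(\begin{smallmatrix} 0 & \bbI \\ \varepsilon_T\bbI & 0 \end{smallmatrix}\right)\kappa$, where $\kappa$ denotes complex conjugation; and $C = \mathrm{diag}(\gamma_+, \gamma_-)\kappa$ with $\gamma_\pm\overline{\gamma_\pm} = \varepsilon_C\bbI$, normalized to $\gamma_\pm = \bbI$ when $\varepsilon_C = +1$ and to $\gamma_\pm = \Omega$ when $\varepsilon_C = -1$.

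The mechanical step is then to substitute a generic vector $(x, Ux) \in \ell_U$, apply the symmetry, and impose that the image again lie in $\ell_U$, i.e.\ that its lower block equal $U$ times its upper block. The outcomes are as follows. Invariance under $S$ forces $U^2 = \bbI$, so $U$ is a Hermitian involution and $U_{\mathrm{AIII}}$ is the disjoint union of the Grassmannians $\U(N)/(\U(k)\times\U(N-k))$. Invariance under $C$ forces $\overline{U} = U$ when $\varepsilon_C = +1$, giving $U_{\mathrm{D}} = \O(N)$, and $U^T \Omega U = \Omega$ when $\varepsilon_C = -1$, giving $U_{\mathrm{C}} = \Sp(n)$. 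Invariance under $T$ forces $U = \varepsilon_T U^T$, giving the symmetric unitaries $U_{\mathrm{AI}} \cong \U(N)/\O(N)$ for $\varepsilon_T = +1$ and the antisymmetric unitaries $U_{\mathrm{AII}} \cong \U(2n)/\Sp(n)$ for $\varepsilon_T = -1$; the empty class gives $U_{\mathrm{A}} = \U(N)$. These are precisely the six Cartan symmetric spaces attached to the classes carrying at most one symmetry.

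For the four classes carrying both $T$ and $C$ (hence $S = TC$) the constraint on $U$ is the intersection of the two individual constraints, since $S\ell = \ell$ follows automatically from $T\ell = C\ell = \ell$. This yields real symmetric unitaries for BDI ($\cong \O(p+q)/(\O(p)\times\O(q))$), self-dual symplectic unitaries for CII ($\cong \Sp(p+q)/(\Sp(p)\times\Sp(q))$), real antisymmetric unitaries ($U^2 = -\bbI$, $\overline U = U$) for DIII ($\cong \O(2N)/\U(N)$), and symmetric symplectic unitaries for CI ($\cong \Sp(N)/\U(N)$), completing all ten classes. I expect the main obstacle to be exactly this last simultaneous normalization: unlike the single-symmetry cases one cannot independently reduce $T$ and $C$ to their simplest forms, and one must carefully track the sign data $(\varepsilon_T, \varepsilon_C)$, the forced parity of $N$, and the compatibility with the reduction $A_\pm = \bbI$, in order to guarantee that the intersected matrix conditions coincide \emph{on the nose} with the standard defining relations of the Cartan symmetric spaces, and not merely up to conjugacy.
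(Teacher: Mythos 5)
Your strategy is the same as the paper's: normalize $J$ to $\mathrm{diag}(\ri\bbI,-\ri\bbI)$, put $T$, $C$, $S$ into block normal forms (block--diagonal for $C$, off--diagonal for $T$ and $S$, as forced by the commutation relations with $A=-\ri J$), and read the constraint on $U$ off the graph condition $\ell_U=\{(x,Ux)\}$. Your derivations for the classes A, AIII, AI, AII, D, C and BDI coincide with the paper's and are correct.

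The gap is in classes DIII and CI, and it is fatal to the derivation as written (though not to the answers you state). Your individual normal forms cannot be realized simultaneously in these two classes: since $\begin{pmatrix}0&\bbI\\ \varepsilon_T\bbI&0\end{pmatrix}$ and $\mathrm{diag}(\gamma,\gamma)$ with $\gamma\in\{\bbI,\Omega\}$ are real matrices, your $T$ and $C$ always \emph{commute}, whereas the requirement $S^2=(TC)^2=\bbI$ forces $TC=\varepsilon_T\varepsilon_C\,CT$, i.e.\ anticommutation exactly when $\varepsilon_T\varepsilon_C=-1$ --- which is DIII ($\varepsilon_T=-1$, $\varepsilon_C=+1$) and CI ($\varepsilon_T=+1$, $\varepsilon_C=-1$). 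Concretely, with your forms one computes $S^2=-\bbI$ in both classes, so no admissible pair $(T,C)$ has those shapes, and ``intersect the two individual constraints'' rests on an inconsistent premise there. The paper repairs exactly this point by choosing genuine simultaneous normal forms containing factors of $\ri$: $C=\ri\cK$ in class DIII, and $T=\begin{pmatrix}0&\ri\cK\\ \ri\cK&0\end{pmatrix}$ in class CI. These satisfy the correct algebra and still yield the standard constraints ($\overline U=U$ and $U^T=-U$ for DIII; $U^T=U$ and $U^T\Omega U=\Omega$ for CI), so your final identification of the classifying spaces is right, but only after this nontrivial fix --- which is precisely the obstacle you flagged without resolving. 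A smaller caveat: even among admissible simultaneous forms the resulting matrix conditions are only canonical up to diffeomorphism; for CII the paper (arguing via $S$ and $C$) obtains the Hermitian symplectic unitaries $\Sp(n)\cap\cS_{2n}(\C)$, while your (compatible) CII forms give antisymmetric symplectic unitaries via $T$ and $C$. The two sets are identified by $U\mapsto\Omega U$, so both realize $\bigcup_{k}\Sp(n)/\Sp(k)\times\Sp(n-k)$, but ``on the nose'' equality of defining relations is, as you anticipated, not automatic.
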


\begin{table}[htb] \label{table:main}
    \begin{tabular}{|c|c|c|c||c||c|}
        \hline
        Cartan label & $T$ & $C$ & $S$ & Classifying space &  $\Index^\Gamma[U]$\\
        \hline
        \hline
        \hyperref[sec:class_A]{A} & 0 & 0 & 0 &  $\U(N)$ & $0$ \\
        \hline
        \hyperref[sec:class_AIII]{AIII} & 0 & 0 & 1 & $\bigcup_{k=0}^N \U(N) / \U(k) \times \U(N -k)$ & $\dim \ker (U-1) \in \{0,\ldots, N\}$ \\
        \hline
        \hline
        \hyperref[sec:class_AI]{AI} & 1 & 0 & 0  & $\U(N)/\O(N)$ & 0 \\
        \hline 
        \hyperref[sec:class_BDI]{BDI} & 1 & 1 & 1 & $\bigcup_{k=0}^N \O(N) / \O(k) \times \O(N-k)$ & $\dim \ker (U-1) \in \{0,\ldots, N\}$ \\
        \hline 
        \hyperref[sec:class_D]{D} & 0 & 1 & 0 & $\O(N)$ & $\det (U) \in \{\pm 1\}$ \\
        \hline
        \hyperref[sec:class_DIII]{DIII} & -1 & 1 & 1 & $\O(2n)/ \U(n)$ & $\mathrm{Pf} (U) \in \{\pm 1\}$ \\
        \hline
        \hyperref[sec:class_AII]{AII} & -1 & 0 & 0 & $\U(2n) / \Sp(n)$ & 0 \\
        \hline 
        \hyperref[sec:class_CII]{CII} & -1 & -1 & 1 & $ \bigcup_{k=0}^n \Sp(n)/\Sp(k) \times \Sp(n-k)$ & $\dim \ker (U-1) \in \{0,\ldots, N\}$\\
        \hline
        \hyperref[sec:class_C]{C} & 0 & -1 & 0 & $ \Sp(n)$ & 0 \\
        \hline
        \hyperref[sec:class_CI]{CI} & 1 & -1 & 1 & $\Sp(n)/\U(n)$ & 0 \\
        \hline
    \end{tabular}
    \caption{Classification of Lagrangian planes satisfying the $T$, $C$ and/or $S$ symmetries ($0$ when absent or $\pm1$ for its square when present). In the last five classes, we further assume that $N = 2n$ is also even. For each class $\Gamma$ we give the corresponding classifying space $U_\Gamma$ and an index $\Index^\Gamma$ which distinguishes possible multiple connected components.}
    \label{tab:classification}
\end{table}

We recall for convenience the ten Cartan's labels and the corresponding subgroups of unitaries in Table~\ref{tab:classification}. In this table, we also indicate the number of connected components, and an index which labels these connected components. This index has the property that $\Index^\Gamma(U_1) = \Index^\Gamma(U_2)$ iff $U_1$ and $U_2$ are in the same connected components of $U_\Gamma$. 
The proof of Theorem~\ref{thm:classification} is given in Section~\ref{sec:classification} below. For each class, we identify the corresponding expression for $\Index^\Gamma$.

\medskip

As we will clarify in the next section, our classification can be read in two different ways in the context of condensed matter: it classifies $d = 1$ dimensional bulk operators, and $d = 0$ dimensional edge properties. The table coincides with the usual $d=1$ column of the periodic table of topological insulators. However, along the proof we relate symmetric Lagrangian planes and their associated unitaries with the $d=0$ column of topological insulators from \cite{GonMonPer-22}, but where the symmetry is shifted by one row. This shift illustrates the celebrated Bott periodicity~\cite{Bot-56,Bot-56a}, which is an essential property of the periodic table \cite{RyuSchFur-10}. We refer to~\cite{Gir-25} for a proof of Bott periodicity using the Lagrangian formalism that we presented.


\subsection{Main results for operators}
\label{sec:intro:junction}

We now use our classification of Lagrangian planes in the context of condensed matter, using the relationship between these objects.

\subsubsection{Bulk operators}
Recall that our assumptions on the initial differential operator are very loose, so it is unclear yet what we really mean by <<bulk>> operators. In what follows, we say that $H$ is a {\bf bulk operator} (at energy $E$) if $H$ is self-adjoint and if $E \notin \sigma(H)$. The set of bulk operators has a natural topology, induced for instance by the distance
\[
    {\rm dist}(H_0, H_1) := \left\| (E - H_0)^{-1} - (E - H_1)^{-1} \right\|_{\rm op}.
\]
We say that $H_0$ and $H_1$ are path--connected if there is a path $(H_s)_{s \in [0, 1]}$ for this topology with $H_{s = 0} = H_0$ and $H_{s = 1} = H_1$. In particular, it implies that the {\em gap} does not close: $E \notin H_s$ for all $s \in [0, 1]$. We say that these operators are path connected in the Cartan class $\Gamma$ if the path can be chosen so that $H_s$ is in the symmetric class $\Gamma$ for all $s \in [0, 1]$. In particular, $H_0$ and $H_1$ are in the same symmetric class $\Gamma$.

\medskip

First, we record a simple Lemma, which connects the symmetries of the operators with the symmetries of the corresponding Lagrangian planes.
\begin{lemma}
    Let $H$ be a bulk operator at energy $E$ in the symmetric class $\Gamma$, with $E = 0$ if $\Gamma$ has a $C$ and/or $S$ symmetry. Then $\ell_E^\pm \in \Lambda_\Gamma$.
\end{lemma}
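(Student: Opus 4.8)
The plan is to connect the three operator symmetries $T$, $C$, $S$ to the invariance of the Lagrangian planes $\ell_E^\pm$ under the corresponding boundary operators $T^{\otimes p}$, $C^{\otimes p}$, $S^{\otimes p}$. Recall that $\ell_E^\pm = \Tr(\cS_E \cap L^2(\R^\pm,\C^M))$, and that by Lemma~\ref{lem:lEpm_are_Lagrangian_planes} these are genuine Lagrangian planes since $E \notin \sigma(H)$. The key observation is that each symmetry maps solutions of $\cL\psi = E\psi$ to solutions of an eigenvalue equation at a (possibly different) energy, while preserving the $L^2$ integrability at $\pm\infty$ (because the symmetries act only on the internal $\C^M$ factor and are norm-preserving up to complex conjugation). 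So the whole argument reduces to checking, symmetry by symmetry, that the image solution set is again the eigenspace at $E$ restricted to the same half-line.

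First I would treat the $T$-symmetry. If $HT = TH$ and $T$ is anti-unitary, then for $\psi \in \cS_E \cap L^2(\R^+)$ we have $H(T\psi) = T(H\psi) = T(E\psi) = \overline{E}\, T\psi = E\, T\psi$ since $E \in \R$. Moreover $\|T\psi(t)\|_{\C^M} = \|\psi(t)\|_{\C^M}$ pointwise, so $T\psi \in L^2(\R^+)$ as well. Because $T^{\otimes p}$ commutes with the evaluation map $\Tr$ (both act only on internal indices and $\Tr$ takes pointwise derivatives at $0$, which are unaffected by the constant-in-$t$ operator $T$), it follows that $T^{\otimes p}(\ell_E^+) = \ell_E^+$, i.e. $T\ell_E^+ = \ell_E^+$ with the notation of the lemma, and identically for $\ell_E^-$. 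The hypotheses on $(\varepsilon_T, J)$ from the previous lemma then guarantee this $T$-invariance is precisely the condition defining $\Lambda_\Gamma$ in the relevant Cartan class.

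Next I would handle $C$ and $S$ together, where the crucial point is the energy shift. Since $HC = -CH$, we get $H(C\psi) = -C(H\psi) = -E\, C\psi$, so $C$ maps $\cS_E$ to $\cS_{-E}$; likewise $S$ maps $\cS_E$ to $\cS_{-E}$ because $HS = -SH$. This is exactly why the statement requires $E = 0$ whenever a $C$ and/or $S$ symmetry is present: only at $E = 0$ does $\cS_{-E} = \cS_E$, so that $C\psi$ and $S\psi$ lie back in $\cS_0 \cap L^2(\R^+)$ (integrability is again preserved since $C$ is anti-unitary and $S$ unitary, both norm-preserving pointwise). Intersecting with $L^2(\R^+)$ is unchanged, so $C\ell_0^\pm = \ell_0^\pm$ and $S\ell_0^\pm = \ell_0^\pm$. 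When both $T$ and $C$ are present, $S = TC$ is automatically respected once $T$ and $C$ are, so no separate verification is needed.

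The step I expect to require the most care is the commutation of the boundary symmetry $T^{\otimes p}$ (and $C^{\otimes p}$, $S^{\otimes p}$) with the trace operator $\Tr$, together with the claim that integrability at $+\infty$ is preserved. The former is essentially bookkeeping: $\Tr(\phi) = (\phi(0),\ldots,\phi^{(p-1)}(0))$ and differentiation in $t$ commutes with the $t$-independent internal operator, so $\Tr(T\phi) = T^{\otimes p}\Tr(\phi)$, but one should state it cleanly so that invariance of the solution space transfers to invariance of its image in $(\C^M)^p$. The latter is immediate from pointwise norm-preservation, yet it is the conceptual heart of why edge symmetries inherit from bulk ones. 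Once these are in place, the conclusion $\ell_E^\pm \in \Lambda_\Gamma$ follows by matching the derived invariances $T\ell_E^\pm = \ell_E^\pm$, $C\ell_0^\pm = \ell_0^\pm$, $S\ell_0^\pm = \ell_0^\pm$ against the definition of a Lagrangian plane respecting the $T$, $C$ and/or $S$ symmetries.
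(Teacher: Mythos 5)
Your proof is correct and follows essentially the same route as the paper: the symmetries act only on internal degrees of freedom, hence preserve square-integrability on each half-line and map $\cS_E$ to $\cS_{\pm E}$ according to whether they commute or anti-commute with $H$, giving $T\ell_E^\pm = \ell_E^\pm$ while $C$ and $S$ send $\ell_E^\pm$ to $\ell_{-E}^\pm$, which forces the restriction to $E=0$ in those classes. Your additional care with the commutation of the internal symmetries with the evaluation map $\Tr$ and with pointwise norm-preservation simply makes explicit what the paper compresses into one sentence.
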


\begin{proof}
    This comes from the fact that the $T$, $C$, and $S$ symmetries only acts on the internal degrees of freedom. In particular, $\Psi$ is square integrable at $\pm \infty$ iff $T \Psi$, $C \Psi$ and/or $S \Psi$ are square integrable at $\pm \infty$. In addition, if $H \Psi = E \Psi$, we have
    \[
        H(T \Psi) = E(T \Psi), \quad H(C \Psi) = - E (C \Psi) , \quad H(S \Psi) = - E (S \Psi).
    \]
    This proves that $T \ell_E^\pm = \ell_E^\pm$, $C \ell_E^\pm = \ell_{-E}^\pm$ and $S \ell_E^\pm = \ell_{-E}^\pm$.
\end{proof}
In what follows, when $C$ or $S$ is present we shall always focus at $E=0$. Following~\cite{Gon-23}, we define the bulk/edge index of {\em any} bulk operator as follows.

\begin{definition}[Bulk/edge index]
    For a bulk operator $H$ in the Cartan class $\Gamma$, we define its bulk/edge index by
    \[
        \Index^\Gamma(H) := \Index^\Gamma(\ell_E^+) :=  \Index^\Gamma(U_E^+),
    \]
    where $U_E^+$ is the unitary constructed from $\ell_E^+$ via Leray's theorem.
\end{definition}
Although this index looks like an edge quantity, it really depends on the bulk properties of $H$. In particular, we did not impose any boundary condition at $t = 0$, and the sets $\ell_E^+$ and $\cS_E^+$ really depends on the behaviour of $H$ on the full right half line. We could also have used the full left half line. We will see below that $\Index^\Gamma(\ell_E^+)$ and $\Index^\Gamma(\ell_E^-)$ are related. With this, we can first recall the classification of bulk operators~\cite{Kit-09, RyuSchFur-10}.
\begin{proposition}[Classification of bulk operators]
    If $H_0$ and $H_1$ are bulk operators at energy $E$, which are path--connected in the Cartan class $\Gamma$. Then $\Index^\Gamma(H_0) = \Index^\Gamma(H_1)$.
\end{proposition}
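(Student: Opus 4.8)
The plan is to show that the map $H \mapsto U_E^+$ is continuous for the resolvent topology on bulk operators, and then to invoke the property recorded after Theorem~\ref{thm:classification} that $\Index^\Gamma$ is constant exactly on the connected components of $U_\Gamma$. Given a path $(H_s)_{s\in[0,1]}$ in the Cartan class $\Gamma$, this continuity produces a continuous curve $s \mapsto U_E^+(s)$ in $\U(N)$. By the preceding Lemma, the fact that each $H_s$ lies in class $\Gamma$ forces $\ell_E^+(s) \in \Lambda_\Gamma$, hence $U_E^+(s) \in U_\Gamma$ for every $s$. Thus $s\mapsto U_E^+(s)$ is a continuous path lying entirely in $U_\Gamma$; since the continuous image of the connected interval $[0,1]$ is connected, $U_E^+(0)$ and $U_E^+(1)$ lie in the same connected component of $U_\Gamma$, so $\Index^\Gamma(H_0)=\Index^\Gamma(U_E^+(0))=\Index^\Gamma(U_E^+(1))=\Index^\Gamma(H_1)$. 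For the five classes whose symmetric space $U_\Gamma$ is connected the conclusion is immediate; the content is entirely in the classes with nontrivial index, where continuity is what prevents the integer from jumping.

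Everything therefore reduces to the continuity of $H\mapsto U_E^+$, which is the heart of the matter. The difficulty is that the index is built from the \emph{spatial} decay of solutions of $\cL\psi=E\psi$, an object not manifestly controlled by the abstract resolvent distance. The bridge I would use is the resolvent kernel (Green's function) $G_s(t,t')=[(H_s-E)^{-1}](t,t')$. Since $E\notin\sigma(H_s)$ and the gap is an open condition, a compactness argument on $[0,1]$ gives a uniform spectral gap along the path, whence a Combes--Thomas estimate furnishes a uniform bound $|G_s(t,t')|\le C\,\re^{-\gamma|t-t'|}$ on a neighborhood of the path. Fixing a base point $t'<0$, the map $t\mapsto G_s(t,t')$ then solves $\cL\psi=E\psi$ for all $t>t'$ and is square integrable at $+\infty$, so the Cauchy data $\big(G_s(0,t'),\partial_t G_s(0,t'),\dots,\partial_t^{p-1}G_s(0,t')\big)$ lie in $\ell_E^+(s)$.

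First I would prove that, as the inhomogeneity ranges over $\C^M$ (and $t'$ over a small interval below $0$), these Weyl solutions span the full space $\cS_E\cap L^2(\R^+,\C^M)$, so that they determine $\ell_E^+(s)\subset\C^{2N}$ rather than a proper subspace; this is the nondegeneracy point I expect to require the most care. Next I would argue that continuity of $s\mapsto(H_s-E)^{-1}$ in operator norm, combined with the uniform exponential decay, yields local uniform continuity of the kernels $G_s$ and of their finitely many $t$-derivatives at $0$ — this passage from norm-resolvent convergence to kernel convergence is the second delicate step. Rather than selecting individual basis vectors, which cannot be done continuously in general, I would phrase continuity at the level of the orthogonal projector onto $\ell_E^+(s)$, a canonical object that inherits continuity directly from the Green's function; continuity of $U_E^+(s)$ then follows from the explicit formula of Leray's Theorem~\ref{th:Leray} applied to the fixed splitting $\C^{2N}=K_+\oplus K_-$. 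Once these two analytic steps are secured, the remaining algebra and the connectedness conclusion of the first paragraph are routine.
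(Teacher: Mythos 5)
Your proposal follows the same skeleton as the paper's (sketched) proof: associate to the path $(H_s)$ a path of Lagrangian planes $\ell_{E,s}^+$, hence via Theorem~\ref{th:Leray} a continuous path of unitaries $U_{E,s}^+$ inside the Cartan space $U_\Gamma$, and conclude by connectedness of components that the index cannot jump. Where you differ is in substance rather than strategy: the paper merely \emph{asserts} the continuity of $s\mapsto \ell_{E,s}^+$ (and of $s\mapsto\omega_s$), remarking that a fully general proof would be tedious, whereas you propose a concrete analytic mechanism for it --- uniform spectral gap along the path by compactness and norm-resolvent continuity, Combes--Thomas decay of the Green's function $G_s(t,t')$, identification of $\ell_{E,s}^+$ with the span of the Cauchy data of the Weyl solutions $t\mapsto G_s(t,t')v$, and continuity phrased at the level of the orthogonal projector onto $\ell_{E,s}^+$ rather than of individual basis vectors. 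This is a genuine strengthening of the published argument, and the two steps you single out as delicate are indeed where the work lies; both are doable (the spanning claim follows from a representation of $G_s(t,t')$, for $t>t'$, in terms of a basis of solutions decaying at $+\infty$, together with uniqueness for the Cauchy problem, and the passage from norm-resolvent to kernel convergence with $p-1$ derivatives follows from ODE/elliptic regularity with bounds uniform in $s$).

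One point, however, is genuinely missing. You invoke Leray's theorem ``applied to the fixed splitting $\C^{2N}=K_+\oplus K_-$'', i.e.\ you implicitly assume the symplectic form $\omega_s$ is constant along the path. The proposition allows deformations of the differential operator itself, not only of a potential, and then $\omega_s$ does vary: for non-homogeneous media $\cL_s=-\partial_t\left(A_s(t)\partial_t\,\cdot\right)+V_s(t)$ the matrix $J_s$ involves $A_s(0)$, and for tight-binding models it involves the hopping matrix $a_0$. The paper explicitly lists the continuity of $s\mapsto\omega_s$ as part of the argument. As written, your proof covers only paths along which the boundary symplectic structure is fixed (Schrödinger or Dirac paths where only $V$ varies, say); for the full statement you must additionally show that $s\mapsto J_s$ --- equivalently the splitting $K_\pm(s)$ and the positive weights $A_\pm(s)$ of~\eqref{eq:def:J_A+_A-} --- depends continuously on $s$, and feed this varying data into Leray's correspondence before running the connectedness argument. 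This is a routine but necessary supplement.
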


\begin{proof}
    We only sketch the proof. If $H_s$ is a path connecting $H_0$ and $H_1$ in the class $\Gamma$, then $H_s$ defines a symplectic space $(\C^{Mp}, \omega_s)$ for each $s \in [0, 1]$, and Lagrangian planes $\ell_{E, s}^+$. In addition, the map $s \mapsto \omega_s$ is continuous for the topology of bilinear forms, and the map $s \mapsto \ell_{E, s}^+$ is continuous for the topology of vector spaces in $\C^{Mp}$. In particular, the corresponding unitaries $U_{E, s}^+$ given by Leray's theorem is a continuous family of unitaries in the Cartan symmetric space $U_\Gamma$, hence stays in the same connected component of $U_\Gamma$. So $U_{E,0}$ and $U_{E, 1}$ are path--connected, hence have the same index.
\end{proof}

A general proof that $s \mapsto \omega_s$ is continuous would be rather tedious, but in practice, the maps $\omega_s$ are explicit, and continuity can be checked directly. For instance, in the context of Schrödinger operators, if one changes the potential $V_0$ into $V_1$ continuously, then $\omega_s = \omega$ is independent of $V_s$ along the path, see Example~\ref{ex:schrodinger}.

\subsubsection{Junctions}

We now turn to the question whether interface modes appear at the junction of two bulk materials in the same symmetry class, but with different indices. More specifically, we consider two materials described by two differential operators $\cL^L$ and $\cL^R$ of order $p$, and consider the {\bf junction} differential operator $\cL^\sharp$ defined by
\[
    \forall \phi \in C^\infty_0(\R), \qquad (\cL^\sharp \phi)(x) := \begin{cases}
        (\cL^L \phi)(x), \quad \text{if} \quad x < 0, \\
        (\cL^R \phi)(x), \quad \text{if} \quad x \ge 0.
    \end{cases}
\]
This operators somehow describes a {\em hard truncation}, or {\em hard junction} between the left and right operators. The hard truncation has the advantage to immediately infer that, with obvious notation,
\begin{equation} \label{eq:ell_Esharp=ell_ER}
    \ell_E^{\sharp, +} = \ell_E^{R, +} \quad \text{and} \quad \ell_E^{\sharp, -} = \ell_E^{L, -}.
\end{equation}
However our framework also allows to handle continuous junctions, see Section~\ref{ssec:smooth_junctions} below.

\medskip
We make the following assumptions:
\begin{itemize}
    \item the differential operators $\cL^L$, $\cL^R$ and $\cL^\sharp$ define self-adjoint operators $H^L$, $H^R$ and $H^\sharp$ with the same domain $H^p(\R, \C^M)$; 
    \item the operators $H^L$ and $H^R$ are in the symmetry class $\Gamma$, and are bulk operators at energy $E$ (in the presence of $C$-- and/or $S$--symmetry, we take $E = 0$).
\end{itemize}
In particular, $Mp = 2N$ is even, see Remark~\ref{rem:Mp_is_even} above. The fact that $H^\sharp$ is self-adjoint is, actually, quite a strong assumption. It implies in particular that the symplectic forms $\omega_L$ and $\omega_R$ constructed from $H^L$ and $H^R$ coincide. Indeed, we have, for $\phi, \psi \in C^\infty_0(\R, \C^M)$,
\begin{align}
    0 & = \int_\R \left( \bra \phi, H^\sharp \psi \ket_{\C^M} - \bra  H^\sharp \phi, \psi \ket_{\C^M} \right)(t) \rd t \nonumber \\
    & = \int_{\R^-} \left( \bra \phi, H^L \psi \ket_{\C^M} - \bra H^L  \phi, \psi \ket_{\C^M} \right)(t) \rd t + \int_{\R^+} \left( \bra \phi, H^R \psi \ket_{\C^M} - \bra H^R \phi,  \psi \ket_{\C^M} \right)(t) \rd t \nonumber \\
    & = - \omega_L(\Tr(\phi), \Tr(\psi)) + \omega_R(\Tr(\phi), \Tr(\psi)). \label{eq:omegaL_omegaR}
\end{align}
In the last line, we used a similar computation with $H^L$ to prove that the integral on $\R^-$ equals the one on $\R^+$, up to a minus sign. In other words, the self-adjointness of $H^\sharp$ ensures that one cannot make a junction between two unrelated systems.

\medskip

We can now state our main theorem about junctions. In order to state it, we distinguish the cases of class $\rD$ and $\rDIII$ (where the index in valued in $\Z^2$--valued), and the cases of class $\rAIII$, $\rBDI$ and $\rCII$, where the index is valued in $\{ 0, \cdots, N\}$ with $N := \frac12 Mp$. In all these classes, the $C$ and/or $S$ symmetry is present, so we only focus on the energy $E = 0$.

\begin{theorem}[Bulk-boundary correspondence for junctions] \label{thm:BEC} Under the previous assumptions, we have
    \begin{itemize}
        \item {\bf (Classes $\rD$ and $\rDIII$)}. If $\Index^\Gamma(H^L) \neq \Index^\Gamma(H^R)$, then  $\dim \ \Ker (H^\sharp) \ge 1$.
        \item {\bf (Classes $\rAIII$, $\rBDI$ and $\rCII$)}. $\dim \ \Ker (H^\sharp) \ge \left| \Index^\Gamma(H^R) - \Index^\Gamma(H^L) \right|$.
    \end{itemize}
\end{theorem}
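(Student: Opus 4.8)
The plan is to reduce the kernel of $H^\sharp$ to an intersection of two Lagrangian planes, convert that intersection into the kernel of a difference of two unitaries via Leray's theorem, and finally bound this kernel using the symmetry structure of the Cartan class. First I would compute $\dim\Ker(H^\sharp)$. A global $L^2$ solution of $\cL^\sharp\psi=0$ is an $H^p$ function solving $\cL^L\psi=0$ on $\R^-$ and $\cL^R\psi=0$ on $\R^+$ with matching traces at $t=0$, and it is square integrable iff it is so at both ends; hence the trace bijection~\eqref{eq:def:Tr} and the characterization~\eqref{eq:def:ell_E} give
\[
    \dim\Ker(H^\sharp) = \dim\left( \ell_0^{\sharp,+}\cap\ell_0^{\sharp,-}\right) = \dim\left( \ell_0^{R,+}\cap\ell_0^{L,-}\right),
\]
using the hard-junction identity~\eqref{eq:ell_Esharp=ell_ER}. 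Crucially, the self-adjointness of $H^\sharp$ forces $\omega_L=\omega_R$ by~\eqref{eq:omegaL_omegaR}, so $\ell_0^{R,+}$ and $\ell_0^{L,-}$ are planes of one and the same symplectic space; they are Lagrangian (not merely isotropic) because $0\notin\sigma(H^R)\cup\sigma(H^L)$, by Lemma~\ref{lem:lEpm_are_Lagrangian_planes}.

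Next I would pass to unitaries. Writing $\ell_0^{R,+}=\ell_{U_R^+}$ and $\ell_0^{L,-}=\ell_{U_L^-}$ through Theorem~\ref{th:Leray}, a vector lies in the intersection iff its first block $x$ satisfies $(U_R^+-U_L^-)\sqrt{A_+}\,x=0$, and since $\sqrt{A_+}$ is invertible,
\[
    \dim\Ker(H^\sharp) = \dim\Ker\left( U_R^+ - U_L^- \right).
\]
The point is then to control the unitary $U_L^-$ attached to the \emph{left} minus-plane by the index $\Index^\Gamma(H^L)=\Index^\Gamma(U_L^+)$, which is read off the plus-plane. Here I would use the transversality $\ell_0^{L,+}\oplus\ell_0^{L,-}=\C^{2N}$ of Lemma~\ref{lem:lEpm_are_Lagrangian_planes}, equivalently that $U_L^+-U_L^-$ is invertible, together with the class structure of Theorem~\ref{thm:classification}. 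In the chiral classes $\rAIII$, $\rBDI$, $\rCII$ the relevant unitaries are (Hermitian, real-symmetric, quaternionic) involutions, and a short argument on the two orthogonal projections onto their $(+1)$-eigenspaces shows that invertibility of $U_L^+-U_L^-$ forces the two eigenspace dimensions to be complementary, i.e. $\dim\Ker(U_L^- - 1) = N - \dim\Ker(U_L^+ - 1)$. In classes $\rD$ and $\rDIII$ the same transversality, now read as $1\notin\mathrm{spec}\big((U_L^-)^{-1}U_L^+\big)$, pins down the determinant (resp. Pfaffian) of $U_L^-$ relative to that of $U_L^+$ through an eigenvalue-parity count, giving $\det(U_L^-)=(-1)^N\det(U_L^+)$ in class $\rD$ and the analogous Pfaffian relation in class $\rDIII$.

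Finally I would assemble the bound. For $\rAIII$, $\rBDI$, $\rCII$, writing $P,Q$ for the projections onto $\Ker(U_R^+ - 1)$ and $\Ker(U_L^- - 1)$ and using $\Ker(P-Q)=(\Ran P\cap\Ran Q)\oplus(\Ker P\cap\Ker Q)$ together with the complementarity above, the two elementary dimension inequalities for intersections of subspaces combine to
\[
    \dim\Ker\left( U_R^+ - U_L^- \right) \ge \left| \Index^\Gamma(H^R) - \Index^\Gamma(H^L)\right|,
\]
which is the claimed bound. For $\rD$ and $\rDIII$, if the two $\Z_2$ indices differ then the determinant (Pfaffian) relation makes $(U_L^-)^{-1}U_R^+$ an orthogonal (antisymmetric-orthogonal) matrix whose spectrum, by the same parity count, must contain the eigenvalue $1$; hence $\dim\Ker(U_R^+-U_L^-)\ge1$.

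The main obstacle I anticipate is the third step in the real and quaternionic classes. The projection argument settles the complex chiral case $\rAIII$ immediately, but transferring the complementarity and the eigenvalue-parity and Pfaffian counts to the orthogonal settings ($\rBDI$, $\rD$), the antisymmetric-orthogonal setting ($\rDIII$), and the symplectic setting ($\rCII$) requires choosing the Leray basis compatibly with the symmetry of Theorem~\ref{thm:classification} and carefully tracking the mod-$2$ eigenvalue multiplicities that the determinant and the Pfaffian encode.
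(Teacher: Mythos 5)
Your overall route coincides with the paper's: reduce $\dim\Ker(H^\sharp)$ to $\dim(\ell_0^{R,+}\cap\ell_0^{L,-})$ via the hard-junction identity, convert to unitaries through Leray's theorem (your $\dim\Ker(U_R^+-U_L^-)$ equals the paper's $\dim\Ker\bigl(U_0^{R,+}(U_0^{L,-})^*-1\bigr)$ from Lemma~\ref{lem:crossing_lagrangian_with_unitaries}, since $U_R^+-U_L^-=(U_R^+(U_L^-)^*-1)U_L^-$), and use transversality $\ell_0^{L,+}\oplus\ell_0^{L,-}=\C^{2N}$ to express the index of the left minus-plane through that of the left plus-plane; this is exactly the role of the paper's Lemmas~\ref{lem:crossing_lagrangian_AIII} and~\ref{lem:ell+_ell-_bulkCase} and their class-$\rD$/$\rDIII$ analogues. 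Your projection argument for the chiral classes is correct and is essentially the paper's proof of Lemma~\ref{lem:crossing_lagrangian_AIII}; note also that it uses only the hermitian-involution structure of the unitaries, so it applies verbatim to $\rBDI$ and $\rCII$ — the extra real/quaternionic structure plays no role in the inequality, and the obstacle you anticipate there is not a real one. Class $\rD$ likewise matches the paper's eigenvalue-parity argument.

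The genuine gap is class $\rDIII$. You claim the conclusion follows ``by the same parity count,'' but no eigenvalue-parity count can work there: both $U_0^{R,+}$ and $(U_0^{L,-})^T$ lie in $\O(2n)\cap\cA^\R_{2n}(\R)$, so each has determinant $\Pf(\cdot)^2=+1$, and their product is an orthogonal matrix of determinant $+1$ in even dimension — precisely the situation in which the class-$\rD$ lemma gives no information (a planar rotation by $\theta\neq 0$ has determinant $1$ and no eigenvalue $1$). Moreover the product of two antisymmetric orthogonal matrices is orthogonal but in general \emph{not} antisymmetric, contrary to your parenthetical, so one cannot even speak of its Pfaffian. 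Detecting the relative sign of the two Pfaffians requires a genuinely Pfaffian-specific input: the paper supplies this as Lemma~\ref{lem:Pfaffian_identity}, proved via the identity $\Pf(A)\Pf(B)=\exp\bigl(\tfrac12\Tr\log(A^TB)\bigr)$ of~\eqref{eq:Pfaffian_identity} — if $1\notin\sigma(AB)$ then $-1\notin\sigma(A^TB)$, the logarithm is well defined, conjugate eigenvalue pairs cancel in the trace, and one gets $\Pf(A)\Pf(B)=1$. Both the bulk relation $\Index^\rDIII(\ell_0^+)=(-1)^n\Index^\rDIII(\ell_0^-)$ and the final junction step rest on this lemma; without it (or an equivalent substitute), your $\rDIII$ case remains unproven.
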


The right--hand side of the inequalities can be identified with a {\em relative} index, of the form $\Index^\Gamma(H^L, H^R)$, defined in a straightforward manner depending on the class. In many physical situations, this relative index is more relevant than the absolute one from the previous sections. For example, the index value for chiral chains may depend on the choice of unit cell. By nature, the index for class $\rAIII$ is only relative, and can only compare two models \cite{ProSch-16}. More recently, it has been shown that, in dimension $d=2$ and for the classes $\rA$ or $\rD$, an absolute index may not exist for some operators with unbounded spectra, and only a relative index between two operators makes sense, see~\cite{GrafJudTau-21,Bal-22,RosTar-24}.

\medskip

Theorem~\ref{thm:BEC} states that if $H^L$ and $H^R$ have different indices, then the junction between $H^L$ to $H^R$ must have at least this relative index number of zero modes at the junction. We call these modes the {\bf protected edge modes}. Note that the inequality is large: the junction Hamiltonian $H^\sharp$ may have additional (unprotected) zero modes. 

\medskip

Theorem~\ref{thm:BEC}  is proved in Section~\ref{sec:proof:junctions}. The main tools that we use are summed up in the following Lemma.

\newpage

\begin{lemma} \label{lem:crossing_lagrangian_with_unitaries} ~
	\begin{enumerate}
		\item Let $\ell_A$ and $\ell_B$ be two Lagrangian planes in $(\C^{2N}, \omega)$, with corresponding unitaries $U_A$ and $U_B$ in $\U(N)$. Then
		\[
		\dim (\ell_A \cap \ell_B) = \dim \, \Ker (U_A U_B^* - 1).
		\]
		\item With the same notation as before, and with obvious notation, we have
		\[
		\dim \left( H^\sharp - E \right) = \dim \left( \ell^{R, +}_E \cap \ell^{L, -}_E \right) =  \dim \, \Ker (U_E^{R, +} \left(U_E^{L, -} \right)^* - 1).
		\]
	\end{enumerate}
\end{lemma}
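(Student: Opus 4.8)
The plan is to prove the two parts in sequence, with part (2) following almost immediately from part (1) together with Leray's theorem and the structural identities already established.

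\emph{Part (1).} The key is to translate the intersection $\ell_A \cap \ell_B$ into a kernel condition on the unitaries. Using the block basis \eqref{eq:def:J_A+_A-} in which $K_+ = \begin{pmatrix} \C^N \\ 0 \end{pmatrix}$ and $K_- = \begin{pmatrix} 0 \\ \C^N \end{pmatrix}$, Theorem~\ref{th:Leray} writes a generic element of $\ell_A$ as $\begin{pmatrix} x \\ \sqrt{A_-}^{-1} U_A \sqrt{A_+}\, x \end{pmatrix}$ and similarly for $\ell_B$ with some $y \in \C^N$. A vector lies in $\ell_A \cap \ell_B$ precisely when the two parametrizations coincide, i.e.\ when $x = y$ and $\sqrt{A_-}^{-1} U_A \sqrt{A_+}\, x = \sqrt{A_-}^{-1} U_B \sqrt{A_+}\, x$. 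Since $\sqrt{A_-}^{-1}$ is invertible and $\sqrt{A_+}$ is invertible, this is equivalent to $U_A \sqrt{A_+}\, x = U_B \sqrt{A_+}\, x$. Setting $w := \sqrt{A_+}\, x$, which ranges over all of $\C^N$ as $x$ does, the condition becomes $U_A w = U_B w$, that is $(U_B^* U_A - 1) w = 0$, equivalently $(U_A U_B^* - 1)(U_B w) = 0$. Because $x \mapsto w$ is a linear bijection and $w \mapsto U_B w$ is unitary, the intersection $\ell_A \cap \ell_B$ is in linear bijection with $\Ker(U_A U_B^* - 1)$, giving the claimed equality of dimensions.

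\emph{Part (2).} Here I would simply combine the hard-junction identity \eqref{eq:ell_Esharp=ell_ER}, which gives $\ell_E^{\sharp, +} = \ell_E^{R, +}$ and $\ell_E^{\sharp, -} = \ell_E^{L, -}$, with the general formula from \eqref{eq:def:ell_E} expressing the kernel dimension as an intersection. Indeed, $\dim \Ker(H^\sharp - E) = \dim \ell_E^\sharp = \dim(\ell_E^{\sharp, +} \cap \ell_E^{\sharp, -}) = \dim(\ell_E^{R, +} \cap \ell_E^{L, -})$. The self-adjointness of $H^\sharp$ guarantees, via \eqref{eq:omegaL_omegaR}, that $\omega_L = \omega_R$, so that $\ell_E^{R, +}$ and $\ell_E^{L, -}$ live in the \emph{same} symplectic space $(\C^{2N}, \omega)$ and Lemma~\ref{lem:lEpm_are_Lagrangian_planes} applies to make both of them genuine Lagrangian planes. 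Applying part (1) with $\ell_A = \ell_E^{R, +}$ and $\ell_B = \ell_E^{L, -}$, whose associated unitaries are $U_E^{R, +}$ and $U_E^{L, -}$, yields the final identity with $\Ker\bigl(U_E^{R, +}(U_E^{L, -})^* - 1\bigr)$.

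The main obstacle I anticipate is bookkeeping in part (1): one must handle the weights $\sqrt{A_+}$ and $\sqrt{A_-}^{-1}$ carefully and verify that the passage from the intersection to the kernel is a genuine linear isomorphism (injective and surjective) rather than a mere dimension inequality. The substitution $w = \sqrt{A_+}\, x$ and the insertion of the unitary $U_B$ are what make the correspondence exact; the only subtlety is ensuring that the same $x = y$ must be used for both planes, which forces the first components to agree before comparing the second. The implicit prerequisite throughout is that $\ell_E^{R, +}$ and $\ell_E^{L, -}$ are defined with respect to a \emph{common} $\omega$ and hence a common $J$, $A_+$, $A_-$; this is exactly what the self-adjointness computation \eqref{eq:omegaL_omegaR} secures, so no additional hypothesis is needed.
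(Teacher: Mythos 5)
Your proposal is correct. For part (2) you follow exactly the paper's route: the hard-junction identities \eqref{eq:ell_Esharp=ell_ER} combined with $\dim \Ker(H^\sharp - E) = \dim(\ell_E^{\sharp,+} \cap \ell_E^{\sharp,-})$ from \eqref{eq:def:ell_E}, and then part (1). The difference lies in part (1): the paper does not prove it, dismissing it as standard with a citation to \cite{Gon-23}, whereas you supply a complete self-contained argument from Leray's parametrization --- matching first components to force $x = y$, cancelling the invertible weight $\sqrt{A_-}^{-1}$, substituting $w = \sqrt{A_+}\,x$, and inserting $U_B$ to pass from $\Ker(U_A - U_B)$ to $\Ker(U_A U_B^* - 1)$. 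Each step in that chain is a linear isomorphism, so the dimension count is exact rather than an inequality; this is precisely the kind of verification the cited lemma encapsulates, and having it written out makes the lemma independent of external references. Your further observation that the self-adjointness of $H^\sharp$, via \eqref{eq:omegaL_omegaR}, forces $\omega_L = \omega_R$ --- so that $\ell_E^{R,+}$ and $\ell_E^{L,-}$ are Lagrangian (by Lemma~\ref{lem:lEpm_are_Lagrangian_planes} applied to each bulk operator) for a \emph{common} symplectic form, with Leray unitaries built from the same $A_\pm$ --- is a hypothesis the paper secures in the discussion preceding the lemma but does not restate in its proof; making it explicit is a sound addition, since part (1) would be meaningless if the two planes were Lagrangian for different forms.
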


\begin{proof}
	The first point is standard, and is proved {\em e.g.} in~\cite[Lemma 10]{Gon-23}. For the second point, we recall from~\eqref{eq:def:ell_E} that 
	\[
	\dim \, \Ker (H^\sharp - E) =  \dim (\ell_E^\sharp) = \dim \left( \ell_E^{\sharp, +} \cap \ell_E^{\sharp, -} \right) =  \dim \left( \ell^{R, +}_E \cap \ell^{L, -}_E \right).
	\]
	where we used~\eqref{eq:ell_Esharp=ell_ER} in the last equality.
\end{proof}

In other words, one can read the multiplicity of $E$ as an eigenvalue of the junction operator $H^\sharp$ solely from the crossings of two Lagrangian planes depending only on the bulk operators $H^R$ and $H^L$.

\subsection{Examples}
\label{sec:examples}

Before turning to the proofs, let us give some examples of operators where our theory applies. We have already seen the case of Schrödinger operators in Example~\ref{ex:schrodinger}.

\subsubsection{Dirac operators}

Consider a differential operator of order $p = 1$, of the form
\[
\cL  = ( - \ri \partial_t) (\sigma_3 \otimes \bbI_{N}) + V(t), \quad \text{with} \quad
\sigma_3 \otimes \bbI_{N} = \begin{pmatrix}
    1 & 0 \\
    0 & -1
\end{pmatrix},
\]
where $V(t)$ is pointwise a $(2N) \times (2N)$ hermitian matrix, uniformly bounded for $t \in \R$. This defines a self-adjoint operator $\Dirac$ with domain $H^1(\R, \C^{2N})$. For all $\phi, \psi \in C^\infty_0(\R, \C^{2N})$, the integration formula gives
\[
    \int_0^\infty \left[ \bra \phi, \Dirac \psi \ket_{\C^N} - \bra \Dirac \phi, \psi \ket_{\C^N} \right](t) \rd t  = \ri \bra \phi(0), \sigma_3 \psi(0) \ket_{\C^N}.
\]
This only depends on the value of $\phi$ and $\psi$ at $0$, and not on the choices of $\phi$ and $\psi$, and defines the symplectic form $\omega : \C^{2N} \times \C^{2N} \to \C$ given by
\[
\omega(x, y) = \bra x, J y \ket_{\C^{2N}}, \quad \text{with} \quad 
J := \begin{pmatrix}
    \ri & 0 \\ 0 & - \ri
\end{pmatrix} = \ri \sigma_3 \otimes \bbI_N.
\]
Note that, as in the Schrödinger case, this symplectic form is independent of the choices of the potential $V$. 

\subsubsection{Tight--binding models}

We consider  tight-binding Schrödinger--like operators $h$ acting on $\ell^2(\Z, \C^N)$, of the form
\[
    \forall n \in \Z, \qquad (h \psi)_n = a_{n-1}^* \psi_{n-1} + b_n \psi_n + a_n \psi_{n+1},
\]
where $a_n$ and $b_n$ are $N \times N$ matrices, and $b_n$ is self-adjoint. The edge modes of such models have been recently studied in~\cite{GomGonVan-24}. In the case where the sequences $(a_n)$ and $(b_n)$ are bounded, the operator $h$ is bounded self-adjoint with domain $\ell^2(\Z, \C^N)$. 

\medskip

For our theory to apply, we further need to assume that all the matrices $(a_n)$ are invertible. Then, the equation $h \psi = E \psi$ can be seen as the linear recurrent sequence of order $2$, explicitly
\[
\forall n \in \Z, \quad \psi_{n+1} = - a_n^{-1} \left[  a_{n-1}^* \psi_{n-1} + (b_n - E) \psi_n \right].
\]
In some sense, this equation is the analogue of Cauchy's theory for ordinary differential operators. The set of solutions of $h \psi = E \psi$ is of dimension $2N$, and can be parametrized by the initial values $(\psi_0, \psi_1) \in (\C^N)^2$. We therefore define the evaluation map as $\Tr(\psi) = (\psi_0, \psi_1)^T \in \C^{2N}$ in this case (compare with~\eqref{eq:def:Tr}).

\medskip

For all $\phi, \psi \in \ell^2(\Z, \C^N)$ with compact support, we have
\begin{align*}
    \sum_{n=1}^\infty \left[ \bra \phi_n, (h \psi)_n \ket_{\C^N} - \bra (h \phi)_n,  \psi_n \ket_{\C^N} \right] 
= \bra \phi_1, a_0^* \psi_0 \ket_{\C^N} -  \bra \phi_0, a_0 \psi_1 \ket_{\C^N}.
\end{align*}
We recognize the symplectic form
\[
\omega(x, y) = \bra x, J y \ket_{\C^{2N}}, \quad \text{with} \quad J = \begin{pmatrix}
    0 & -a_0 \\ a_0^* & 0
\end{pmatrix}.
\]
Note that the sesquilinear form $\omega$ is non--degenerate since $a_0$ is invertible. Again, it is independent of the choice of $(b_n)_{n \in \Z}$ and of $(a_n)_{n \in \Z \setminus \{ 0 \}}$.

\subsubsection{Non--homogeneous media}

One can also consider operators of order $p = 2$, of the form
\[
    \cL := - \partial_t \left( A(t) \partial_t \, \cdot \right) + V(t).
\]
We assume that $A(t)$ and $V(t)$ are pointwise hermitian $N \times N$ matrices, uniformly bounded for $t \in \R$, and that there are constants $0 < \alpha \le \beta < \infty$ so that
\[
    \forall t \in \R, \qquad \alpha \bbI_M \le A(t) \le \beta \bbI_M.
\]
We recover the case Schrödinger in the case where $A(t) = \bbI_M$ pointwise. This general case has some subtleties that we would like to emphasize. First, the differential operator $\cL$ defines a symmetric operator $H$ on $L^2(\R, \C^M)$ if and only if the map $t \mapsto A(t)$ is {\bf continuous}. This is because the distributional derivative of a discontinuous function involves Dirac mass measures, which are not $L^2$ functions. Assume for instance that $A(\cdot)$ is continuous on $\R^-$ and on $\R^+$, but not necessarily at $t = 0$. Then, for $\phi, \psi \in C^\infty_0(\R)$, an integration by part shows that
\[
    \bra \phi, \cL \psi \ket_{L^2} -  \bra \cL \phi, \psi \ket_{L^2} = \omega_+(\Tr(\phi), \Tr(\psi)) - \omega_-(\Tr(\phi), \Tr(\psi)), 
\]
with $\omega_\pm(x, y) = \bra x, J_\pm y \ket_{\C^M}$ where
\[
    J_+ = \begin{pmatrix}
        0 & A(0^+) \\ - A^*(0^+) & 0
    \end{pmatrix}, \qquad 
    J_- = \begin{pmatrix}
        0 & A(0^-) \\ - A^*(0^-) & 0
    \end{pmatrix}.
\]
In the continuous case, we have $A(0^+) = A(0^-)$ and $H := \cL$ is symmetric. In this case, it defines a boundary symplectic space with the symplectic form $\omega = \omega_+ = \omega_-$. Note that this form depends on the location of the cut (here at $t = 0$). One could also cut at any $t \in \R$, and define the corresponding symplectic forms $\omega_t$. By continuity of $A(\cdot)$, the map $t \mapsto \omega_t$ is continuous.

\medskip

When considering junctions between two materials, one needs to assume that $A^L(0^-) = A^R(0^+)$, so that the junction operator $H^\sharp$ also defines a self-adjoint operator, see also~\eqref{eq:omegaL_omegaR}.

\medskip

Let us finally remark that the equation $\cL \psi = E \psi$ can be recast as 
\[
    \begin{pmatrix}
        \psi \\ 
        A \psi'
    \end{pmatrix}' = \begin{pmatrix}
        0 & A^{-1}(t) \\
        E - V(t) & 0
    \end{pmatrix} \begin{pmatrix}
    \psi \\ 
    A \psi'
    \end{pmatrix}.
\]
The corresponding initial value problem is always well-posed, thanks to the invertibility condition $A(t) \ge \alpha \bbI_M$, and whenever the map $A(\cdot)$ is continuous (and not necessarily differentiable).

\subsubsection{Other operators} We finally mention that our theory also applies {\em mutatis mutandis} for operators of the form 
\[
    \cL := R(t)^* T(- \ri \partial_t) R(t) + V(t),
\] 
where $T(k)$ is a polynomial of degree $p$ from $\R$ to the set of $M \times M$ matrices, satisfying $T(k) = T^*(k)$ pointwise. The operator $T( - \ri \partial_t)$ is defined via spectral calculus, and means that
\[
    \cF \left[ T( - \ri \partial_t) \psi\right](k) = T(k) \cF[\psi](k)
\]
where $\cF : L^2(\R) \to L^2(\R)$ is the usual Fourier transform. Then, $\cL$ is a differential operator of order $p$, and defines a self-adjoint operator on $H^p(\R, \C^M)$ whenever $V : t \mapsto \cS_N(\C)$ and $R : t \mapsto \cM_M(\C)$ are uniformly bounded in $t \in \R$, and $R(\cdot)$ is of class $C^{p-1}$, pointwise invertible with inverse uniformly bounded on $\R^d$. In some models, $R$ allows to implement an inhomogeneous velocity propagation in the material. 

The Schrödinger case corresponds to $T(k) = k^2 \bbI_M$ and $R(t) = \bbI_M$, while the Dirac case corresponds to $T(k) = k \sigma_3$ and $R(t) = \bbI_M$. Our general message here is that the theory applies to a large number of operators. We record that in the case where $T(k) = k \sigma_3$ and $R(t)$ is continuous, we have $\omega(x, y) = \bra x, J y \ket$ with $J = \ri R^*(0) \sigma_3 R(0)$.


\section{Proof of the classification}
\label{sec:classification}

In this section, we prove Theorem~\ref{thm:classification}, and identify how the symmetries of some (abstract) Lagrangian planes translate into symmetries for the corresponding unitaries. The strategy of the proof follows the lines of~\cite{GonMonPer-22}. For each class, we find a basis of $\C^{2N}$ in which the representation of the $T$, $C$, and $S$ symmetries have simple forms (we refer to~\cite{GonMonPer-22} for the construction of them). In these bases, we identify directly the constraints on the unitary $U$. 

\medskip

In what follows, we restrict our attention to the case where there is an orthonormal basis of $\C^{2N}$ in which $J$ has the form
\begin{equation} \label{eq:def:J0}
    J = \begin{pmatrix}
    \ri & 0 \\ 0 & -\ri
\end{pmatrix}, \quad \text{and} \quad \ell_U = \left\{ \begin{pmatrix} x  \\ U x\end{pmatrix}, \ x \in \C^N \right\}.
\end{equation}
The general case can always be brought to this case by modifying the inner product of $\C^{2N}$. Namely, with the notation of~\eqref{eq:def:J_A+_A-}, we can define
\[
    \bra x, y \ket_{J} :=  \bra x_+, A_+ x_+ \ket_{\C^N} + \bra x_-, A_- x_- \ket_{\C^N}.
\]
The fact that $A_+$ and $A_-$ are positive shows that $\bra \cdot, \cdot \ket_J$ is a scalar product, and, relative to this inner product, the new matrix $J$ representing $\omega(\cdot, \cdot)$ is of the form~\eqref{eq:def:J0}.

\medskip

Here and thereafter, we denote by $\cK : \C^k \mapsto  \C^k$ the usual complex-conjugation operator. Recall that for a matrix $M \in \cM_n(\mathbb C)$, we denote its conjugate matrix by $\overline{M} := \cK M \cK$ and its transpose by $M^T := \overline{M}^*$. 

\subsection{Class A}
\label{sec:class_A}

This class has no symmetry at all. Leray's theorem directly implies that we recover the whole unitary group $\U(N)$, which is simply connected. The index is zero.

\subsection{Class AIII}
\label{sec:class_AIII}

This class includes the $S$-symmetry only. Since $SJ+JS=0$ and $S^2 = 1$, there exists a basis of $\C^{2N}$ where
$$
S = \begin{pmatrix}
 0 & 1 \\
1 & 0
\end{pmatrix}.
$$
Let $\ell$ be a Lagrangian plane stable under the $S$--symmetry: $S \ell = \ell$, and let $U \in \U(N)$ be its corresponding unitary. This implies that for any $x \in \CC^N$, it exists $y \in \CC^N$ such that
$$
S \begin{pmatrix}	x  \\	U x \end{pmatrix} = 
\begin{pmatrix} y  \\ U y \end{pmatrix}, \qquad \text{hence} \qquad
\begin{cases}
    Ux = y \\
    x = U y
\end{cases}, \qquad \text{which gives} \qquad
U = U^*.
$$
This corresponds to unitary and self-adjoint matrices:
\[
    \boxed{ \U(N) \cap \cS_N(\C). }
\]
Such unitaries can only have $\pm 1$ in their spectrum. So they are parametrized by subspaces of rank $k \in \{0, \ldots, N\}$, with $k  = \dim \, \Ker (U - 1)$. The classifying space is therefore
$$
\cG = \mathop{\bigcup}_{k=0}^N \cG(k,N), \quad \text{where} \quad
\cG(k,N) \cong \U(N) / \U(k) \times \U(N-k)
$$
is the usual (complex) Grassmannian. Each $\cG(k,N) $ is path--connected, thus $\cG$ has $N+1$ connected components, and the corresponding index is $$\Index^{\rAIII}[U] := \dim \ker(U-1) \in \{0,\ldots, N\}.$$

\subsection{Class AI}
\label{sec:class_AI}

This class includes the $T$-symmetry only with $T^2=+1$. Since $TJ=JT$, we consider a basis where 
$$
T = \begin{pmatrix}
	0 & \cK \\
	\cK & 0
\end{pmatrix},
$$
where we recall that $\cK$ is the usual complex conjugation operator on $\C^N$.  Let $\ell$ be a Lagrangian plane stable under the $T$-symmetry: $T\ell=\ell$. Thus for $x \in \CC^N$ it exists $y \in \CC^N$ such that 
$$
T \begin{pmatrix}	x  \\	U x \end{pmatrix} = 
\begin{pmatrix}	y  \\	U y \end{pmatrix}, \qquad \text{hence} \qquad
\begin{cases}
    \overline{U} \overline{x} = y \\
    x = U y
\end{cases}, \qquad \text{which gives} \qquad
 U = U^T.
$$
So the set of unitaries one obtains are the set of real--symmetric unitaries
\[
    \boxed{ \U(N) \cap \cS_N^\R(\C)} \qquad  \cong \U(N)/\O(N).
\]
The last equivalence is proved for instance in~\cite[Corollary A.2]{GonMonPer-22}. This set is path-connected so that the index is zero.

\subsection{Class BDI}
\label{sec:class_BDI}

This class includes all three symmetries with $T^2=+1$, $C^2=+1$ and $S=CT$, so that $S^2=1$. We consider a basis where
$$
S = \begin{pmatrix} 0 & 1 \\ 1 & 0 \end{pmatrix}, \qquad 
T = \begin{pmatrix} 0 & \cK \\ \cK & 0 \end{pmatrix} , \qquad 
C=\begin{pmatrix} \cK & 0 \\ 0 & \cK \end{pmatrix}.
$$
where $\cK$ is the complex conjugation. Let $\ell$ be a Lagrangian plane invariant under the three symmetries. Since $S=CT$, we only need to look at $C$ and $T$. Like in class $\rD$ below, the $C$-symmetry implies that $\overline{U}=U$, so $U \in \O(N)$, and like in class $\rAI$ above, the $T$-symmetry implies that $U^T=U$, to $U$ is also real--symmetric. Hence, $U$ is an orthogonal symmetry, but with respect to a real subspace, which gives
\[
    \boxed{ \O(N) \cap \cS_N^\R(\C) .}
\]
The classifying space is similar to class $\rAIII$ but with orthogonal matrices, that is
$$
    \cG^\RR = \mathop{\bigcup}_{k=0}^N \cG^\RR(k,N), \quad \text{where} \quad
    \cG^\RR(k,N) \cong \O(N) / \O(k) \times \O(N-k)
$$
is the usual (real) Grassmannian, which is path-connected. Thus $\cG^\RR$ has $N+1$ connected components, and the corresponding index is 
$$
\Index^{\rBDI}[U] := \dim \ker(U-1) \in \{0,\ldots, N\}.$$

\subsection{Class D}
\label{sec:class_D}

This class includes the $C$-symmetry only with $C^2=+1$. Since $CJ=-JC$, we consider a basis where $C = \cK$. If $\ell$ with corresponding unitary $U$ is stable for $C$, then for all $x \in \CC^N$ there is $y \in \CC^N$ such that
$$
C \begin{pmatrix}	x  \\	U x \end{pmatrix} = 
\begin{pmatrix}	y  \\	U y  \end{pmatrix},
\quad \text{so} \quad
\begin{cases}
    \overline{x} = y \\
    \overline{U} \overline{y} = Uy,
\end{cases}
\quad \text{which gives} \quad
\overline{U}= U.
$$
So $U$ is real-valued, and the classifying space is 
\[
    \boxed{ \O(N). }
\]
It has two connected components, and the index is given by
$$ \Index^{\rD} [U] := \det (U) \in \{\pm 1\}.$$

\subsection{Class DIII} 
\label{sec:class_DIII}

This class includes $T$ with $T^2=-1$, $C$ with $C^2=1$ and $S=TC = - CT$. We consider a basis where
$$
T =  \begin{pmatrix}
0 & -\cK\\
\cK & 0
\end{pmatrix}, \qquad C= \ri\cK, \qquad S = \begin{pmatrix}
0 & \ri \\
-\ri & 0
\end{pmatrix}.
$$
Consider a Lagrangian plane $\ell$ with corresponding matrix $U$, and compatible with all these symmetries. Since $S=TC$, we only need to look at $C$ and $T$. Like in class $\rAII$ below, the fermionic $T$-symmetry implies $U^T=-U$, so $U \in \cA^\RR_N(\C)$ is (real) anti-symmetric. Similarly, like in class $\rD$, the $C$-symmetry implies $\overline{U}=U$, so $U \in \O(N)$ is real--valued. Thus  $U\in \O(N)\cap \cA^\RR_N(\C)$. This set is non-empty only if $N=2n$ is even, in which case we have (for all these facts, we refer {\em e.g.} to ~\cite[Cor. A4]{GonMonPer-22})
$$
    \boxed{ \O(N)\cap \cA^\RR_N(\C)} \qquad  \cong \O(2n)/\U(n).
$$
This set has two connected components and the index is 
$$ \Index^{\rDIII}[U] := \Pf(U) \in \{\pm 1\},$$ where $\Pf$ is the Pfaffian. Recall that the Pfaffian is well-defined for anti--symmetric matrices of even size.

\subsection{Class AII}
\label{sec:class_AII}

This class includes the $T$-symmetry only with $T^2=-1$. Since $TJ=JT$, we consider the basis where 
$$
T =  \begin{pmatrix}	0 & -\cK \\
	\cK & 0
\end{pmatrix}.
$$
Let $\ell$ be a Lagrangian plane with corresponding matrix $U \in \U(N)$, stable for the $T$--symmetry. Then, for $x \in \CC^N$ it exists $y \in \CC^N$ such that 
$$
T \begin{pmatrix}	x  \\	\cU x \end{pmatrix} = \begin{pmatrix}	y  \\	\cU y \end{pmatrix},
\quad \text{hence} \quad
\begin{cases}
    - \overline{U} \overline{x} = y \\
    \overline{x} = Uy
\end{cases}, \quad \text{which gives} \quad U^T = -U.
$$
This correspond to unitary and (real) anti-symmetric matrices. This set is non--empty iff $N = 2n$ is even,  in which case we have
\[
    \boxed{ \U(N) \cap \cA_N^\R(\C)} \quad  \cong \U(2n) / \Sp(n).
\]
This set is simply connected and the index is zero. We refer again to~\cite[Thm 4.7 and Cor. A.4]{GonMonPer-22} for an elementary proof of these facts. 

\subsection{Class CII} 
\label{sec:class_CII}

This class includes $T$ with $T^2=-1$, $C$ with $C^2=-1$ and $S=TC$. This requires $N=2n$ to be even, and we consider a basis where
$$
T =  - \begin{pmatrix}	0 &  \Omega \\ \Omega & 0 \end{pmatrix} \cK, \qquad 
C= \begin{pmatrix} \Omega & 0\\ 0 & \Omega \end{pmatrix}\cK, \qquad 
S = \begin{pmatrix}	0 & 1 \\ 1 & 0 \end{pmatrix}, \quad \text{where} \quad
\Omega :=  \begin{pmatrix}	0 & \bbI_n \\ - \bbI_n & 0 \end{pmatrix}
$$
is the $(2n) \times (2n)$ symplectic matrix defined in~\eqref{eq:def:symplectic_Omega}. Let $\ell$ be a Lagrangian plane with corresponding matrix $U \in \U(N)$, and compatible for all these symmetries. Since $S=TC$, we only need to look at $C$ and $S$. Like in class \hyperref[sec:class_AIII]{AIII} above, the $S$-symmetry implies $U^*= U$, so  $U \in \cS_N(\mathbb C)$ is hermitian (in particular, $\sigma(U) \in \{ \pm 1\}$), and like in class \hyperref[sec:class_C]{C} below, the $C$-symmetry implies that $U^T \Omega U = \Omega$, so $U \in \Sp(n)$ is symplectic. We prove in Appendix~\ref{app:symp_grass} below that the space $\Sp(n) \cap \cS_{2n}(\C)$ has $n+1$ connected component given by 
$$
    \boxed{ \Sp(n) \cap \cS_{2n}(\C)} \quad \cong \bigcup_{k=0}^n \Sp(n) / \Sp(k) \times \Sp(n-k),
$$
where the right-hand side is the symplectic Grassmanian. Each component is simply connected, and the index is $$\Index^{\rCII}[U] := \dim \, \Ker \, (U-1).$$

\subsection{Class C}
\label{sec:class_C}

This class includes the $C$-symmetry only with $C^2=-1$. Since $CJ=-JC$, we get that $N=2n$ is even, and there a basis in which
$$
C =  \begin{pmatrix}	\Omega  & 0 \\	0& \Omega \end{pmatrix} \cK,
$$
with $\Omega$ the symplectic matrix given in~\eqref{eq:def:symplectic_Omega}. Let $\ell$ be a Lagrangian plane with corresponding unitary $U \in \U(N)$, stable under $C$. For all $x \in \CC^N$, there is $y \in \CC^N$ such that 
$$
C \begin{pmatrix} x  \\	U x  \end{pmatrix} = 
\begin{pmatrix}	y  \\	U y \end{pmatrix}, 
\quad \text{hence} \quad
\begin{cases}
    \Omega \overline{x} = y \\
    \Omega \overline{U} \overline{x} = U y,
\end{cases}
\quad \text{so that} \quad
    U^T \Omega U = \Omega.
$$
In other words, $U$ is symplectic. Thus the classifying space is the set of unitary symplectic matrices, namely
$$
    \boxed{ \U(N) \cap \Sp(N,\CC)} \quad =: \Sp(n),
$$
which is simply connected. The index is zero.

\subsection{Class CI} 
\label{sec:class_CI}

This class includes $T$ with $T^2=1$, $C$ with $C^2=-1$ and $S=TC$. This requires $N=2n$ to be even, and we consider a basis where
$$
T =  \begin{pmatrix} 0 & \ri \cK \\ 	\ri \cK & 0 \end{pmatrix}, \qquad 
C= \begin{pmatrix}	\Omega \cK & 0\\ 	0 & \Omega \cK \end{pmatrix}, \qquad 
S = \begin{pmatrix}	0 & \ri \Omega\\	\ri \Omega & 0\end{pmatrix}.
$$
with $\Omega$ the symplectic matrix given in~\eqref{eq:def:symplectic_Omega}. Let $\ell$ be Lagrangian plane with corresponding matrix $U \in \U(N)$, and satisfying these symmetries. Since $S=TC$, we only need to look at $C$ and $T$.  Like in class \hyperref[sec:class_AI]{AI} above, the $T$-symmetry implies that $U \in \cS^\RR_N(\CC)$, and like in class \hyperref[sec:class_C]{C} above, the $C$-symmetry implies that $U \in \Sp(n)$. So the classifying space is
\[
    \boxed{ \cS^\RR_{2n}(\CC) \cap \Sp(n) } \quad   \cong \Sp(n)/\U(n),
\]
which is simply connected. The index is zero. We refer {\em e.g.} to \cite[Cor. A2]{GonMonPer-22} for these facts.

\section{Proofs for the junctions between two topological insulators}
\label{sec:proof:junctions}

We now prove Theorem~\ref{thm:BEC} about the existence of protected modes in junctions, proceeding class by class. We only focus on the classes $\rAIII$, $\rBDI$, $\rCII$ (where the index is a dimension of a kernel), $\rD$ (where the index is a determinant) and $\rDIII$ (where the index is a Pfaffian). In all these cases, we have a $C$ or $S$ symmetry, so we take $E = 0$.

\subsection{Classes AIII, BDI, CII}
Let us first focus on the classes $\rAIII$, $\rBDI$ and $\rCII$, whose classifying spaces are respectively
\[
    \U(N) \cap \cS_N(\C), \quad \O(N) \cap \cS_N(\C), \quad \text{and} \quad \Sp(n) \cap \cS_{2n}(\C).
\]
This constraints the matrix $U$ to satisfy $\sigma(U) \in \{ \pm 1\}$. In what follows, we focus on the $\rAIII$ case for the sake of clarity, but the argument is similar for the classes $\rBDI$ and $\rCII$, up to replacing the field $\C$ by $\R$ or $\mathbb{H}$ (quaternions). Recall that
\[
     \Index^{\rAIII} [U] = \dim \, \Ker(U - 1)
\]
is a dimension of a kernel. The arguments we use follows the ones in~\cite{Gom-}. First we record the following. 
\begin{lemma} \label{lem:crossing_lagrangian_AIII}
    Let $\ell_A$ and $\ell_B$ be Lagrangian planes in the $\rAIII$ class, and let $U_A, U_B \in \U(N)$ be the corresponding unitaries. We have
    \[
        \dim (\ell_A \cap \ell_B) = \dim \, \Ker (U_A^* U_B - 1) \ge \left| N - \Index^\rAIII[U_A] -  \Index^\rAIII[U_B] \right|.
    \]
\end{lemma}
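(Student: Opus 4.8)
The plan is to reduce the whole statement to a computation with the spectral projections of the two self-adjoint unitaries, and then to invoke only elementary dimension counting for subspaces of $\C^N$.

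First I would dispose of the equality. By the general crossing formula of Lemma~\ref{lem:crossing_lagrangian_with_unitaries}(1) we have $\dim(\ell_A \cap \ell_B) = \dim\,\Ker(U_A U_B^* - 1)$. In the $\rAIII$ class the classifying space is $\U(N) \cap \cS_N(\C)$, so $U_A$ and $U_B$ are self-adjoint unitaries; in particular $U_A^* = U_A = U_A^{-1}$, and similarly for $B$. Hence $U_A U_B^* = U_A U_B = U_A^* U_B$, which immediately gives the first equality in the statement and lets me freely replace the relevant operator by $U_A U_B$.

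Next I would rewrite the kernel geometrically. Since $U_A$ is a self-adjoint unitary, its spectrum is contained in $\{\pm1\}$, so I may write $U_A = 2P_A - \1$, where $P_A$ is the orthogonal projection onto the $+1$ eigenspace $V_A := \Ker(U_A-1)$, of dimension $k_A := \Index^{\rAIII}[U_A]$; likewise $U_B = 2P_B - \1$ with eigenspace $V_B$ of dimension $k_B$. Using $U_A^{-1}=U_A$, the condition $U_A^* U_B x = x$ is equivalent to $U_B x = U_A x$, i.e. to $(P_A - P_B)x = 0$, so the problem becomes the computation of $\dim\,\Ker(P_A - P_B)$. The key geometric step, which I expect to be the only genuinely delicate point, is the orthogonal decomposition
\[
    \Ker(P_A - P_B) = (V_A \cap V_B) \oplus (V_A + V_B)^\perp.
\]
The inclusion $\supseteq$ is a direct check on $y+z$ with $y \in V_A \cap V_B$ and $z \in (V_A+V_B)^\perp$; for $\subseteq$, given $P_A x = P_B x$ one sets $y := P_A x = P_B x \in V_A \cap V_B$ and observes $x - y = (\1 - P_A)x = (\1 - P_B)x \in V_A^\perp \cap V_B^\perp = (V_A+V_B)^\perp$. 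Writing $d := \dim(V_A \cap V_B)$ and using $\dim(V_A+V_B) = k_A + k_B - d$, this yields
\[
    \dim\,\Ker(P_A - P_B) = d + \big(N - (k_A + k_B - d)\big) = N - k_A - k_B + 2d.
\]

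Finally I would bound $d$ from both sides. The trivial bound $d \ge 0$ gives $\dim\,\Ker \ge N - k_A - k_B$, while the standard inequality $\dim(V_A \cap V_B) \ge \dim V_A + \dim V_B - N = k_A + k_B - N$ gives $\dim\,\Ker \ge k_A + k_B - N$. Combining the two produces $\dim\,\Ker(U_A^* U_B - 1) \ge |N - k_A - k_B|$, as claimed. Everything outside the orthogonal decomposition is pure bookkeeping. For the classes $\rBDI$ and $\rCII$ the same argument runs verbatim once the projections $P_A, P_B$ and the intersection/sum dimension counts are read over $\R$, respectively over $\HH$ (with $V_A, V_B$ regarded as quaternionic subspaces); the self-adjoint-unitary spectral splitting $U = 2P - \1$ and the dimension inequalities are insensitive to the choice of base field.
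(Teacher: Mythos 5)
Your proof is correct and follows essentially the same route as the paper: both rest on the spectral splitting of the self-adjoint unitaries into $\pm 1$ eigenspaces, the identification of the kernel with $(V_A \cap V_B) \oplus (V_A^\perp \cap V_B^\perp)$, and the standard bound $\dim(G \cap H) \ge \dim G + \dim H - N$. The only difference is that you prove the decomposition as an exact orthogonal equality (yielding the formula $\dim \Ker(U_A^* U_B - 1) = N - k_A - k_B + 2\dim(V_A \cap V_B)$), whereas the paper only establishes the inclusion $\supseteq$ of the kernel, which already suffices for the inequality.
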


\begin{proof}
    The first equality is the first point of Lemma~\ref{lem:crossing_lagrangian_with_unitaries}. Let us prove the second inequality. We write
    \[
        \C^{N} = E_{A, 1} \oplus E_{A,-1} \qquad \text{and} \qquad \C^N = E_{B, 1} \oplus E_{B, -1},
    \]
    with $E_{A, 1} = {\rm Ker}(U_A - 1)$, $E_{A, -1} = {\rm Ker}(U_A + 1)$, and so on. If $U_A v = \pm v$ and $U_B v = \pm v$, then $U_A^* U_B v = v$, so
    \[
        \left( E_{A, 1} \cap E_{B, 1} \right) \oplus \left(  E_{A, -1} \cap E_{B, -1} \right) \subset \Ker (U_A^* U_B - 1).
    \]
    Recall that, for any vector subspaces $G, H$ of $\C^N$, we have $\dim(G + H) \le N$, from which we infer 
    \[
        \dim (G \cap H) = \dim(G) + \dim(H) - \dim(G + H) \ge \dim(G) + \dim (H) - N.
    \]
    Together with the fact that $\dim(G \cap H) \ge 0$, we get
    \[
        \dim (G \cap H)  \ge \max \left\{ 0, \dim(G) + \dim (H) - N \right\}.
    \]
    Similarly, we have $\dim(G^\perp \cap H^\perp)  = \max \{ 0, N - \dim(G) - \dim(H) \}$, so
    \[
        \dim \left( ( G \cap H) \oplus (G^\perp \cap H^\perp) \right) = \dim(G \cap H) + \dim(G^\perp \cap H^\perp) \ge \left| \dim (G) + \dim(H) - N \right|.
    \]
    and the result follows.
\end{proof}

We first deduce the following corollary. Recall that we consider the energy $E = 0$.
\begin{lemma} \label{lem:ell+_ell-_bulkCase}
   Let $H$ be an operator on $L^2(\mathbb R,\C^N)$ in the class $\rAIII$, such that $0 \notin \sigma (H)$. Let $\ell_0^\pm$ be the associated Lagrangian planes at $E=0$. Then
    \[
          \Index^\rAIII(\ell_0^+) +  \Index^\rAIII( \ell_0^-) = N.
    \]
\end{lemma}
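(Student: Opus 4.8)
The plan is to combine two facts already established: the transversality of the two boundary Lagrangian planes, which follows from $0 \notin \sigma(H)$, and the crossing inequality of Lemma~\ref{lem:crossing_lagrangian_AIII}. The whole argument is then a one-line pinching.

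First I would identify the relevant unitaries. Since the class $\rAIII$ carries an $S$-symmetry and we work at $E=0$, the lemma relating operator symmetries to Lagrangian-plane symmetries gives $\ell_0^\pm \in \Lambda_\rAIII$. By Theorem~\ref{thm:classification}, their Leray unitaries $U_0^\pm \in \U(N)$ therefore lie in $\U(N) \cap \cS_N(\C)$; these are hermitian unitaries, so $\sigma(U_0^\pm) \subset \{\pm 1\}$ and $\Index^\rAIII(\ell_0^\pm) = \dim \Ker(U_0^\pm - 1)$ is exactly the dimension of their $+1$ eigenspace.

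Next, because $0 \notin \sigma(H)$, Lemma~\ref{lem:lEpm_are_Lagrangian_planes} supplies the transversal decomposition $\ell_0^+ \oplus \ell_0^- = \C^{2N}$, so that $\dim(\ell_0^+ \cap \ell_0^-) = 0$. Feeding this into Lemma~\ref{lem:crossing_lagrangian_AIII} with $\ell_A = \ell_0^+$ and $\ell_B = \ell_0^-$ yields
\[
0 = \dim(\ell_0^+ \cap \ell_0^-) \ge \left| N - \Index^\rAIII(\ell_0^+) - \Index^\rAIII(\ell_0^-) \right| \ge 0,
\]
which forces the middle quantity to vanish, giving the claimed identity immediately.

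I do not expect a genuine obstacle here, since the weight of the argument has been front-loaded into Lemma~\ref{lem:crossing_lagrangian_AIII} and the transversality of Lemma~\ref{lem:lEpm_are_Lagrangian_planes}; the only point requiring care is to check that the hypotheses of both cited results really hold, namely that $\ell_0^\pm$ are in the $\rAIII$ class (guaranteed by evaluating at $E=0$) and that transversality genuinely uses $0 \notin \sigma(H)$. If one prefers a self-contained count, the same conclusion follows directly from the eigenspace splittings $\C^N = \Ker(U_0^\pm - 1) \oplus \Ker(U_0^\pm + 1)$: transversality forces $\Ker(U_0^+ - 1) \cap \Ker(U_0^- - 1) = \{0\}$ and $\Ker(U_0^+ + 1) \cap \Ker(U_0^- + 1) = \{0\}$, the first inequality giving $\dim\Ker(U_0^+ - 1) + \dim\Ker(U_0^- - 1) \le N$ and the second giving the reverse bound, whence equality.
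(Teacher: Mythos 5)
Your proof is correct and follows the same route as the paper: the paper's own proof is exactly the one-line pinching of Lemma~\ref{lem:crossing_lagrangian_AIII} against $\dim(\ell_0^+ \cap \ell_0^-) = 0$, which the paper obtains from $\dim \Ker(H) = \dim(\ell_0^+ \cap \ell_0^-)$ and you obtain equivalently from the transversality in Lemma~\ref{lem:lEpm_are_Lagrangian_planes}. Your closing eigenspace-splitting argument is a valid self-contained alternative, but it is just the content of Lemma~\ref{lem:crossing_lagrangian_AIII} specialized to this case, so nothing essentially new.
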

\begin{proof}
    Apply the previous Lemma together with the fact that 
    $
        0 = \dim \, \Ker(H) = \dim \left( \ell_0^+ \cap \ell_0^- \right).
    $
\end{proof}

We can now prove Theorem~\ref{thm:BEC} for class AIII.

\begin{proof}[Proof of Theorem~\ref{thm:BEC}, Class AIII] 
    Using Lemma~\ref{lem:crossing_lagrangian_with_unitaries} at $E = 0$ and Lemma~\ref{lem:crossing_lagrangian_AIII} show that
     \begin{align*}
        \dim \, \Ker \left( H^\sharp \right)  = \dim \left( \ell_0^{R, +} \cap \ell_0^{L, -}\right)
        &  \ge \left| N - \Index^\rAIII(\ell_0^{R, +}) -  \Index^\rAIII(\ell_0^{L, -} ) \right| \\
        & =  \left| \Index^\rAIII(\ell_0^{R, +}) -  \Index^\rAIII(\ell_0^{L, +} ) \right|,
    \end{align*}
    where we used Lemma~\ref{lem:ell+_ell-_bulkCase} for the last line.
\end{proof}

\subsection{Class D} We now prove Theorem~\ref{thm:BEC} for the class $\rD$, where the classifying space is $\O(N)$. For a Lagrangian plane $\ell$ with corresponding unitary $U$ we have $\Index^\rD(\ell) := \Index^\rD[U]= \det(U) \in \{ \pm 1\}$. Our goal is to prove that if $\Index^\rD(H^R) \neq \Index^\rD(H^L)$, then $0$ is an eigenvalue of $H^\sharp$ of multiplicity at least~$1$. The proof relies on the following remark.

\begin{lemma} 
    Let $U \in \O(N)$ be an orthogonal matrix. If $\det(U) = -(-1)^N$, then $1 \in \sigma(U)$ is in the spectrum of $U$.
\end{lemma}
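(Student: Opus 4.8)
The plan is to analyze the eigenvalues of an orthogonal matrix $U \in \O(N)$ and count them carefully, using the constraint that complex eigenvalues come in conjugate pairs. The key structural fact is that for a real orthogonal matrix, the eigenvalues lie on the unit circle, the non-real ones pair up as $\{\lambda, \overline{\lambda}\}$ with $\lambda \overline{\lambda} = |\lambda|^2 = 1$, and the only possible real eigenvalues are $+1$ and $-1$. Each conjugate pair of non-real eigenvalues contributes $+1$ to the determinant, so the sign of $\det(U)$ is governed entirely by the multiplicities of the eigenvalues $+1$ and $-1$.

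More precisely, first I would write $\det(U) = (+1)^{m_+}(-1)^{m_-}\prod_j |\lambda_j|^2 = (-1)^{m_-}$, where $m_+$ and $m_-$ are the algebraic multiplicities of the eigenvalues $+1$ and $-1$ respectively, and the product runs over the non-real conjugate pairs (each contributing $1$). Thus $\det(U) = (-1)^{m_-}$. Next I would track dimensions: since the non-real eigenvalues pair up, the number of non-real eigenvalues is even, say $2q$, and therefore $m_+ + m_- + 2q = N$, which gives the parity relation $m_+ + m_- \equiv N \pmod 2$.

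The heart of the argument is then a short parity computation. I would argue by contraposition: suppose $1 \notin \sigma(U)$, so that $m_+ = 0$. Then $m_- \equiv N \pmod 2$, whence $\det(U) = (-1)^{m_-} = (-1)^N$. Taking the contrapositive, if $\det(U) = -(-1)^N$, then we cannot have $m_+ = 0$, so $1 \in \sigma(U)$, which is exactly the claim.

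I do not anticipate a serious obstacle here; the statement is elementary once the eigenvalue structure of real orthogonal matrices is invoked. The one point requiring a little care is to use \emph{algebraic} multiplicities (equivalently, to work over $\C$ with the characteristic polynomial, where $\det(U)$ is the product of eigenvalues with multiplicity) rather than geometric ones, so that the factorization $\det(U) = \prod_j \lambda_j$ holds verbatim. Since $U$ is normal (being orthogonal, hence unitary), algebraic and geometric multiplicities coincide, so this subtlety is harmless, but I would phrase the proof in terms of the characteristic polynomial and its conjugate-closed root set to keep the determinant computation transparent.
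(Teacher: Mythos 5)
Your proof is correct and follows essentially the same route as the paper: both rest on the facts that non-real eigenvalues of a real orthogonal matrix come in conjugate pairs each contributing $+1$ to the determinant, that $\det(U) = (-1)^{m_{-}}$, and that $m_{+} + m_{-} \equiv N \pmod 2$. The only cosmetic difference is that you conclude by a single contrapositive argument where the paper splits into the cases $N$ even and $N$ odd; the underlying computation is identical.
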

\begin{proof}
    All eigenvalues of $U$ have modulus $1$. In addition, since the characteristic polynomial of $U$ has real coefficients, the non--real eigenvalues occur in conjugate pair $(\lambda, \overline{\lambda})$ with equal multiplicity. In particular, each such conjugate pair contributes to $+1$ to the determinant. So, if $m_{-1}$ and $m_1$ denote the multiplicity of $-1$ and $1$ respectively, we have
    \[
        \det(U) = (-1)^{m_{-1}} , \quad \text{and} \quad m_{-1} + m_1 = N \ {\rm mod} \ 2.
    \]
    In the case where $N = 2n$ is even and $\det(U) = -1$, we must have $m_{-1}$ and $m_1$ odd, and in particular, $m_1 \ge 1$. The other case is similar.
\end{proof}

The next result is similar to Lemma~\ref{lem:crossing_lagrangian_AIII}, but for the $\rD$ class.
\begin{lemma}
	Let $H$ be an operator on $L^2(\mathbb R,\C^N)$ in the $\rD$ class such that  $0 \notin \sigma (H)$. Let $\ell_0^\pm$ be the associated Lagrangian planes at $E=0$. 
     Then $\Index^\rD(\ell_0^+) = (-1)^N \Index^\rD(\ell_0^-)$.
\end{lemma}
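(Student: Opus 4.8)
The plan is to reduce the statement to the preceding Lemma on orthogonal matrices, using the non-crossing condition at $E = 0$ as the bridge. Denote by $U^\pm$ the orthogonal matrices in $\O(N)$ associated to the Lagrangian planes $\ell_0^\pm$ via Leray's theorem, so that by definition $\Index^\rD(\ell_0^\pm) = \det(U^\pm) \in \{\pm 1\}$. The whole argument is a short chain of identities, and the only content is keeping track of the parity factor $(-1)^N$.

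First I would exploit the hypothesis $0 \notin \sigma(H)$. On the one hand it gives $\dim \, \Ker(H) = 0$; on the other hand, recalling $\dim \, \Ker(H) = \dim(\ell_0^+ \cap \ell_0^-)$, the two planes intersect trivially: $\dim(\ell_0^+ \cap \ell_0^-) = 0$. Next I would translate this geometric crossing condition into a spectral one on the product $U^+ (U^-)^*$. By the first point of Lemma~\ref{lem:crossing_lagrangian_with_unitaries} applied to $\ell_A = \ell_0^+$ and $\ell_B = \ell_0^-$, we have $\dim(\ell_0^+ \cap \ell_0^-) = \dim \, \Ker(U^+ (U^-)^* - 1)$, so the previous step forces $1 \notin \sigma(U^+ (U^-)^*)$.

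The key observation is then that $U^+ (U^-)^* = U^+ (U^-)^{-1}$ is itself a real orthogonal matrix, being a product of elements of $\O(N)$. I would apply the contrapositive of the preceding Lemma: since $1$ is not an eigenvalue of $U^+ (U^-)^*$, we cannot have $\det(U^+ (U^-)^*) = -(-1)^N$, and because the determinant takes values in $\{\pm 1\}$ we conclude $\det(U^+ (U^-)^*) = (-1)^N$. Finally, multiplicativity of the determinant gives $\det(U^+)\det(U^-) = (-1)^N$, and since $\det(U^-) = \pm 1$ satisfies $\det(U^-)^{-1} = \det(U^-)$, dividing yields $\det(U^+) = (-1)^N \det(U^-)$, which is exactly $\Index^\rD(\ell_0^+) = (-1)^N \, \Index^\rD(\ell_0^-)$.

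I do not expect any genuine obstacle here: the delicate point is purely bookkeeping, namely correctly carrying the factor $-(-1)^N$ from the orthogonal-matrix Lemma through to the final parity identity, and making sure that $U^+(U^-)^*$ is recognized as orthogonal so that the companion Lemma applies. Everything else follows from Lemma~\ref{lem:lEpm_are_Lagrangian_planes} and Lemma~\ref{lem:crossing_lagrangian_with_unitaries} without additional work.
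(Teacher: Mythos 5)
Your proof is correct and follows the same route as the paper's: both use the hypothesis $0 \notin \sigma(H)$ together with Lemma~\ref{lem:crossing_lagrangian_with_unitaries} to conclude $1 \notin \sigma(U^+(U^-)^*)$, then invoke the contrapositive of the preceding orthogonal-matrix lemma to get $\det(U^+(U^-)^*) = (-1)^N$. The paper states this more tersely; you merely spell out the intermediate bookkeeping (orthogonality of the product, multiplicativity of the determinant), which is fine.
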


\begin{proof}
    Since $0 \notin \sigma(H)$, we have $\dim (\ell_0^+ \cap \ell_0^-) = 0$, so $1 \notin \sigma \left( U_0^{+} (U_0^-)^*\right)$. From the previous Lemma, we get $\det( U_0^{+} (U_0^-)^* ) = (-1)^N$, which is the result.
\end{proof}

We can now prove Theorem~\ref{thm:BEC}
\begin{proof}[Proof of Theorem~\ref{thm:BEC}, Class $\rD$] This time, we write
    \[
        \dim \, \Ker(H^\sharp)  = \dim \, \Ker ( U_0^{R, +} (U_0^{L, -})^* - 1  )
    \]
    which is non null whenever
    \[
        -(-1)^N = \det (U_0^{R, +} (U_0^{L, -})^*) = \frac{\det(U_0^{R,+})}{\det(U_0^{L,-})} = (-1)^N \frac{\Index^\rD[H^R]}{\Index^\rD[H^L]},
    \]
    that is when $\Index^\rD[H^R] = -\Index^\rD[H^L]$, which proves the result.
\end{proof}

\subsection{Class DIII} Finally, we prove Theorem~\ref{thm:BEC} for the class $\rDIII$, where the classifying space is $\O(N) \cap \cA_N^{\R}(\R)$, that is the set of orthogonal antisymmetric matrices. This set is non empty only if $N = 2n$ is even. In this case, for a Lagrangian plane $\ell$ with corresponding unitary $U \in \O(N) \cap \cA_N^{\R}(\R)$ we have
\[
    \Index^\rDIII(\ell) := \Index^\rDIII[U]= \Pf (U) \quad \in \{ \pm 1\}.
\]
This time, we use the following properties of Pfaffians. 

\begin{lemma} \label{lem:Pfaffian_identity}
    Let $A, B \in \O(2n) \cap \cA_{2n}^\R(\R)$. If $\Pf(A) = - \Pf(B)$, then $1 \in \sigma(AB)$.
\end{lemma}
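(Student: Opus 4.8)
The plan is to use the fact that the orthogonality and antisymmetry of $A$ force $A^T = -A$ and $A^T A = \bbI$, hence $A^2 = -\bbI$; the same holds for $B$. Thus $A$ and $B$ are both orthogonal complex structures on $\R^{2n}$, and $M := AB$ is orthogonal with $M^{-1} = BA = M^T$. First I would reduce the conclusion $1 \in \sigma(AB)$ to a statement about a single eigenspace. Multiplying $M v = v$ on the left by $A^{-1} = -A$ and using $A^2 = -\bbI$ gives $Mv = v \iff (A+B)v = 0$; so the goal becomes to show that $\Pf(A) = -\Pf(B)$ forces $\Ker(A+B) \neq 0$.

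Next I would introduce the symmetric matrix $S := AB + BA = M + M^T$. A direct computation gives $S^T = S$, $SA = AS$ and $SB = BS$, while $\|S\| \le \|M\| + \|M^T\| = 2$, so $S$ is orthogonally diagonalizable with eigenvalues $2c$, $c \in [-1,1]$. Because $S$ commutes with $A$ and $B$, its orthogonal eigenspaces $W_c$ are preserved by both $A$ and $B$, and each $W_c$ is even-dimensional since $A$ restricts to a complex structure on it. On $W_c$ one has $(A+B)^2 = S - 2\bbI = 2(c-1)\bbI$. For $c = 1$ this makes $A+B$ a real antisymmetric (hence normal) operator squaring to $0$, so $A+B = 0$ and $M = -B^2 = \bbI$ on $W_1$; that is, $W_1 = \Ker(M - \bbI)$. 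For $|c| < 1$, the relation $(A+B)^2 = 2(c-1)\bbI \prec 0$ shows $A+B$ is invertible on $W_c$.

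The heart of the argument is a Pfaffian count. Choosing an orthonormal basis adapted to the orthogonal decomposition $\bigoplus_c W_c$ makes $A$ and $B$ block diagonal, and since the Pfaffian of a block-diagonal antisymmetric matrix factorizes, $\Pf(A)/\Pf(B) = \prod_c \Pf(A|_{W_c})/\Pf(B|_{W_c})$ (the global sign from the change of basis cancels in the ratio, as the same orthogonal change of basis is applied to $A$ and $B$). I then show each factor with $|c| < 1$ equals $1$: setting $\hat B := (B + cA)/\sqrt{1-c^2}$, one checks it is an orthogonal complex structure anticommuting with $A$, so that $A_s := \cos(\psi_s) A + \sin(\psi_s)\hat B$ with $\psi_0 = 0$ and $\psi_1 = \pi - \arccos c$ is a continuous path of orthogonal complex structures from $A$ to $B$ on $W_c$; since $s \mapsto \Pf(A_s|_{W_c}) \in \{\pm 1\}$ is continuous it is constant, giving $\Pf(A|_{W_c}) = \Pf(B|_{W_c})$. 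On $W_{-1}$ one has $A = B$, and on $W_1$ one has $A = -B$, so $\Pf(A|_{W_1}) = (-1)^{\dim W_1 / 2}\,\Pf(B|_{W_1})$. Combining, $\Pf(A)/\Pf(B) = (-1)^{\dim W_1 / 2}$, whence $\Pf(A) = -\Pf(B)$ forces $\dim W_1 / 2$ to be odd, in particular $W_1 \neq 0$, i.e. $1 \in \sigma(AB)$.

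I expect the main obstacle to be the bookkeeping around the $S$-eigenspace decomposition: verifying that it is orthogonal and invariant under both $A$ and $B$, that each block is even-dimensional, and above all that the Pfaffian factorizes over the blocks with the global orientation sign canceling in the ratio $\Pf(A)/\Pf(B)$. The deformation step on the blocks with $|c| < 1$ is the key conceptual insight, but it is technically light once $\hat B$ is available; the only delicate points there are the conventions in the Pfaffian factorization and the routine checks that $\hat B$ is genuinely orthogonal and anticommutes with $A$.
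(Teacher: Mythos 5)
Your proof is correct, but it takes a genuinely different route from the paper's. The paper argues by contradiction in a few lines: assuming $1 \notin \sigma(AB)$, it invokes the trace--log identity $\Pf(A)\Pf(B) = \exp\bigl(\tfrac12 \Tr \log (A^T B)\bigr)$ (quoted from the literature), notes that $A^T = -A$ makes $-1 \notin \sigma(A^T B)$ so the principal branch of the logarithm applies, and then uses the conjugate-pair symmetry of the spectrum of the orthogonal matrix $A^T B$ to get $\Tr \log(A^T B) = 0$, hence $\Pf(A)\Pf(B) = 1$. You instead diagonalize the symmetric operator $S = AB + BA$, which commutes with both complex structures, split $\R^{2n}$ into the invariant eigenspaces $W_c$, identify $W_1 = \Ker(AB - 1)$, connect $A|_{W_c}$ to $B|_{W_c}$ by a path of orthogonal complex structures when $|c|<1$ (with $A = B$ on $W_{-1}$ and $A = -B$ on $W_1$), and conclude via multiplicativity of the Pfaffian over orthogonal blocks and constancy of a continuous $\{\pm 1\}$-valued function; the key verifications ($\hat B^2 = -1$ and $A\hat B + \hat B A = 0$ on $W_c$, the cancellation of $\det(Q)$ in the ratio of Pfaffians, and the vanishing of a normal matrix squaring to zero) all check out. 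What your approach buys: it is self-contained (no external Pfaffian identity), and it proves the stronger quantitative statement $\Pf(A)/\Pf(B) = (-1)^{\dim \Ker(AB-1)/2}$, so that $\Pf(A) = -\Pf(B)$ actually forces $\dim \Ker(AB - 1) \equiv 2 \pmod 4$, in particular $\ge 2$ --- a refinement the paper's proof does not give, and one consistent with the Kramers-type degeneracy expected in class DIII. What the paper's approach buys is brevity: once the trace--log identity is granted, the whole argument is a short spectral observation.
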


\begin{proof}
    We use the following formula, which states that for all $A, B \in \cA_{2n}^{\R}(\R)$, we have (see for instance~\cite{Kri-16})
    \begin{equation} \label{eq:Pfaffian_identity}
        \Pf(A) \Pf(B) = \exp \left( \frac12 \Tr \log (A^T B ) \right).
    \end{equation}
    Assume otherwise that $1 \notin \sigma(AB)$. Then $-1 \notin \sigma(-A B) = \sigma(A^T B)$ since $A^T = -A$. 
    
    In particular, we can choose the complex logarithm with the usual branch cut at $(-\infty, 0)$ in~\eqref{eq:Pfaffian_identity}. Since $A^T B \in \O(2n)$ is orthogonal, all eigenvalues are of modulus $1$. If $\lambda \in \C \setminus \R$ is such a non-real eigenvalue, then $\overline{\lambda}$ is also an eigenvalue with same multiplicity. Note that
    \[
        \log(\lambda) + \log(\overline{\lambda}) = \log(| \lambda |^2) = \log(1) = 0,
    \]
    so such pairs of eigenvalues do not contribute in the $\Tr \log(A^T B)$. Similarly, the eigenvalue $1$ does not contribute, and since $-1$ is not in the spectrum of $A^T B$, we obtain
    \[
        \Tr \log(A^T B) = 0, \quad \text{so} \quad \Pf(A) \Pf(B) = 1.
    \]
\end{proof}

The analogue of Lemma~\ref{lem:crossing_lagrangian_AIII} for the $\rDIII$ class reads as follows. Recall that $\Pf(A^T) = (-1)^n \Pf(A)$.
\begin{lemma}
    Let $H$ be an operator on $L^2(\mathbb R,\C^{2n})$ in the $\rDIII$ class such that  $0 \notin \sigma (H)$. Let $\ell_0^\pm$ be the associated Lagrangian planes at $E=0$. 
    Then $\Index^\rDIII(\ell_0^+) = (-1)^n \Index^\rDIII(\ell_0^-)$.
\end{lemma}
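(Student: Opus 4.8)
The plan is to follow exactly the template of the two preceding lemmas (for classes $\rAIII$ and $\rD$): convert the spectral gap hypothesis $0 \notin \sigma(H)$ into the statement that $1$ is not an eigenvalue of the relevant crossing unitary, and then read off the relation between the two Pfaffians from Lemma~\ref{lem:Pfaffian_identity}.

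First I would use Lemma~\ref{lem:crossing_lagrangian_with_unitaries} together with $\dim \Ker(H) = \dim(\ell_0^+ \cap \ell_0^-)$ to obtain
\[
0 = \dim(\ell_0^+ \cap \ell_0^-) = \dim \Ker\bigl( U_0^+ (U_0^-)^* - 1 \bigr),
\]
so that $1 \notin \sigma\bigl( U_0^+ (U_0^-)^* \bigr)$, where $U_0^\pm \in \O(2n) \cap \cA_{2n}^\R(\R)$ are the unitaries associated with $\ell_0^\pm$.

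Next comes the only genuinely delicate point, which is also the origin of the sign $(-1)^n$ in the statement. Since $U_0^-$ is real, orthogonal and antisymmetric, one has $(U_0^-)^* = (U_0^-)^T = -U_0^-$. I would therefore set $B := -U_0^-$, which is again orthogonal and antisymmetric, hence lies in $\O(2n) \cap \cA_{2n}^\R(\R)$, and set $A := U_0^+$. Then the crossing unitary is precisely $AB$, and the previous step becomes $1 \notin \sigma(AB)$. Applying the contrapositive of Lemma~\ref{lem:Pfaffian_identity} (using that both Pfaffians lie in $\{\pm1\}$, so $\Pf(A) \neq -\Pf(B)$ forces $\Pf(A) = \Pf(B)$) gives $\Pf(U_0^+) = \Pf(-U_0^-)$.

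Finally I would invoke the scaling property $\Pf(-M) = (-1)^n \Pf(M)$ for $2n \times 2n$ antisymmetric $M$ (equivalently the reminded identity $\Pf(M^T) = (-1)^n \Pf(M)$), which yields $\Pf(-U_0^-) = (-1)^n \Pf(U_0^-)$ and hence
\[
\Index^\rDIII(\ell_0^+) = \Pf(U_0^+) = (-1)^n \Pf(U_0^-) = (-1)^n \Index^\rDIII(\ell_0^-).
\]
The main obstacle is really just the bookkeeping in replacing $(U_0^-)^*$ by the antisymmetric orthogonal matrix $-U_0^-$ so that Lemma~\ref{lem:Pfaffian_identity} applies verbatim; once this substitution is made, the factor $(-1)^n$ falls out of the Pfaffian scaling identity, and the rest is a direct transcription of the hypothesis through the two cited lemmas.
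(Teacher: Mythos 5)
Your proposal is correct and follows essentially the same route as the paper: gap $\Rightarrow$ $1 \notin \sigma(U_0^+ (U_0^-)^*)$ via Lemma~\ref{lem:crossing_lagrangian_with_unitaries}, then the contrapositive of Lemma~\ref{lem:Pfaffian_identity}, then the Pfaffian scaling identity. Writing the second factor as $-U_0^-$ rather than $(U_0^-)^T$ is merely notational (they are the same matrix for a real antisymmetric $U_0^-$), and your identity $\Pf(-M) = (-1)^n \Pf(M)$ coincides with the paper's $\Pf(M^T) = (-1)^n \Pf(M)$.
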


\begin{proof}
    We have $0 = \dim \, \Ker H = \dim (\ell_0^+ \cap \ell_0^- ) = \dim \, \Ker(U_0^+ (U_0^-)^T - 1)$, so $1 \notin (U_0^+ (U_0^-)^T)$. Since $U_0^{\pm} \in \O(2n) \cap \cA_{2n}^\R(\R)$, we can apply the previous Lemma, and deduce that
    \[
        \Pf(U_0^+) =  \Pf((U_0^-)^T) = (-1)^n \Pf(U_0^-).
    \]
\end{proof}

We can now prove Theorem~\ref{thm:BEC} for the $\rDIII$ class.

\begin{proof}[Proof of Theorem~\ref{thm:BEC}, class $\rDIII$]
    We have 
    \[
        \dim \, \Ker (H^\sharp) = \dim (\ell_0^{R, +} \cap \ell_0^{L, -}) = \dim \, \Ker \left( U_0^{R, +} (U_0^{L, -})^* - 1 \right).
    \]
    In addition, we have $\Pf(U_0^{R, +}) = \Index^\rDIII(H^R)$ and 
    \[
        \Pf((U_0^{L, -})^T) = (-1)^n \Pf(U_0^{L, -}) = \Pf(U_0^{L, +}) = \Index^\rDIII(H^L).
    \]
    In the case where the two indices differ, we get from Lemma~\ref{lem:Pfaffian_identity} that $1 \in \sigma(U_0^{R, +}  (U_0^{L, -})^T)$, hence $\dim \, \Ker (H^\sharp) \ge 1$.
\end{proof}

\subsection{The case of continuous junctions}
\label{ssec:smooth_junctions}

In this section, we explain how to modify the proofs in the case of continuous junctions. First, we note that the notion of continuous junctions only makes sense for one--dimensional models set on $\R$ (instead of tight--binding models on $\Z$).

\medskip

We consider a junction operator described by a differential operator $\cL^\sharp$ with {\em continuous} coefficients, and so that
\[
    \cL^\sharp \psi = \begin{cases}
         \cL^L \psi, & \quad \text{for} \quad \psi \in C^\infty_0((-\infty, -X]), \\
        \cL^R \psi, & \quad \text{for} \quad  \psi \in C^\infty_0([X, \infty)),
    \end{cases}
\]
for some $X > 0$ large enough. This means that $\cL$ behaves as $\cL^L$ on the far left, and as $\cL^R$ on the far right, hence models a continuous junction between two bulk media. A typical example is
\begin{equation} \label{eq:ex:Schrodinger_junction}
    \cL(t) = - \partial_{tt}^2 + V^\sharp(t), \qquad V^\sharp = V^L(t) (1 - \chi(t)) + V^R(t) \chi(t),
\end{equation}
where $V^L$ and $V^R$ are bulk potentials (think of periodic functions), and $\chi$ is a continuous cut-off functions with $\chi(t) = 0$ for $t < -X$ and $\chi(t) = 1$ for $t > X$. We then make the same assumption as in Section~\ref{sec:intro:junction}, namely that these differential operators define self-adjoint operators $H^L$, $H^R$, and $H^\sharp$, and that $E \notin \sigma(H^{\rm L/R})$.

\medskip

The main result of this section is the following. 
\begin{lemma}\label{lem:continuous_junction}
    For the junction operator $H^\sharp$, the vector spaces $\ell_E^{\sharp, +}$ and $\ell_E^{\sharp, -}$ are Lagrangian planes. In addition, if $H^L$, $H^R$ and $H^\sharp$ are in the same symmetry class $\Gamma$, then
    \[
        \Index^\Gamma(\ell_E^{\sharp, +}) = \Index^\Gamma(\ell_E^{R, +}), \quad \text{and} \quad
        \Index^\Gamma(\ell_E^{\sharp, -}) = \Index^\Gamma(\ell_E^{L, -}).
    \]
\end{lemma}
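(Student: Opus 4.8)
The plan is to treat the two assertions separately and to reduce everything to the behaviour near $+\infty$ (resp. $-\infty$), which is governed solely by $\cL^R$ (resp. $\cL^L$) since $\cL^\sharp$ coincides with $\cL^R$ on $[X,\infty)$ and with $\cL^L$ on $(-\infty,-X]$; I only discuss the $+$ statement, the $-$ one being symmetric. The key preliminary is a \emph{symplectic flow property}. For a cut at a point $s\in\R$, write $\omega_s$ for the boundary symplectic form of $\cL^\sharp$ obtained by integrating \eqref{eq:def:omega} from $s$ instead of $0$, and $\Tr_s(\psi):=(\psi(s),\ldots,\psi^{(p-1)}(s))$ for the evaluation at $s$. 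Since the coefficients of $\cL^\sharp$ are continuous, Cauchy theory provides a transfer map $\Phi^\sharp_{s\to t}$ on $(\C^M)^p$ sending $\Tr_s(\psi)$ to $\Tr_t(\psi)$ for every solution $\psi$ of $\cL^\sharp\psi=E\psi$, with $s\mapsto\omega_s$ and $(s,t)\mapsto\Phi^\sharp_{s\to t}$ continuous. For two such solutions $\phi,\psi$ and $s<t$, the Green identity reads $\int_s^t(\langle\phi,\cL^\sharp\psi\rangle-\langle\cL^\sharp\phi,\psi\rangle)=\omega_t(\Tr_t\phi,\Tr_t\psi)-\omega_s(\Tr_s\phi,\Tr_s\psi)$, and the left-hand side vanishes because $\cL^\sharp\psi=E\psi$, $\cL^\sharp\phi=E\phi$ with $E\in\R$. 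Hence $\Phi^\sharp_{s\to t}$ intertwines $\omega_s$ and $\omega_t$, and therefore maps Lagrangian planes of $\omega_s$ to Lagrangian planes of $\omega_t$.

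For the Lagrangian claim, note that since $\cL^\sharp=\cL^R$ on $[X,\infty)$, square-integrability at $+\infty$ of a solution depends only on its state at $X$, so that $\ell_E^{\sharp,+}(X):=\Tr_X(\cS_E^\sharp\cap L^2([X,\infty)))=\Tr_X(\cS_E^R\cap L^2([X,\infty)))$. As $E\notin\sigma(H^R)$, the base-point-$X$ analogue of Lemma~\ref{lem:lEpm_are_Lagrangian_planes}, whose proof is insensitive to the location of the cut, shows that $\ell_E^{\sharp,+}(X)$ is a Lagrangian plane of $\omega_X$. Pulling back by $\Phi^\sharp_{0\to X}$ then gives $\ell_E^{\sharp,+}=(\Phi^\sharp_{0\to X})^{-1}\ell_E^{\sharp,+}(X)$, a Lagrangian plane of $\omega_0=\omega$.

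For the index equality I would argue by homotopy. For $s\in[0,X]$ set $\ell_E^{\sharp,+}(s):=(\Phi^\sharp_{s\to X})^{-1}\ell_E^{\sharp,+}(X)$ and, using that $\cL^R$ is a bulk operator, $\ell_E^{R,+}(s):=\Tr_s(\cS_E^R\cap L^2([s,\infty)))$; these are continuous families of Lagrangian planes for the continuous families $\omega_s^\sharp$ and $\omega_s^R$, they coincide at $s=X$ (where moreover $\omega_X^\sharp=\omega_X^R$ because the two operators agree near $X$), and at $s=0$ they equal $\ell_E^{\sharp,+}$ and $\ell_E^{R,+}$ respectively. Because the $T$, $C$, $S$ symmetries act only on the internal space $\C^M$ and commute with truncation to $[s,\infty)$, every plane in both families lies in $\Lambda_\Gamma$ (recall $E=0$ whenever $C$ or $S$ is present). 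Concatenating the $\sharp$-family run from $s=0$ to $s=X$ with the $R$-family run backwards from $s=X$ to $s=0$ yields a continuous path of symmetric Lagrangian planes, with continuously varying symplectic form, from $\ell_E^{\sharp,+}$ to $\ell_E^{R,+}$. By the continuity of the Leray correspondence $(\omega_s,\ell)\mapsto U$ (the very mechanism used in the classification of bulk operators), the associated unitaries trace a continuous path inside the Cartan space $U_\Gamma$, stay in one connected component, and thus share the value of $\Index^\Gamma$. This gives $\Index^\Gamma(\ell_E^{\sharp,+})=\Index^\Gamma(\ell_E^{R,+})$, and symmetrically for the $-$ index.

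The step I expect to be the main obstacle is the index-invariance argument: one must verify that the Leray unitary depends continuously on the \emph{pair} $(\omega_s,\ell)$, i.e. that as $\omega_s$ deforms, the splitting $\C^{2N}=K_+\oplus K_-$ and the positive operators $\sqrt{A_\pm}$ of \eqref{eq:def:J_A+_A-} deform continuously, which holds as long as $J_s$ stays invertible with $\dim K_\pm=N$ (true throughout, since each $\omega_s$ admits Lagrangian planes), and that the symmetry constraints of Table~\ref{tab:classification} are preserved at every $s$ so the path never leaves $U_\Gamma$.
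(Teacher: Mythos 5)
Your proposal is correct, and its overall architecture --- sliding the cut point $s$ from $0$ to $X$, using that $\cL^\sharp$ coincides with the bulk operator $\cL^R$ beyond $X$, and transporting the index by continuity of the Leray unitaries --- is the same as the paper's. The one genuine difference is how the Lagrangian property of $\ell_E^{\sharp,\pm}$ is established. The paper never introduces a transfer map: it observes that $\ell_E^{\sharp,+}(t)$ is isotropic for every $t$ (by the computation~\eqref{eq:isotropic}), that its dimension is independent of $t$ (injectivity of $\Tr_t$ on solutions, i.e. Cauchy uniqueness), and that for $t>X$ this dimension equals $N$ by Lemma~\ref{lem:lEpm_are_Lagrangian_planes} applied to the bulk operator $H^R$; maximal isotropy then gives Lagrangianity. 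You instead prove that the Cauchy flow $\Phi^\sharp_{s\to t}$ intertwines $\omega_s$ and $\omega_t$ (a Lagrange--Green identity for solutions; your sign convention for the boundary terms is off, but this is immaterial since the left-hand side vanishes) and pull back the Lagrangian plane $\ell_E^{R,+}(X)$ through a linear symplectomorphism. Your route is a clean one-step argument that makes explicit why the planes at different cut points are the same plane transported by the flow; the paper's route avoids invoking the finite-interval Green identity for solutions that need not be square integrable, staying within facts it has already established (though both arguments live at the same level of rigor, since the well-definedness of~\eqref{eq:def:omega} at every cut point already encodes that the integrand is an exact derivative of a jet form). For the index equality your concatenated homotopy is exactly the paper's chain $\Index^\Gamma(U_E^{\sharp,+}(0))=\Index^\Gamma(U_E^{\sharp,+}(X))=\Index^\Gamma(U_E^{R,+}(X))=\Index^\Gamma(U_E^{R,+}(0))$, and your closing caveat --- continuity of the Leray correspondence in the pair $(\omega_s,\ell)$ and preservation of the symmetry constraints along the path --- is handled no more carefully in the paper, which simply asserts these continuity statements; you are in fact slightly more precise in noting that local constancy of the index requires the path of unitaries to stay in $U_\Gamma$, not merely in $\U(N)$.
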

In the previous hard cut case, we had the equality $\ell^{\sharp, +}_E = \ell^{R, +}_E$ and  $\ell^{\sharp, -}_E = \ell^{L, -}_E$. Now, these spaces are different in general, and only the indices are equal. This result {\em does not} follow from Lemma~\ref{lem:lEpm_are_Lagrangian_planes}, since we expect $E$ to be the spectrum of $H^\sharp$. In particular, we warn that the equality
\[
    \ell_E^{\sharp, +} \oplus \ell_E^{\sharp, +} = \C^{Mp}
\]
is {\bf false} in general: we expect these planes to cross, and any initial condition in this crossing gives an edge state. 

\begin{proof}
    The main idea of the proof is to consider the vectorial planes $\ell_E^{\sharp, \pm}(t)$ parametrized by $t \in \R$, corresponding  to the boundary values of $\psi \in \cS_E^{\sharp, \pm}$ at $t \in \R$. The previous case corresponds to $t = 0$.
    
    \medskip
    
    All the statements for $t= 0$ translates directly for any $t \in \R$. In particular, we obtain a family of symplectic forms $\omega_t$ on $\C^{Mp}$, for all $t \in \R$. For this symplectic form, the sets $\ell_E^{\sharp, \pm}(t)$ are still isotropic, see~\eqref{eq:isotropic}. In particular, we have 
    \[
        \dim \left( \ell_E^{\sharp, \pm}(t) \right) \le N, \quad \text{with} \quad N := \frac{Mp}{2}.
    \]    
    Cauchy's theory shows that $\dim \left( \ell_E^{\sharp, \pm}(t) \right) = \dim \, \cS_E^{\sharp, \pm}$ are independent of $t$. But for $t > X$, we recover the right bulk Hamiltonian, so $\dim \, \cS_E^{\sharp, +} = \dim \, \cS_E^{R, +}$, and similarly, $\dim \, \cS_E^{\sharp, -} = \dim \, \cS_E^{L, -}$. Finally, according to Lemma~\ref{lem:lEpm_are_Lagrangian_planes} applied for the bulk Hamiltonians $H^L$ and $H^R$, these vector spaces are of dimension $N$. So 
    \[
        \dim \left( \ell_E^{\sharp, \pm}(t) \right) = N, \quad \text{for all} \quad t \in \R.
    \]
    We proved that the sets $\ell_E^{\sharp, \pm}(t)$ are maximally isotropic, hence are Lagrangian. This proves the first part of the theorem.
    
    \medskip
    
    We now prove equality of the index. Note that since $\ell_E^{\sharp, \pm}(t)$ are Lagrangian for $\omega_t$, one can associate a unitary $U_E^{\sharp, \pm}(t) \in \U(N)$ thanks to Leray's theorem. The continuity of $\cL$ implies that $t \mapsto \omega_t$ is continuous, and that $t \mapsto \ell_E^{\sharp, \pm}(t)$ are continuous (for the topology of vector spaces of $\C^{Mp}$). In particular, the corresponding maps of unitaries $t \mapsto U_E^{\sharp, \pm}(t)$ are also continuous in $\U(N)$. Since the index is constant on connected components, we directly get
    \[
        \Index^\Gamma( U_E^{\sharp, +}(t = 0) ) = 
        \Index^\Gamma( U_E^{\sharp, +}(t = X) )  = 
        \Index^\Gamma( U_E^{R, +}(t = X) ) = \Index^\Gamma( U_E^{R, +}(t = 0) ),
    \]
    so $ \Index^\Gamma(\ell_E^{\sharp, +}) = \Index^\Gamma(\ell_E^{R, +})$ as claimed. The proof on the other side is similar.
\end{proof}
Notice that the continuity of $t \mapsto \omega_t$ can be proved directly for a given model. For instance, in the context of Schrödinger operator in~\eqref{eq:ex:Schrodinger_junction}, $\omega_t$ is independent of $t$ (so continuous). Note that the continuity of $t \mapsto \ell_E^{\sharp, +}(t)$ follows from Cauchy's theory: the functions $\psi \in \cS_E^+$ are of class $C^p(\R)$ so the map $t \mapsto (\psi(t), \psi'(t),\cdots \psi^{p-1}(t))$ is continuous.

\medskip

Theorem~\ref{thm:BEC} now applies for any smooth junctions between two bulk materials, by writing (here for the case $\Gamma \in \{ \rAIII, \rBDI, \rCII \}$)
\begin{align*}
    \dim \, \Ker (H^\sharp) & \ge \left| \Index^\Gamma( \ell_E^{\sharp, +}) - \Index^\Gamma( \ell_E^{\sharp, -}) \right| \\
    & = \left| \Index^\Gamma( \ell_E^{R, +}) - \Index^\Gamma( \ell_E^{L, -}) \right|
     = \left| \Index^\Gamma(H^R) - \Index^\Gamma(H^L) \right|.
\end{align*}

\section{Applications: Bulk--edge index for Dirac operators}
\label{sec:application_cst_Dirac}

\subsection{Dirac operators with constant potential}
In this section, we compute the index of  Dirac operators with {\em constant} potential, acting on $L^2(\R, \C^N)$ with block form
\[
\Dirac = \begin{pmatrix}
    - \ri \partial_t & - \ri W \\ \ri W^* & \ri \partial_t
\end{pmatrix}
\]
where $W \in \cM_{2N}(\C)$ is a constant matrix, independent of $t \in \R$. The $\ri$ factor in front of $W$ is here to simplify the computations below. This example highlights many interesting features of the previous results. In particular, each symmetry class of the table and the corresponding index can be directly deduced from $W$ and its spectral properties. Our results are gathered in the next Proposition. 

\begin{proposition}
    The operator $\Dirac$ acting on $L^2(\R, \C^{2N})$ is essentially self-adjoint, with domain $H^1(\R, \C^{2N})$. Its spectrum is purely essential, given by
    \[
    \sigma (\Dirac) = \sigma_{\rm ess}(\Dirac) =  (- \infty, m_0] \cup [ m_0, \infty),
    \]
    where $m_0$ is the lowest singular value of $W$. In particular, $0 \notin \sigma(\Dirac)$ iff $W \in {\rm GL}_N(\C)$ is invertible. In this case, we can identify the Lagrangian planes $\ell_{E = 0}^\pm$ as
    \[
        \ell_0^+ = \Ran \left(  \1( A > 0 ) \right) , \quad \ell_0^- =  \Ran \left( \1(A < 0) \right) , \qquad \text{with} \quad A := \begin{pmatrix}
            0 & W \\ W^* & 0
        \end{pmatrix}.
    \]
    Finally, the corresponding unitaries are given by $U_0^+ = - U_0^- = W^* | W |^{-1}$.
\end{proposition}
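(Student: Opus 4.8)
The plan is to verify the four assertions in order, reducing everything to the linear algebra of the constant matrix $W$. Throughout I write $\Dirac = -\ri(\sigma_3\otimes\bbI_N)\partial_t + V$ with the constant Hermitian matrix $V = \begin{pmatrix} 0 & -\ri W \\ \ri W^* & 0\end{pmatrix}$.

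First I would settle self-adjointness and the spectrum. The free operator $-\ri(\sigma_3\otimes\bbI_N)\partial_t$ is self-adjoint on $H^1(\R,\C^{2N})$ with $C^\infty_0(\R,\C^{2N})$ as a core, and since $V$ is bounded, Kato--Rellich yields essential self-adjointness with unchanged domain $H^1(\R,\C^{2N})$. For the spectrum I would conjugate by the Fourier transform, turning $\Dirac$ into multiplication by $\widehat\Dirac(k)=\begin{pmatrix} k\bbI_N & -\ri W \\ \ri W^* & -k\bbI_N\end{pmatrix}$. A direct computation gives $\widehat\Dirac(k)^2 = \begin{pmatrix} k^2\bbI_N + WW^* & 0 \\ 0 & k^2\bbI_N + W^*W\end{pmatrix}$, whose eigenvalues are $k^2+\sigma_i^2$ with $\sigma_i$ the singular values of $W$. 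Hence the eigenvalues of $\widehat\Dirac(k)$ are $\pm\sqrt{k^2+\sigma_i^2}$, and letting $k$ range over $\R$ produces $\sigma(\Dirac)=(-\infty,-m_0]\cup[m_0,\infty)$ with $m_0=\min_i\sigma_i$; the spectrum is purely (absolutely) continuous since it is multiplication by a matrix with continuous eigenvalue branches. In particular $0\notin\sigma(\Dirac)$ iff $m_0>0$ iff $W$ is invertible.

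Next I would identify the Lagrangian planes. As $p=1$ the trace map is $\Tr(\psi)=\psi(0)$, so $\ell_0^\pm$ is the set of initial data whose solution of $\Dirac\psi=0$ lies in $L^2(\R^\pm)$. Writing $\psi=(\psi_1,\psi_2)$, the equation $\Dirac\psi=0$ is equivalent to the first--order system $\psi'=-A\psi$ with $A=\begin{pmatrix} 0 & W \\ W^* & 0\end{pmatrix}$, which is Hermitian, so $\psi(t)=e^{-tA}\psi(0)$. Thus $\psi\in L^2(\R^+)$ iff $\psi(0)$ lies in the positive spectral subspace of $A$ (decay as $t\to+\infty$), while $\psi\in L^2(\R^-)$ iff $\psi(0)$ lies in the negative subspace, giving $\ell_0^+=\Ran(\1(A>0))$ and $\ell_0^-=\Ran(\1(A<0))$. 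When $W$ is invertible, $A^2$ has spectrum $\{\sigma_i^2\}$ so $A$ has trivial kernel, and these are complementary subspaces of dimension $N$, consistent with Lemma~\ref{lem:lEpm_are_Lagrangian_planes}.

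Finally I would extract the unitaries. The symplectic matrix here is $J=\ri\sigma_3\otimes\bbI_N$, already in the normalized form of \eqref{eq:def:J0}, so a Lagrangian plane of the shape $\{(x,Ux)\}$ can be read off directly as a graph. To put $\ell_0^+$ in graph form I would compute the spectral projection via the sign function: with $|A|=(A^2)^{1/2}=\mathrm{diag}\bigl((WW^*)^{1/2},(W^*W)^{1/2}\bigr)$ one gets $\mathrm{sgn}(A)=A|A|^{-1}=\begin{pmatrix} 0 & W(W^*W)^{-1/2} \\ W^*(WW^*)^{-1/2} & 0\end{pmatrix}$, hence $\1(A>0)=\tfrac12\begin{pmatrix}\bbI_N & U_0 \\ U_0^* & \bbI_N\end{pmatrix}$ where $U_0=W(W^*W)^{-1/2}$ is the polar unitary of $W$. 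Applying this projection to a general vector and setting $x$ equal to the first component shows the second component equals $U_0^*x$, so $\ell_0^+=\{(x,U_0^*x)\}$ and $U_0^+=U_0^*=W^*(WW^*)^{-1/2}=W^*|W^*|^{-1}$; the negative subspace comes from the complementary projector, which flips the off--diagonal sign, giving $U_0^-=-U_0^+$. I expect the only delicate point to be this last step: matching Leray's abstract graph representation with the concrete subspace $\Ran(\1(A>0))$, and in particular keeping track of whether the relevant absolute value is $(WW^*)^{1/2}$ or $(W^*W)^{1/2}$ — the formula $W^*|W|^{-1}$ in the statement is to be read with $|W|:=(WW^*)^{1/2}$, equivalently $U_0^+$ is the adjoint of the polar factor of $W$. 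Everything else is routine once the system $\psi'=-A\psi$ is written down.
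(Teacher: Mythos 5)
Your proposal is correct and, in structure, it follows essentially the same route as the paper: Fourier reduction for the spectrum, reduction of $\Dirac\psi=0$ to the constant-coefficient system governed by $A=\begin{pmatrix}0&W\\W^*&0\end{pmatrix}$, identification of $\ell_0^\pm$ with the spectral subspaces of $A$, and computation of the spectral projector $\1(A>0)$ to read off the unitary as a graph. The differences are cosmetic and, if anything, in your favour: the paper diagonalizes $A$ explicitly via a unitary built from the polar factor of $W$, whereas you use the functional calculus $\mathrm{sgn}(A)=A|A|^{-1}$, which is an equivalent but slightly slicker computation; likewise your solution formula $\psi(t)=e^{-tA}\psi(0)$ is a clean substitute for the paper's plane-wave ansatz ($D(k)\Psi=0$ with $k$ on the imaginary axis). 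Your reading of the convention, namely $|W|=(WW^*)^{1/2}$, $W=|W|U$, and $U_0^+=U^*=W^*(WW^*)^{-1/2}$, is exactly the one used in the paper's proof, so that subtlety is handled correctly.

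The one step that needs patching is the spectral claim. Squaring gives $D(k)^2=\mathrm{diag}\bigl(k^2+WW^*,\;k^2+W^*W\bigr)$, hence only that every eigenvalue of $D(k)$ lies in the set $\bigl\{\pm\sqrt{k^2+\sigma_i^2}\bigr\}$; it does not by itself tell you which signs actually occur, and you need both signs to conclude that the union over $k$ covers the negative branch $(-\infty,-m_0]$ (a Hermitian matrix whose square is $\lambda^2\bbI$ may well be $+\lambda\bbI$). This is repaired in one line: for instance, observe that $\sigma_3 D(k)\sigma_3=-D(-k)$, so $\bigcup_{k\in\R}\sigma(D(k))$ is symmetric under $E\mapsto-E$; or restrict $D(k)$ to the invariant two-planes ${\rm span}\{(u_i,0),(0,v_i)\}$ attached to the singular pairs $Wv_i=\sigma_i u_i$, $W^*u_i=\sigma_i v_i$, on which $D(k)$ acts as $\begin{pmatrix}k&-\ri\sigma_i\\ \ri\sigma_i&-k\end{pmatrix}$, manifestly with eigenvalues $\pm\sqrt{k^2+\sigma_i^2}$. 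The paper sidesteps the issue by computing the characteristic polynomial, $\det\bigl(D(k)-E\bigr)=\det\bigl((E^2-k^2)-W^*W\bigr)$, which depends only on $E^2$ and therefore yields both signs automatically.
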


Note that the matrix $A$ is hermitian, hence is diagonalizable with real spectrum. As we will see in the proof, its spectrum is of the form $\sigma(A) = \{ \pm \mu_i\}_{1 \le i \le N}$ where $0 < \mu_1 \le \cdots \le \mu_N$ are the singular values of $W$. We used the notation $\1(A > 0)$ and $\1(A < 0)$ for the spectral projectors on $\R^+$ and $\R^-$ of $A$. 

The last part of this proposition states that $U_0^+$ is the inverse of the unitary appearing in the polar decomposition of $W$.

\begin{proof}
The first part is standard, and comes from the fact that $\Dirac$ is a Fourier multiplier by the hermitian matrix
\[
    D(k) := \begin{pmatrix}
    k & -\ri W \\ \ri W^* & -k
\end{pmatrix} \quad \in \cS_{2N}(\C), \qquad \text{so, in particular} \quad \sigma(\Dirac) = \bigcup_{k \in \R} \sigma\left( D(k) \right).
\]
For $E \in \R$, we have
\[
\det \left( D(k) - E \right)  = \det \begin{pmatrix}   k -E & - \ri W \\ \ri W^* & -k -E      \end{pmatrix} = \det \left( E^2 - k^2 - W^* W \right). 
\]
We deduce that $E \in \sigma(D(k))$ iff $E^2 - k^2 \in \sigma(W^* W)$ iff $E = \pm \sqrt{k^2 + \mu^2}$ for some $\mu^2 \in \sigma (W^* W)$, i.e. for some $\mu \ge 0$ a singular value of $W$. The result on the spectrum follows.

\medskip

We now focus on the energy $E = 0$. The matrix $D(k)$ is linear in $k$, hence analytic for $k \in \C$. If $\Psi \in \C^{2N}$ solves $D(k) \Psi = 0$ for some $k \in \C$, then, multiplying by $ \ri \sigma_3$, we get
\[
    \begin{pmatrix}
        \ri & 0 \\ 0 & -\ri 
    \end{pmatrix}
    \begin{pmatrix}
        k & - \ri W \\ \ri W^* & - k
    \end{pmatrix} \Psi = 0, \quad \text{that is} \quad
    \begin{pmatrix}
        0 & W \\ W^* & 0
    \end{pmatrix} \Psi = - \ri k \Psi,
    \quad \text{so} \quad (- \ri k) \in \sigma(A).
\]
Note that since $A$ is hermitian, we have $\sigma(A) \subset \R$, so the locations $k \in \C$ where $0 \in D(k)$ can only occur for $k$ on the imaginary axis. Let us study the matrix $A$. We write the polar decomposition of $W$ and $W^*$ as
\[
    W := | W | U, \quad \text{with} \quad | W | := \sqrt{W W^*}, 
\]
where $U \in \U(2N)$ is unitary. The polar decomposition of $W^*$ is related to the one of $W$, and we have
\[
    W^* := U^* | W| = | W^* | U^*, \quad \text{with} \quad | W^* | := \sqrt{W^*W} = U^* | W | U.
\]
Since $W$ is invertible, the decomposition is unique. Recall that $| W |$ and $|W^*| $ have the same spectrum, composed of the singular values of $W$. In what follows, we denote by $0 < \mu_1 \le \mu_2 \le \cdots \le \mu_N$ these singular values. Then, we can check that $A = \cU^* \Lambda \cU$, with
\[
     \cU = \frac{1}{\sqrt{2}}\begin{pmatrix}    1 & - U \\   -  U^* & 1
    \end{pmatrix}, \quad \text{and} \quad
    \Lambda = \begin{pmatrix}
        | W | & 0 \\ 0 & - | W^* |
    \end{pmatrix},
\]
and that $\cU \in \U(2N)$ is unitary. We deduce that $\sigma(A) = \sigma(\Lambda) = \{ \pm \mu_i \}_{1 \le i \le N}$. In addition, the spectral calculus shows that
\[
   \1(A > 0) =  \cU^* \begin{pmatrix}        1 & 0 \\ 0 & 0    \end{pmatrix} \cU 
    = \frac12 \begin{pmatrix}  1 & U \\  U^* & 1    \end{pmatrix}, \quad \text{so} \quad
    \ell_0^+ = \Ran \, \1(A > 0) = \left\{ \begin{pmatrix}
        x \\ U^* x
    \end{pmatrix}, \ x \in \C^N \right\}.
\]
By identification with~\eqref{eq:def:J0} (note that with our convention of Dirac, the symplectic form is given by the matrix $J$ in~\eqref{eq:def:J0}), we deduce as wanted that $U_0^+ = U^*$. The computation for $U_0^-$ is similar.


\end{proof}

\begin{remark}\label{rem:normal_W}
    In the case where $W$ is normal ($W W^* = W^* W$), we have $| W | = | W^* |$. In addition, $| W |$ and $U$ commutes.
\end{remark}

\subsection{Symmetries}

With this at hand, we can now identify some indices of $\Dirac$ solely from spectral property of the matrix $W$. We first record the following result, whose proof is straightforward (compare with~\eqref{eq:TCS_with_J}).
\begin{lemma} \label{lem:TCS_Symmetry_forV}
    A constant Dirac operator satisfies a $T$, $C$ and/or $S$ symmetry if, respectively,
    \[
        \forall k \in \R, \quad T D(k) = D(-k) T, \qquad C D(k) = -D(-k) C, \qquad S D(k) = -D(k) S.
    \]
    In terms of the matrix $J$ and $V := \begin{pmatrix}
        0 & -\ri W \\ \ri W^* & 0
    \end{pmatrix}$, this happens if $T J = J T$, $C J = -J C$ and $S J = -J S$, and
    \[
        T V = VT, \qquad CV = - VC, \qquad SV = - VS.
    \]
\end{lemma}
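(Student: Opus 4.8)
The plan is to unwind the definitions on both sides of the equivalence, so I'll treat the symmetry conditions on the Dirac operator $\Dirac$ and on its Fourier symbol $D(k)$ in parallel. Recall that $\Dirac$ is the Fourier multiplier by $D(k)$, meaning $\cF[\Dirac \psi](k) = D(k) \cF[\psi](k)$. First I would observe that the operators $T$, $C$, $S$ act only on the internal degrees of freedom $\C^{2N}$ and thus commute with the Fourier transform, so that conjugating $\Dirac$ by one of them amounts to conjugating the symbol $D(k)$. The only subtlety is that $T$ and $C$ are \emph{anti}-unitary, hence involve complex conjugation, which on the Fourier side sends $k \mapsto -k$ (since $\cF \cK = \cK \cF$ composed with the reflection $k \mapsto -k$). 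This explains why the sign of $k$ flips in the $T$ and $C$ conditions but not in the (unitary, hence $k$-preserving) $S$ condition.

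Concretely, I would carry out the following steps in order. \textbf{Step 1:} Translate the operator-level symmetry conditions $\Dirac T = T \Dirac$, $\Dirac C = -C \Dirac$, $\Dirac S = -S \Dirac$ into symbol-level conditions using the multiplier structure. For the anti-unitary symmetries, write $T = T_0 \cK$ with $T_0$ unitary, and use that conjugating the multiplier $D(-\ri \partial_t)$ by $\cK$ produces the multiplier $\overline{D}(k)$ evaluated with the reflected variable, yielding $T D(k) = D(-k) T$ and $C D(k) = -D(-k) C$; the unitary $S$ gives $S D(k) = -D(k) S$ directly. \textbf{Step 2:} Decompose $D(k) = k\,(\sigma_3 \otimes \bbI_N) + V$ where the $k$-linear part is $J/\ri = \sigma_3 \otimes \bbI_N$ (up to the factor $\ri$) and $V = \begin{pmatrix} 0 & -\ri W \\ \ri W^* & 0 \end{pmatrix}$ is the constant part. \textbf{Step 3:} Match the coefficients of the polynomial identity in $k$. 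Since each symbol condition must hold for all $k \in \R$ and both sides are affine in $k$, I equate the $k^1$ coefficients and the $k^0$ coefficients separately.

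The matching of coefficients is where the result falls out cleanly. For the $T$-condition $T D(k) = D(-k) T$, the degree-one part gives $T (\sigma_3 \otimes \bbI_N) = -(\sigma_3 \otimes \bbI_N) T$ after accounting for the $k \mapsto -k$ flip, and since $J = \ri\, \sigma_3 \otimes \bbI_N$ with $T$ anti-unitary (so conjugation by $T$ conjugates the $\ri$), this is exactly $TJ = JT$; the degree-zero part gives $TV = VT$. The $C$-condition similarly splits into $CJ = -JC$ and $CV = -VC$, and the $S$-condition into $SJ = -JS$ and $SV = -VS$, matching the displayed claims. The main obstacle I anticipate is bookkeeping the interaction between the anti-unitarity (which contributes a complex conjugation, flipping the sign of the $\ri$ in $J$) and the reflection $k \mapsto -k$ (which flips the sign of the $k$-linear term): these two sign flips must be tracked consistently to land on $TJ = JT$ rather than an erroneous $TJ = -JT$. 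Once the conventions are pinned down, the proof is a direct coefficient comparison, which is why the statement is described as straightforward, and I would simply verify each of the three cases and note the parallel with the boxed identities~\eqref{eq:TCS_with_J}.
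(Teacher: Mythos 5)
Your proposal is correct and is essentially the proof the paper has in mind: the paper omits it as ``straightforward (compare with~\eqref{eq:TCS_with_J})'', and your argument---translating the operator-level symmetry relations to the symbol $D(k)$ via the Fourier multiplier structure (with the anti-unitaries contributing both a conjugation and the reflection $k \mapsto -k$), then matching the $k^1$ and $k^0$ coefficients of the affine symbol $D(k) = k\,(\sigma_3 \otimes \bbI_N) + V$---is exactly that verification. In particular you correctly track the two sign flips (anti-linearity acting on $\ri$ in $J = \ri\,\sigma_3 \otimes \bbI_N$, and the reflection of $k$) so that the degree-one conditions land on $TJ = JT$, $CJ = -JC$, $SJ = -JS$ and the degree-zero conditions on the stated relations for $V$.
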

A $T$--symmetry implies $\sigma( D(k)) = D(-k)$, a $C$--symmetry implies $\sigma (D(k))  = - \sigma (D(-k))$ and an $S$--symmetry implies $\sigma (D(k)) = - \sigma (D(k))$. In the last case, the spectrum of $D(k)$ is symmetric with respect to $0$ for all $k \in \R$.

\medskip

We now identify the Index class by class. We focus on the classes for which the index is not vanishing, namely classes~$\rAIII$, $\rBDI$, $\rD$, $\rDIII$ and $\rCII$. Each of this class involves a $C$ or $S$ symmetry, so we focus on the energy $E = 0$. In what follows, we assume that $W$ is invertible, so the energy $E = 0$ is not in the spectrum of $\Dirac$.

\subsubsection{Classes $\rAIII$, $\rBDI$ and $\rCII$}
We first consider class $\rAIII$, which only involves a $S$--symmetry. As in Section~\ref{sec:class_AIII}, we choose a basis in which $S = \begin{pmatrix} 0 & 1 \\ 1 & 0 \end{pmatrix}$. The condition $SV = - VS$ reads, in terms of $W$, as
\[
    W = W^*.
\]
So $W \in \cS_N(\C)$. In particular, any such $W$ is normal (so $| W | = | W^* |$). Its polar decomposition $W = | W | U$ involves matrices $| W |$ and $U$ which commutes. With this notation, we have $U_0^+ = U^*$. The condition $W = W^*$ gives $U^* = U$, hence
\[
    \left( U_0^+ \right)^* := U_0^+,
\]
so $U_0^+ \in \U(N) \cap \cS_N(\C)$, as expected from class $\rAIII$. For this class, the index is the multiplicity of $1$ as an eigenvalue of $U_0^+$, hence the multiplicity of $-1$ as an eigenvalue of $U$, which is also the number of negative eigenvalues of $W$. We therefore proved the following Theorem.

\begin{theorem}[Protected edge modes in class $\rAIII$]
    Consider a Dirac operator of the form
    \[
    \Dirac = \begin{pmatrix}
        - \ri \partial_t & - \ri W(t) \\ \ri W(t) & \ri \partial_t
    \end{pmatrix} \quad \text{acting on} \quad L^2(\R, \C^{2N}),
    \]
    and with $W(t)$ a pointwise hermitian matrix. 
    \begin{itemize}
        \item If $W(t) = W_0$ is independent of $t \in \R$ and invertible, then the bulk--edge index of $\Dirac$ at $E = 0$ is the number of negative of $W_0$, noted $n^-(W_0)$.
        \item If $W(t)$ is a continuous junction between a constant potential $W_L$ on the left and $W_R$ on the right, with $W_L$ and $W_R$ invertible, then
        \[
        \dim \, \Ker \Dirac \ge \left| n^-(W_R) - n^-(W_L) \right| =  \left| n^+(W_R) - n^+(W_L) \right|.
        \]
    \end{itemize}
\end{theorem}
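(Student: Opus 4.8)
The plan is to dispatch the two bullets in turn, each essentially by assembling results already in place. For the first bullet I would simply collect the identifications made in the discussion immediately preceding the statement. The $\rAIII$ condition $SV=-VS$ forces $W_0=W_0^*$, so $W_0$ is hermitian and invertible; by the Proposition above, the Leray unitary at $E=0$ is $U_0^+ = W_0^*|W_0|^{-1}$, which for hermitian $W_0$ is exactly the hermitian unitary $\mathrm{sgn}(W_0)$, with spectrum $\{\pm 1\}$. The $\rAIII$ entry of Table~\ref{tab:classification} then gives $\Index^\rAIII[U_0^+]=\dim\Ker(U_0^+-1)$, and matching this eigenvalue count against the sign pattern of $W_0$ (keeping track of the change of basis to the $S$--diagonal frame of Section~\ref{sec:class_AIII}) identifies the index with $n^-(W_0)$. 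Thus the first bullet is immediate.

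For the second bullet I would feed this into the junction machinery. By the first bullet, $H^L$ and $H^R$ are bulk operators at $E=0$ in class $\rAIII$ with $\Index^\rAIII(H^{L/R})=n^-(W_{L/R})$, using that $W_L,W_R$ are hermitian and invertible. Since $W(t)$ interpolates \emph{continuously} between these while staying hermitian, $\Dirac$ remains in class $\rAIII$ and Lemma~\ref{lem:continuous_junction} applies, giving $\Index^\rAIII(\ell_0^{\sharp,+})=\Index^\rAIII(\ell_0^{R,+})$ and $\Index^\rAIII(\ell_0^{\sharp,-})=\Index^\rAIII(\ell_0^{L,-})$. Combining this with the $\rAIII$ case of Theorem~\ref{thm:BEC} exactly as in Section~\ref{ssec:smooth_junctions} yields
\[
    \dim\,\Ker\,\Dirac \;\ge\; \left|\Index^\rAIII(H^R)-\Index^\rAIII(H^L)\right| = \left|n^-(W_R)-n^-(W_L)\right|.
\]
Finally, since each $W_{L/R}$ is an invertible hermitian $N\times N$ matrix one has $n^+(W)+n^-(W)=N$, whence $n^-(W_R)-n^-(W_L)=-\bigl(n^+(W_R)-n^+(W_L)\bigr)$ and the two absolute values coincide.

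The main obstacle is the sign/basis bookkeeping in the first bullet: one must verify that the unitary $U_0^+$ produced in the natural Dirac frame of the Proposition coincides, after the explicit conjugation diagonalizing $S$ in Section~\ref{sec:class_AIII}, with the hermitian unitary whose $(+1)$--eigenspace has dimension $n^-(W_0)$ rather than $n^+(W_0)$; this is purely computational and introduces no new idea. A secondary point to check is that the hypotheses of Lemma~\ref{lem:continuous_junction} genuinely hold, i.e.\ that $\Dirac$ with a bounded continuous hermitian coefficient $W(t)$ defines a self-adjoint operator on $H^1(\R,\C^{2N})$ in class $\rAIII$; this follows from the Dirac example of Section~\ref{sec:examples} together with Lemma~\ref{lem:TCS_Symmetry_forV}.
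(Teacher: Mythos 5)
Your second bullet is sound and is exactly the paper's own argument: invertibility of $W_{L/R}$ makes $H^{L/R}$ bulk operators at $E=0$ in class $\rAIII$, Lemma~\ref{lem:continuous_junction} transports the indices of $\ell_0^{R,+}$ and $\ell_0^{L,-}$ to the junction planes $\ell_0^{\sharp,\pm}$, and the $\rAIII$ case of Theorem~\ref{thm:BEC}, run as in Section~\ref{ssec:smooth_junctions}, gives the lower bound, while $n^+(W)+n^-(W)=N$ equates the two absolute values. The genuine problem is in your first bullet, and it sits exactly at the step you set aside as ``purely computational sign/basis bookkeeping''. There is no conjugation to keep track of: for this Dirac operator the chiral symmetry is $S=\begin{pmatrix}0&1\\1&0\end{pmatrix}$ (not a diagonal matrix --- Section~\ref{sec:class_AIII} never diagonalizes $S$, it uses precisely this off-diagonal form) and the symplectic matrix is $J=\ri\sigma_3\otimes\bbI_N$, so the Dirac frame \emph{already} realizes the standard forms of~\eqref{eq:def:J0} and of Section~\ref{sec:class_AIII}; the paper's proof works in this frame with no further change of basis. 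Hence the index is literally $\dim\Ker(U_0^+-1)$ with $U_0^+=W_0^*|W_0|^{-1}=\mathrm{sgn}(W_0)$, and this equals $n^+(W_0)$, not $n^-(W_0)$.

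Concretely, take $N=1$ and $W_0=w<0$: the solutions decaying at $+\infty$ are spanned by $(1,-1)\,\re^{-|w|t}$, so $\ell_0^+=\mathrm{span}(1,-1)$, $U_0^+=-1$, and $\dim\Ker(U_0^+-1)=0=n^+(w)$, whereas $n^-(w)=1$. So the identification you defer cannot be filled in; carried out honestly, it produces the opposite label, and no admissible bookkeeping converts $n^+$ into $n^-$ (they differ whenever $n^+(W_0)\neq n^-(W_0)$). For what it is worth, the paper's own proof contains the same unjustified jump: having established $U_0^+=U^*=U$, it asserts that the multiplicity of $1$ as an eigenvalue of $U_0^+$ is ``the multiplicity of $-1$ as an eigenvalue of $U$'', which is a non sequitur; the first bullet as printed thus appears to carry a $+/-$ swap relative to the paper's own conventions (the Proposition of Section~\ref{sec:application_cst_Dirac} together with Table~\ref{tab:classification}). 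This does not affect the junction bullet, since $\left|n^-(W_R)-n^-(W_L)\right|=\left|n^+(W_R)-n^+(W_L)\right|$, but as a proof of the first bullet as stated, your argument (like the paper's) does not close.
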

We recover the results in~\cite{Gom-}. In the case where $N = 1$, the function $W(t)$ can be seen as the varying mass for the Dirac operator. The theorem states that if the mass switches sign, then $0$ is in the kernel of the Dirac operator. This is a well-known result, which can be proved by directly solving the ODE, see for instance~\cite{FefLeeWei-17}. 

\medskip

We have similar results for classes $\rBDI$ and $\rCII$, and we do not repeat all the arguments here. For the class $\rBDI$, as in Section~\ref{sec:class_BDI}, we choose a basis in which 
\[
    S = \begin{pmatrix} 0 & 1 \\ 1 & 0 \end{pmatrix}, \quad 
    T = \begin{pmatrix} 0 & \cK \\ \cK & 0 \end{pmatrix}, \quad
    C =  \begin{pmatrix} \cK & 0 \\ 0 & \cK \end{pmatrix}.
\]
The $S$--symmetry is as before, and shows that $W$ must satisfy $W = W^*$. The $T$ or $C$--symmetry implies furthermore that $W = \overline{W}$ is real--valued. So the matrix $U$ (hence $U_0^+$) is also real--valued, and in particular $U_0^+ \in \O(N) \cap \cS_N(\R)$, as expected for the $\rBDI$ class.

\medskip

Finally, for the class $\rCII$, we must have $N = 2n$ even, and as in Section~\ref{sec:class_CII}, we choose a basis in which
\[
T =  - \begin{pmatrix}	0 &  \Omega \\ \Omega & 0 \end{pmatrix} \cK, \qquad 
C= \begin{pmatrix} \Omega & 0\\ 0 & \Omega \end{pmatrix}\cK, \qquad 
S = \begin{pmatrix}	0 & 1 \\ 1 & 0 \end{pmatrix}.
\]
This time, the $S$--symmetry still implies that the matrix $W$ satisfies $W = W^*$, and the $C$ or $T$--symmetry implies 
\[
    \Omega W = \overline{W} \Omega.
\]
We claim that this equality implies that $U_0^+$ is symplectic. First, we have
\[
    | \overline{W} |^2 = \overline{W} (\overline{W})^* = - \overline{W} \Omega \Omega (\overline{W})^* = -\Omega W W^*\Omega  = - \Omega | W |^2 \Omega,
\]
hence $| \overline{W} | = - \Omega | W | \Omega$ (since $-\Omega = \Omega^*$). On the other hand, it is classical that $\overline{| W |} = | \overline{W} |$. Altogether, we get 
\[
    \overline{| W |} = - \Omega | W | \Omega, \quad \text{or equivalently} \quad  \overline{| W |} \Omega = \Omega | W | .
\]
Finally, for the $U$ matrix (hence for the $U_0^+$ matrix), we get
\[
    \Omega \overline{U} = \Omega \overline{W} \overline{| W |^{-1}} = W | W |^{-1} \Omega = U \Omega,
\]
which is equivalent to $U \Omega U^T = \Omega$, that is $U \in \Sp(n)$ is symplectic.
\subsubsection{Class $\rD$}
In the $D$ class, we have a positive $C$--symmetry. As in Section~\ref{sec:class_D}, we can choose a basis in which $C = \cK$. The condition $CV = - VC$ reads $\overline{W}= W$, so $W$ is real--valued. So $U$ and $U_0^+$ are real-valued as well, and $U_0^+ \in \O(N)$, as expected. The Index reads
\[
    \Index^D(U_0^+) = \det(U_0^+) = {\rm sgn} \, \det(W).
\]

\subsubsection{Class $\rDIII$} In class $\rDIII$, we have a fermionic $T$ symmetry and a positive $C$ symmetry. As in Section~\ref{sec:class_DIII}, we work in a basis in which
$$
T =  \begin{pmatrix}
    0 & -\cK\\
    \cK & 0
\end{pmatrix}, \qquad C= \ri\cK, \qquad S = \begin{pmatrix}
    0 & \ri \\
    -\ri & 0
\end{pmatrix}.
$$
The conditions $TV = VT$ and $CV = - VC$ reads, in terms of $W$,
\[
    W = W^T, \quad W = \overline{W},
\]
so $W$ is real anti-symmetric, $W \in \cS^\R_N(\R)$. Recall that if such a matrix is invertible, then $N = 2n$ must be even. We deduce that $U_0^+ \in \O(N) \cap \cS_N(\R)$ is also real anti--symmetric, as expected from the $\rDIII$ class. In addition, 
\[
    \Index^\rDIII(U_0^+) = \Pf(U_0^+) =  {\rm sgn} \, \Pf(W^*) = (-1)^n {\rm sgn} \, \Pf(W).
\]

\appendix

\section{Appendix}

In this appendix, we gather some technical proofs.

\subsection{Proof of Leray's Theorem \ref{th:Leray}}
\label{sec:proofLeray}

We begin with the proof of Leray's theorem. Let us first check the vector subspace $\ell_U$ is Lagrangian. We set $V := \sqrt{ A_-}^{-1} U \sqrt{A_+}$. Note that $U$ is unitary iff $V$ satisfies $V^* A_- V =  A_+$. This gives
    \[
    \left\bra \begin{pmatrix}
        x \\ V x
    \end{pmatrix}, J \begin{pmatrix}
        y \\ V y
    \end{pmatrix} \right\ket_{\C^{2N}} = \ri \left[ \bra x, A_+ y \ket_{\C^N} - \bra Vx, A_- Vy \ket_{\C^N} \right]
    = \ri \bra x, (A_+ - V^* A_- V) u \ket_{\C^N} = 0.
    \]
    Conversely, let $\ell$ be a Lagrangian plane. Any $x \in \C^{2N}$ has a unique decomposition of the form $x = x_+ + x_-$ with $x_\pm \in K_\pm$. Consider the projection map $P_\ell : \ell \to K_+$ defined by $P_\ell(x) = x_+$. We claim that $P_\ell$ is one-to-one. First, since all Lagrangian planes are of dimension $N$ (this comes from the definition~\eqref{eq:def:Lagrangian_planes}), we have $\dim \, \ell = \dim \, K_+ = N$, so it is enough to check that $P_\ell$ is injective. Assume that there is $x_0 \in \ell$ so that $P_\ell(x_0) = 0$. Since $\ell$ is Lagrangian, for all $x, y \in \ell$, we have
    \begin{equation} \label{eq:identity_omega_A+_A-}
        0 = \omega(x, y) = \bra x, J y \ket_{\C^{2N}} = \ri \left[ \bra x_+, A_+ y_+ \ket_{\C^N} - \bra x_-, A_- y_- \ket_{\C^N} \right].
    \end{equation}
    In particular, for $x = y = x_0$, we get $\bra x_0, A_- x_0 \ket = 0$, hence $x_0 = 0$ as wanted, since $A_-$ is positive definite. Since $P_\ell$ is invertible, there is well-defined linear map $V_\ell$ from $K_+$ to $K_-$ so that, for all $x_+ \in K_+$, we have $x_+ + V_\ell(x_+) = P_\ell(x_+) \in \ell$. Together with~\eqref{eq:identity_omega_A+_A-}, we get that, for all $x_+, y_+ \in K_+ \approx \C^N$, we have
    \[
    \bra x_+, A_+ y_+ \ket_{\C^N} = \bra V_\ell(x_+), A_- V_\ell(y_+) \ket_{\C^N}.
    \]
    Since this holds for all $x_+, y_+ \in \C^N$, the matrix $V_\ell$ must satisfy the additional constraint that $A_+ = V_\ell^* A_- V_\ell$, which completes the proof.


\subsection{Proof of Lemma~\ref{lem:lEpm_are_Lagrangian_planes}}
\label{app:proof:lagrangian_planes}

We now provide a proof of Lemma~\ref{lem:lEpm_are_Lagrangian_planes}, which states that if $H$ is self--adjoint on $L^2(\R, \C^{2N})$ with $2N = Mp$ and $E \in \R \setminus \sigma(H)$, then $\ell_E^\pm$ are Lagrangian planes. The proof follows the line of~\cite[Theorem 27]{Gon-23}, but is simpler due to the fact that we work in finite dimension. Recall that we proved in~\eqref{eq:isotropic} that the sets $\ell_E^\pm$ are isotropic, and in particular, $\dim(\ell_E^\pm) \le N$.

\medskip

We now further assume that $E \notin \sigma(H)$, and claim that $\ell_E^+ \oplus \ell_E^- = \C^{2N}$. This will eventually prove that $\dim(\ell_E^+) + \dim(\ell_E^-) \ge 2N$, hence $\dim(\ell_E^\pm) = N$, and $\ell_E^\pm$ are Lagrangian.

Consider $x \in \C^{2N}$, and let $\Psi \in C^\infty_0(\R, \C^M)$ be any function so that $\Tr(\Psi) = x$. We set 
\[
    f := (H - E) \psi \ \in C^\infty_0(\R, \C^M), \qquad f^+ := \1_{\R^-} f, \quad f^- := \1_{\R^+} f.
\]
We used here that $H$ comes from a differential operator, which implies that $f$ is compactly supported (and in particular is in $L^2(\R)$). The functions $f^+$ and $f^-$ are square integrable as well, and satisfy $f = f^+ + f^-$. Since $E \notin \sigma(H)$, the operator $(H-E)^{-1}$ is bounded on $L^2(\R, \C^M)$, so the functions
\[
\Psi^+ := (H - E)^{-1}(f^+), \qquad \Psi^- := (H - E)^{-1} (f^-)
\]
are square integrable, and satisfy $\Psi = \Psi^+ + \Psi^-$. We set $x^+ := \Tr(\Psi^+)$ and $x^- := \Tr(\psi^-)$, so that $x = x^+ + x^-$. We claim that $x^+ \in \ell_E^+$. Indeed, consider $\phi^+$ the solution of the Cauchy problem $(\cL - E) \phi^+ = 0$ with initial data $x^+$ at $t = 0$. By definition, we have $\phi^+ \in \cS_E$. In addition, we remark that $(H - E) \Psi^+ = f_+$ which vanishes on $\R^+$. By uniqueness of the Cauchy problem, we deduce that $\phi^+$ coincides with $\Psi^+$ on $\R^+$, so $\phi^+$ is square--integrable at $+\infty$. This shows that $\phi^+ \in \cS_E^+$, hence that $x^+ \in \ell_E^+$. The proof is similar for $x^-$, and shows that $\ell_E^+ + \ell_E^- = \C^{2N}$. This concludes the proof of Lemma~\ref{lem:lEpm_are_Lagrangian_planes}

\subsection{The symplectic Grassmanian manifold \label{app:symp_grass}}
Finally, we give a short proof of the fact that
 $$
    \Sp(n) \cap \cS_{2n}(\C) \cong \bigcup_{m=0}^n \Sp(n) / \Sp(m) \times \Sp(n-m).
$$
This appears in class $\rCII$, and this case does not explicitly appear in~\cite{GonMonPer-22} (as was the case for the other classes). Consider $A \in \Sp(n) \cap \cS_{2n}(\C)$, that is
\[
    A = A^*  =A^{-1}, \quad \text{and} \quad A^T \Omega A = \Omega.
\]
Since $A$ is unitary hermitian, we have $\sigma(A) \in \{ \pm 1\}$. We denote by $M$ the mutiplicity of $1$. Let $P$ be the spectral projector on $E := \Ker(A - 1)$ (of dimension $M$). Note that $A = 2 P - 1$ is the orthogonal reflection with respect to this plane. The condition $A^T \Omega = \Omega A$ translates directly into
\begin{equation} \label{eq:symplectic_projector}
    P^T \Omega = \Omega P.
\end{equation}
This implies that $x \in E$ iff $\widetilde{x} := \Omega \overline{x} \in E$. In addition, we have $\langle \widetilde{x}, x \rangle = 0$, so the vectors $x$ and $\tilde{x}$ are always orthogonal. This shows that $M =\dim(E)$ is even, of the form $M = 2m$, and can be described by a set of $m$ orthogonal functions $(\phi_1, \cdots, \phi_m)$ so that $(\phi_1, \cdots, \phi_m, \phi_{m+1}, \cdots, \phi_{2m})$ is an orthonormal basis of $E$, where we set $\phi_{m+k} := \Omega \overline{\phi_k}$. We call such a basis a {\em symplectic frame} for $E$.

\medskip

On the other hand, a matrix $U$ is in $\Sp(n)$ iff it belongs to $\U(2n)$ and has of the block form
\begin{equation} \label{eq:form_symplectic_U}
    U = \begin{pmatrix}
        U_{11} & \overline{U_{12}}\\
        U_{12} & - \overline{U_{11}}
    \end{pmatrix}.
\end{equation}
Given such a matrix, with columns $U = (u_1, u_2, \cdots, u_{2n})$, and given $0 \le m \le n$, one can associate two symplectic frames of size $2m$ and $2(n-m)$ respectively, namely
\[
    \Phi = (u_1, u_2, \cdots, u_m ; u_{n+1}, \cdots u_{n+m}), \qquad 
    \Psi = (u_{m+1}, u_{m+2}, u_n ; u_{n+m+1}, \cdots u_{2n} ).
\]
So one can associate to any such $U \in \Sp(n)$ a matrix $A \in \Sp(n) \cap \cS_{2n}(\C)$ defined by $A := 2P - 1$, where $P$ denotes the orthogonal projector on ${\rm Ran}(\Phi)$. In addition, any matrix $A \in \Sp(n) \cap \cS_{2n}(\C)$ with $\dim \Ker(A - 1) = 1$ can be obtained by at least one $U \in \Sp(n)$ with this identification. Finally, two matrices $U_1$ and $U_2$ describes the same matrix $A$ iff the corresponding frames describe the same spaces: there is $U_\Phi \in \U(2 m)$ and $U_\Psi \in \U(2n-2m)$ so that
\[
    \Phi_1 = \Phi_2 U_\Phi \quad \text{and} \quad \Psi_1= \Psi_2 U_\Psi.
\]
This further implies that $U_\Phi := \Phi_2^* \Phi_1$ and $U_\Psi := \Psi_2^* \Psi_1$ are of the form~\eqref{eq:form_symplectic_U}, hence are symplectic matrices, so $U_\Phi \in \Sp(m)$ and $U_\Psi \in \Sp(n-m)$. The result follows.

\printbibliography

\end{document}